\documentclass{article}
\usepackage[utf8]{inputenc}
\usepackage{dsfont}
\usepackage{amsmath,amssymb,amsthm,mathtools}
\usepackage{xcolor}
\usepackage[onehalfspacing]{setspace}
\usepackage[nohead]{geometry}
\usepackage{latexsym}
\usepackage{array}
\usepackage[round,comma]{natbib}
\usepackage{pdfsync}
\usepackage{tikz}
\usetikzlibrary{shapes.geometric, arrows.meta, positioning}
\usetikzlibrary{arrows.meta, positioning}

\usetikzlibrary{calc}

\definecolor{uniblau}{cmyk}{1,.9,0,0}
\definecolor{blau}{rgb}{0.2,0.2,0.7}
\definecolor{rot}{RGB}{214,39,40}

\usepackage{hyperref} 
\hypersetup{ 
     backref=true,   
     pagebackref=true,
     citecolor=uniblau,
     hyperindex=true, 
     colorlinks=true, 
     breaklinks=true, 
     urlcolor= uniblau,
     linkcolor= uniblau,
     bookmarks=true, 
     bookmarksopen=true,
     pdftitle={IA — Systemic Risk},
     pdfauthor={Olivier Bos, Stefano Bosi},  
     pdfsubject={}      
}
\usepackage{cleveref}
\usepackage[multiple]{footmisc}

\newtheorem{lemma}{Lemma}

\newtheorem{prop}{Proposition}
\Crefname{prop}{Proposition}
{Propositions}
\newtheorem{corollary}{Corollary}
\newtheorem{claim}{Claim}
\newtheorem{definition}{Definition}

\theoremstyle{definition}

\newtheorem{rem}{Remark}
\newtheorem{ass}{Assumption}

\definecolor{beamer@blendedblue}{rgb}{0.2,0.2,0.7}

\definecolor{beamer@blendedred}{RGB}{214,39,40}

\title{AI Contagion in Social Networks\footnote{We thank Ugo Bolletta, Thai Ha-Huy, Xinyang Wang and participants in the workshop Dialogue between Social Sciences and Humanities (SSH) and Computer Science and Mathematics, Paris-London Economic Theory Workshop at King's College London, Workshop of Central European Program in Economic Theory, Workshop on Microeconomic Theory at Wuhan University, and seminar at Wuhan University of Technology for comments and helpful discussions. Any errors which may remain are ours.}}
\author{{\sc Olivier Bos}\footnote{Universit\'e Paris-Saclay, ENS Paris-Saclay, Centre for Economics at Paris-Saclay. E-mail: olivier.bos@ens-paris-saclay.fr.}\; and {\sc Stefano Bosi}\footnote{Universit\'e Paris-Saclay, Université d'Evry Paris-Saclay, Centre for Economics at Paris-Saclay, EPEE. E-mail: stefano.bosi@univ-evry.fr.}}

\date{July 2026}

\begin{document}

\bibliographystyle{agsm}\addtocontents{toc}{\protect\thispagestyle{plain}}
\maketitle

\begin{abstract}
We study how artificial intelligence (AI) interacts with social communication networks to shape the stability of collective knowledge. Agents exchange information through a network while AI systems generate content and retrain on the aggregate informational environment they influence. This interaction creates a recursive feedback loop in which informational distortions diffuse through society and subsequently feed back into future AI outputs. Despite the high dimensionality of the environment, we show that the long-run dynamics admit a two-dimensional representation whose spectral radius completely characterizes the stability of AI-mediated information systems. We derive a sharp regulatory frontier identifying the minimum filtering required for stability and show how homophily and core-periphery network structures shape systemic informational risk.

\medskip\noindent
\textit{Keywords:} Artificial intelligence; Social networks; Systemic informational risk; Endogenous amplification; Contagion and stability; Regulation.\\
\textit{JEL classification:} D83 ; D85 ; O33

\end{abstract}

\newpage
\section{Introduction}

Artificial intelligence (AI) systems increasingly mediate how information is produced, disseminated, and consumed in society. Large language models, recommender systems, search engines, and automated decision tools shape what individuals read, believe, and share. At the same time, these systems are retrained on data generated by the very social environments they influence: search queries, online discussions, social-media interactions, and user feedback all enter AI training pipelines. As a result, AI is no longer a passive technology that merely transmits information. It has become an active component of the social information environment itself. Recent empirical evidence suggests that AI is already reshaping the production of collective knowledge by reducing participation in online knowledge communities \citep{BurtchLeeChen}.

This transformation raises a fundamental concern: what systemic informational risks does society face as AI systems increasingly shape, transmit, and retrain on social information? When AI systems both influence social beliefs and learn from them, informational distortions may propagate through society, accumulate over time, and feed back into future AI outputs.

Recent work emphasizes that many AI-related harms are inherently systemic and cannot be addressed solely through technical fixes or firm-level incentives \citep[see][]{Acemoglu2021}. Similarly, \cite{JacksonEtAl2025} argue for a behavioral science of AI that studies the co-evolution of algorithms and human behavior rather than treating AI as an exogenous tool. Yet the conditions under which feedback between AI retraining and social communication networks generates stable or unstable informational dynamics remain largely unexplored.\footnote{A growing empirical literature documents how AI affects individual behavior and social outcomes, including political reasoning \citep{Bail}, human decision-making \citep{Goergen}, predictability and alignment \citep{ZhouEtAl2025}, and explainability \citep{Miller2019}. These contributions focus on individual or local effects, while our analysis studies aggregate dynamics and systemic stability.} To our knowledge, this paper is the first to develop a formal stability theory of AI-mediated information systems in which AI both shapes social information flows and endogenously retrains on the aggregate informational environment it influences.

This paper studies how the interaction between AI systems and social communication networks affects the evolution, quality, and stability of collective knowledge. As AI-generated content becomes more prevalent, society becomes tightly coupled to algorithmic information production. Even small informational distortions, arising from approximation error, miscalibration, or biased training data, may diffuse through social networks and shape aggregate beliefs. When these socially amplified distortions are subsequently incorporated into AI retraining, the resulting feedback can endogenously destabilize information ecosystems. Understanding when such feedback leads to absorption of errors and when it generates self-reinforcing informational collapse is essential for both economic analysis and the design of effective AI regulation.

To address this question, we develop a tractable dynamic model of the AI–society information system. Society consists of agents connected through a social communication network, while an AI system injects probabilistic information into this network and retrains on the aggregate informational environment it helps create. Informational distortion is measured as divergence from ground truth, and distortions originating from AI enter society alongside human-to-human communication. Two tightly linked feedback forces drive the dynamics. First, an AI contagion syndrome: AI-generated distortions diffuse through the social network and may persist or amplify depending on network structure and verification. Second, an AI social distortion multiplier: AI systems retrain on social data that increasingly reflects their own past outputs, so that circulating distortions feed back into future AI outputs.

Despite the high dimensionality of the underlying system (with heterogeneous agents integrated into a network), we show that the long-run behavior of the AI--society system admits an asymptotic two-dimensional representation in average social distortion and AI distortion. This dimensional reduction arises from the Perron–Frobenius structure of social networks and from the fact that AI retraining responds to aggregate rather than individual signals. The stability of collective knowledge is therefore determined by a single spectral condition. When the feedback loop between social diffusion and AI retraining is sufficiently strong, small informational errors are endogenously amplified over time, leading to systemic informational risk: a regime in which collective knowledge deteriorates even in the absence of malicious actors or irrational behavior.

This reduction yields several central results with direct policy relevance. First, we identify a sharp stability condition for collective knowledge: informational distortions converge if and only if a single amplification factor is below one. When this condition fails, distortions grow endogenously over time, even if initial AI errors are small. Second, we characterize a regulatory frontier that separates stable informational regimes from self-reinforcing misinformation cascades. This frontier identifies the minimum intensity of upstream filtering of AI-generated content required to prevent informational collapse.
Third, we show that network topology shapes systemic informational risk. Homophily increases the persistence of distortions by slowing the decay of subdominant network modes, while core-periphery structures concentrate amplification on a small set of highly influential nodes. Fourth, we provide a simple microfoundation of verification behavior: agents choose costly verification effort in a network game with strategic complementarities, linking individual incentives to macro-level stability and regulatory needs.

\medskip

Our analysis builds on the literature on social learning and networked information aggregation. Beginning with \cite{DeGroot, Bala, DeMarzo, Golub2010} and \cite{acemoglu2011}, this literature studies how dispersed information is aggregated through repeated social interaction and how network structure shapes the evolution and long-run aggregation of beliefs. More recent contributions show that homophily and segregation create learning bottlenecks and persistent disagreement even under rational updating \citep{GolubJackson,AkbarpourMalladiSaberi,Sadler}. A complementary strand of the literature studies conditions under which social learning fails. Persistent mislearning may arise because of excessive influence, network structure, or misspecified updating rules \citep{Acemoglu2010, Bohren, Dasaratha}. These contributions show that collective learning need not aggregate dispersed information efficiently even in the absence of malicious behavior. Our mechanism is different. Rather than arising from stubbornness or incorrect inference, systemic informational risk emerges because AI-generated distortions diffuse through social networks and subsequently become inputs into future AI retraining.

Recent work has begun to incorporate AI into models of collective learning. \cite{Acemoglu2026} study how agentic AI can crowd out human learning effort and thereby weaken the accumulation of collective knowledge. \cite{Acemoglu2026b} introduce an AI aggregator into a DeGroot network and analyze how feedback between AI outputs and population beliefs reshapes long-run information aggregation. Our focus is different but complementary. We hold individual learning behavior fixed and study the dynamic stability of the coupled AI-society system, characterizing when feedback between social diffusion and AI retraining turns informational distortions into an endogenously amplified systemic risk.

Our framework also connects to the literature on systemic risk in financial networks. \cite{ElliottGolubJackson} and \cite{acemoglu} show how network structure determines whether shocks are absorbed or amplified, and how minimal interventions restore stability. We adapt this logic to informational environments. Instead of defaults and balance sheets, the object that propagates is informational distortion; instead of leverage, the amplification channel is AI retraining on socially generated data. The resulting stability condition and policy frontier are informational analogues of macroprudential thresholds.

\medskip

From a policy perspective, our results speak directly to the logic underlying recent regulatory initiatives such as the European Union’s AI Act, the Digital Services Act, and the U.S. NIST AI Risk Management Framework. These frameworks emphasize risk-based, upstream obligations on AI systems and large digital platforms, motivated by concerns about systemic rather than individual harms. Our analysis provides a formal foundation for this approach. We show that systemic informational risk does not arise solely from isolated errors or malicious manipulation, but from structural feedback between AI systems and social communication networks. When AI-generated content diffuses through society and is subsequently absorbed into future AI training, informational distortions can become self-reinforcing. Effective governance must therefore act upstream, scale with systemic exposure, and target the most central nodes in the information network.

The paper is organized as follow.  Section \ref{model} introduces the model of AI-society interaction, describing the social communication network, the diffusion of informational distortions, AI penetration and retraining, and the regulatory filtering mechanism based on information projection. Section \ref{2dim} establishes a dimensional-reduction result, showing that the long-run behavior of the AI-society system can be characterized by a two-dimensional linear system in average social distortion and AI distortion, with stability determined by a single spectral condition. Section \ref{MainResults} analyzes dynamic stability, contagion, and regulation. It derives a sharp policy frontier identifying the minimum filtering intensity required to prevent informational collapse and studies comparative statics with respect to AI penetration, network amplification, and verification. Section \ref{homophily} examines how network topology shapes systemic informational risk through persistence, directional fragility, and finite-horizon stabilization requirements, focusing on the roles of homophily and core-periphery structures. Section \ref{sec:micro} provides a microfoundation of verification behavior through a network game with strategic complementarities, linking individual incentives to macro-level stability and regulatory needs. The concluding remaks section, Section \ref{conclusion}, discusses implications for AI governance, relating the results to risk-based, upstream regulatory frameworks such as the EU AI Act and the Digital Services Act. Proofs are collected in Appendix, in Section \ref{appendix}, as well as the extension about finite-horizon stability and targeted regulation in Section \ref{finitehorizon}. Finally a last Appendix, Section \ref{app:Iproj}, provides explanation for the reader about how the information-theoretic projection (I-projection) is used to model regulatory filtering of distortions originating from AI.

\section{Model: Network, Contagion, and Multiplier}\label{model}

We propose a model of the AI-society information system that captures the two forces highlighted in the introduction: the diffusion of AI-generated distortions through social communication networks and the feedback from aggregate social information into AI retraining. The objective is to isolate the minimal structure required to study systemic informational stability. We represent informational distortion in distributional terms, allow AI systems to inject content into a social network, and let retraining depend on the aggregate informational environment. These ingredients generate a coupled system of diffusion and retraining dynamics whose stability properties admit a tractable and transparent characterization.

\paragraph{The AI Social–Informational Network.} 

Agents interact through a \emph{social–informational network} that specifies who observes or is influenced by whom.\footnote{For intuition, this structure can be visualized as a directed network $G=(V,E)$: each node represents an agent, and an edge $(i,j)$ indicates that agent $i$ places some weight on the information coming from agent $j$. See \cite{Jackson} for overview surveys of social networks.} 

We represent the \emph{pattern of attention}, how each agent allocates attention across informational sources, through a row-stochastic matrix $S=(s_{ij})$, where
$$
s_{ij}\ge 0, \qquad \sum_{j} s_{ij}=1.
$$

The entry $s_{ij}$ measures the share of attention that agent $i$ allocates to information from agent $j$. The matrix $S$ therefore captures the direction and relative weights of informational links, but not the overall intensity with which information is amplified as it circulates in society. $S$ is supposed to be primitive (irreducible and aperiodic).\footnote{Primitivity rules out information cycles where belief patterns would oscillate with periods, and segregated clusters that never communicate. Technically it guarantees unique dominant mode for belief propagation. This is requested to reduce the $n$ to two-dimensional system in Section \ref{2dim}. This is for example used in \cite{GolubJackson}, and usual in the social network literature.}

\medskip

Social interactions, however, may not only mix information but also amplify it. To capture the overall intensity with which information circulates, through repetition, resharing, ranking algorithms, or platform virality, we introduce a scalar amplification parameter $\alpha>0$. We define the effective social propagation operator as
$$
W := \alpha S.
$$

While $S$ is row-stochastic and therefore has spectral radius equal to one, the matrix $W$ is not row-stochastic in general.\footnote{Since $S$ is primitive and row-stochastic, the associated Markov chain is ergodic.  In particular, it admits a unique stationary distribution $\pi$ satisfying $\pi^\top S=\pi^\top$, and $S^t \to \mathbf{1}\pi^\top \quad \text{as } t\to\infty$. Moreover, by the Perron-Frobenius theorem, $1$ is a simple eigenvalue of $S$  and all other eigenvalues have modulus strictly smaller than one. Hence there exists a constant $C>0$ such that $\|S^t-\mathbf{1}\pi^\top\|\le C\,|\lambda_2(S)|^t$, where $\lambda_2(S)$ denotes the second-largest eigenvalue of $S$ in modulus. Since $W=\alpha S$, it follows that
$W^t=\alpha^t S^t=\alpha^t\mathbf{1}\pi^\top +O\!\left((\alpha|\lambda_2(S)|)^t\right)$. The limit matrix $\mathbf{1}\pi^\top$ is rank one and maps any vector $x$ to $\mathbf{1}(\pi^\top x)$. Thus distortions asymptotically align with the stationary influence weights $\pi$, while their magnitude evolves at rate $\alpha^t$.} Its spectral radius is given by
$$
\rho(W)=\alpha,
$$

\noindent which we interpret as the \emph{systemic amplification index} (or \emph{network amplification factor}). Higher values of $\alpha$ correspond to environments in which informational distortions are more strongly magnified as they propagate through the social network due to denser connectivity, stronger engagement, or algorithmic amplification, and thus to a greater potential for systemic informational contagion.

\medskip

Each agent $i$ receives a fraction $\mu_i\in[0,1]$ of her informational input from AI systems (search engines, large language models, recommender algorithms, decision-support tools). The parameter $\mu_i$ thus describes how intensively agent $i$ relies on AI-generated content when forming beliefs. In a low-$\mu_i$ environment, most of her information still comes from human or traditional sources; in a high-$\mu_i$ environment, AI acts as the primary gatekeeper of knowledge. We define the \emph{AI penetration rate} as the population average
$$
\mu = \frac{1}{n}\sum_{i=1}^n \mu_i.
$$

\noindent which measures how deeply AI is integrated into the informational fabric of society as a whole. A higher $\mu$ corresponds to a world in which AI systems mediate a larger share of news consumption, search queries, recommendations, and decision support. Economically, $\mu$ plays the role of an informational leverage parameter: as $\mu$ increases, errors or biases originating in AI affect more agents simultaneously, and the potential for system-wide amplification of distortions grows. Given $\mu$, we define the AI-adjusted weighted influence matrix

$$
\tilde W = (1-\mu)W = (1-\mu)\alpha S,
$$
\noindent so that $\rho(\tilde W)=\alpha(1-\mu)$ is the effective amplification index once AI penetration is taken into account. The larger is $\mu$, the weaker is the relative role of human-to-human communication in shaping beliefs, and the more the dynamics of distortions are driven by AI origin content.\footnote{At the individual level, agents may differ in their reliance on AI-generated content, $\mu_i$. Aggregating distortions yields an AI-injection term that depends on the population average $\mu=\frac{1}{n}\sum_i\mu_i$. Replacing $(1-\mu_i)$ by $(1-\mu)$ corresponds to a standard mean-field aggregation appropriate for our focus on aggregate stability. Since $\tilde W=(1-\mu)W$, its spectral radius satisfies $\rho(\tilde W)=(1-\mu)\rho(W)=(1-\mu)\alpha$, so $\alpha$ captures structural amplification while $(1-\mu)$ scales the share of information transmitted through human-to-human interaction rather than AI sources.}

\paragraph{The AI Contagion.}

We represent each agent's informational state as a probability distribution $P_{i,t}$ over the states of knowledge and measure distortions using the Kullback–Leibler (KL) divergence. This choice is natural for two reasons. First, AI systems generate probabilistic information: their outputs are samples from model distributions, and evaluating such outputs in terms of distance to truth is most naturally done in distributional space. Second, KL divergence is the standard information-theoretic measure of informational distortion. It captures not only level errors but also systematic biases, overconfidence, and distributional drift, all of which are central to AI-induced misinformation.

Each agent $i$ holds at date $t$ a belief distribution $P_{i,t}$ over states of knowledge, while $\hat P$ denotes the ground truth. We measure informational distortion using the Kullback--Leibler (KL) divergence,
$$e_{i,t} := D_{\mathrm{KL}}(P_{i,t}\,\|\,\hat P),$$
which captures not only pointwise errors but also systematic biases, overconfidence, and distributional drift. This choice is natural in the context of modern AI systems, whose outputs are probabilistic and whose errors are inherently distributional rather than binary.

We do not model the full nonlinear evolution of belief distributions. Instead, we work with a reduced-form law of motion for the \emph{magnitude} of informational distortions. When beliefs remain close to the truth, KL divergence admits a local quadratic approximation, so that the expected evolution of distortion magnitudes induced by linear belief exchange can be represented, to first order, by a linear recursion.\footnote{The linear belief-updating rule adopted here is a DeGroot-type non-Bayesian learning rule. \citet{Molavi} provide behavioral foundations for such updating under imperfect recall, whereby agents treat their neighbors' current beliefs as sufficient statistics for the information histories that generated them. Our linear recursion concerns informational distortion rather than beliefs themselves and is instead obtained as a local reduced-form approximation, as explained in Appendix \ref{app:derivation_contagion}.}

Specifically, we assume that informational distortions evolve according to
\begin{equation}
\label{eq_contagion}
e_{t+1} = (1-\delta)\tilde W e_t + (1-\delta)\mu\,\mathbf{1}\,q_t,
\end{equation}
where $e_t=(e_{1,t},\ldots,e_{n,t})^\top$, $\tilde W=(1-\mu)W$ captures human-to-human transmission of distortions through the social network, $\delta\in(0,1)$ represents verification or forgetting, and $q_t$ measures the distortion in AI-generated content (expressed in the same informational units).

Equation \eqref{eq_contagion} should be interpreted as a first-order approximation to the evolution of distortion magnitudes near the truth, rather than as a literal updating rule for belief distributions. It combines three forces in a parsimonious way: social propagation of informational error, attenuation through verification, and exogenous injection of distortions originating from AI. An appendix, Section ~\ref{app:derivation_contagion}, provides a derivation sketch based on linear mixing of belief distributions and a local quadratic expansion of the KL divergence.

We say that an \emph{AI contagion syndrome} prevails if, holding the AI distortion process fixed, social transmission amplifies informational distortions rather than attenuating them. This notion describes situations in which networked interaction overwhelms verification and correction, so that informational errors spread endogenously through society.\footnote{Proposition~\ref{prop1} formalizes this idea and characterizes the precise conditions under which such contagion arises.}

\medskip

\paragraph{The AI Multiplier.}
The AI multiplier formalizes the feedback loop from society to AI. Suppose that at each period, new AI models are trained on a dataset combining a share $\gamma$ of initial/previous data and $(1-\gamma)$ of social content, possibly contaminated by misinformation.  
Then AI error evolves as
\begin{equation}
\label{eq_multiplier}
q_{t+1} = \gamma q_t + (1-\gamma)\bar e_t,
\end{equation}
where $\bar e_t=\frac{1}{n}\sum_i e_{i,t}$ is the average societal distortion. We interpret the average distortion $\bar e_t$ as the state of collective knowledge in society, summarizing the informational environment after distortions have propagated through the network. Lower $\gamma$ implies stronger self-reinforcement and faster degradation of knowledge accuracy. This is the \emph{AI social distortion multiplier}: AI systems retrain on the average social distortion $\bar e_t$.

It captures the fact that AI systems retrain on data that increasingly contains their own past outputs. At early stages, when both $q_t$ and $\bar e_t$ are small, the AI mostly sees clean data, so quality evolves slowly, improving or drifting only modestly. Once distortions begin to spread through the social network, however, a growing share of the AI’s training corpus becomes contaminated. The more distorted the environment becomes, the more strongly these distortions feed back into AI training. This generates an economically concave pattern of AI quality: improvements slow down as the system gets better, but degradations accelerate quickly once errors enter the feedback loop. In this sense, the multiplier describes how self-generated distortions accumulate and can eventually overturn initial gains in accuracy.\footnote{Repeated substitution of the multiplier equation leads to the explicit form $q_{t+k} = \gamma^k q_t + \sum_{s=0}^{k-1}\gamma^s (1-\gamma)\bar e_{t+k-1-s}$. If $\bar e_t$ remains small, the weighted sum in the second term remains small, so the path of $q_t$ changes slowly. When $\bar e_t$ begins to rise, the geometric rate $\gamma^s$ places more weight on recent data, amplifying the effect of rising distortions. The cumulative contribution of these distorted inputs therefore accelerates even though the recurrence is linear. This generates a concave deterioration of AI quality over time: minimal changes when distortions are rare, and rapid deterioration once distortions accumulate.}

\paragraph{Regulatory filtering via I-projection.}

Before distorted AI output is allowed to diffuse through the social network, the regulator filters a
fraction $\kappa\in[0,1]$ of the AI’s error using an information-theoretic projection. The filter
operates by mapping an AI-generated distribution $Q$ onto the closest truth-consistent distribution
in a feasible set $\mathcal{\hat P}$. Formally, the regulator applies the I-projection \citep{Csiszar}:
$$
P^\star
    = \arg\min_{P\in\mathcal{\hat P}} D_{\mathrm{KL}}(P\|Q).
$$
This operator minimizes the Kullback–Leibler divergence between $Q$ and the set of admissible distributions, ensuring that the correction is the smallest informational change required to restore truth-compatibility (the states of knowledge). Economically, $\kappa$ indexes the strength of filtering: $\kappa=0$ corresponds to laissez-faire, while $\kappa=1$ achieves perfect alignment with the truth distribution $\hat P$.

Using the I-projection rather than an ad hoc penalty is crucial. KL divergence measures distortion, but it does not by itself prescribe how to reduce it. The I-projection supplies an explicit, optimization-based rule that removes errors originating from AI in the most information-preserving way. In effect, the regulator insulates society from a portion of the AI's error before this error enters the network and becomes subject to contagion.

\medskip 

In our analysis, this filtering reduces the effective AI error term in the contagion equation from $q_t$ to $(1-\kappa)q_t$, thereby directly influencing the injection parameter $b=(1-\delta)\mu(1-\kappa)$ in the two-dimensional dynamic (see Section \ref{2dim}). The I-projection thus provides a structural and economically interpretable tool for upstream regulation of AI-generated information.

\medskip

The parameter $\kappa$ captures the intensity of upstream filtering applied to AI-generated information before it diffuses through society. In practice, this corresponds to a wide range of realistic regulatory mechanisms: mandatory accuracy evaluations, fact-checking layers, safety alignment constraints, content moderation obligations, adversarial testing, dataset auditing, and other quality-control processes included in the EU AI Act, the Digital Services Act (DSA), and the U.S. NIST AI Risk Management Framework. 

\paragraph{Regulatory filtering and effective distortion.} At a conceptual level, AI systems produce distorted probability distributions over the states of knowledge, and the regulator seeks to correct these distortions in a minimally invasive way. The I-projection provides exactly such an operator: it maps an AI-generated distribution onto the closest truth-consistent distribution by minimizing the Kullback—Leibler divergence. In this sense, the I-projection implements the smallest information-theoretic correction required to restore truth-compatibility.

In the dynamic analysis, however, the model does not track the full probability distributions themselves, but only the magnitude of distortion they induce. The regulatory parameter $\kappa$ therefore measures the fraction of distortion originating from AI removed upstream before diffusion through the social network. This is why the contagion equation involves the term $(1-\kappa)q_t$: for the linear dynamics of diffusion and feedback, only the total amount of distortion that remains after filtering affects the dynamics, not the fine details of how the distribution is reshaped.

Formally, once the I-projection has been applied, the remaining distortion enters the system as a scaled shock. The subsequent results—dimensional reduction, stability conditions, and policy frontiers—depend on this residual distortion through $(1-\kappa)$, while the information-theoretic properties of the projection justify why $\kappa$ has a clear and non-arbitrary interpretation. Thus, KL divergence and I-projection provide the microfoundation for regulatory filtering, even though the aggregate dynamics can be expressed in terms of a single scalar control.

\section{Aggregation and Dimensional Reduction of the AI-society System}\label{2dim}

The diffusion of misinformation evolves in a $(n+1)$-dimensional space: $e_t \in \mathbb{R}^n$ represents the distortion held by each of $n$ agents, while $q_t \in \mathbb{R}$ captures the AI's intrinsic distortion. Taken together, this system is at first glance analytically intractable. Yet the economic forces at play exhibit a strong regularity. Social learning rapidly reduces heterogeneity in local distortions, while AI retraining aggregates information at a population-wide scale. As a consequence, the long-run interaction between AI and society admits a two-dimensional aggregate representation in the stable regime: the average distortion in society $\bar e_t$ and the AI distortion $q_t$.

\medskip

This dimensional reduction is economically natural. When interactions are frequent and influence diffuses across the network, local misinformation tends to become homogenized across agents. In social-information networks under linear averaging \citep{DeGroot}, repeated exchanges drive beliefs toward a common value across agents: beliefs converge to a consensus vector proportional to $\mathbf{1}$, with influence weights determined by the stationary distribution of the attention matrix. As a result, the state of the network becomes well summarized by its aggregate component. Deviations from this consensus decay geometrically relative to the dominant eigenmode at rate $|\lambda_2(S)|$, so that in the long run only this mode remains. Meanwhile, the AI system does not condition on individual distortions but on the overall quality of the data it collects from society. The feedback channel between AI and society is therefore mediated by the average distortion $\bar e_t$ rather than by the full vector $e_t$, since the AI system learns from the aggregate data generated by society. As a result, only the aggregate component of distortions feeds back into the dynamics.

\medskip

To analyze this interaction, it is convenient to collect the two aggregate variables into the vector
$$
x_t := \begin{bmatrix} \bar e_t \\ q_t \end{bmatrix},
\qquad 
\bar e_t := \tfrac{1}{n}\mathbf{1}^\top e_t.
$$
The first component $\bar e_t$ summarizes the informational distortion circulating in society, while the
second component $q_t$ represents the AI system's internal distortion.

This reduction is not an assumption but follows from the spectral properties of the network. As discussed above, the AI responds only to the average quality of social information, and the network aligns distortions along its dominant eigenmode, with subdominant modes vanishing geometrically relative to the dominant component. This suggests that the long-run behavior of the $(n+1)$-dimensional system admits a two-dimensional representation in terms of the state $x_t$ in the stable regime.

Substituting the contagion equation \eqref{eq_contagion} and the feedback equation \eqref{eq_multiplier} leads
to the linear recursion
\begin{equation}\label{eq_2dim}
x_{t+1} = A(\mu,\kappa)\,x_t,
\qquad
A(\mu,\kappa) :=
\begin{bmatrix}
m & b \\
1-\gamma & \gamma
\end{bmatrix},
\end{equation}
where
$$
m := (1-\delta)(1-\mu)\,\alpha
\quad\text{and}\quad
b := (1-\delta)\mu(1-\kappa).
$$

As $S$ is primitive and row-stochastic, Perron--Frobenius implies that the right eigenvector associated with the dominant eigenvalue is $\mathbf{1}$ and the corresponding left eigenvector is the stationary distribution $\pi$, normalized so that $\pi^\top \mathbf{1}=1$. Since $W=\alpha S$, the two matrices share the same eigenvectors. Under this normalization, the projection of the AI-injection term onto the dominant eigenspace introduces no additional constant. Consequently, the aggregate coefficients in the reduced two-dimensional system are exactly $m$ and $b$ as defined above, and no rescaling of $q_t$ is required. The parameter $m$ captures the strength of internal social diffusion, combining the survival rate of distortions $(1-\delta)$, the reliance on human (non-AI) communication $(1-\mu)$, and the network amplification index $\alpha$. The parameter $b$ measures the effective injection of distortion originating from AI into society, taking into account verification $(1-\delta)$, AI penetration $\mu$, and regulatory filtering $(1-\kappa)$.

\medskip

Although the system \eqref{eq_2dim} has been motivated heuristically, the next lemma formalizes the dimensional-reduction argument. Let $\lambda_2(S)$ denote the eigenvalue of the primitive row-stochastic matrix $S$ with second-largest absolute value. Since $S$ is primitive and row-stochastic, $|\lambda_2(S)|<1$.\footnote{For primitive row-stochastic matrices, $\lambda_2(S)$ determines the rate at which powers of $S$ converge to the rank-one Perron projection. See, e.g., \cite{DeGroot} or \cite{GolubJackson}.} The aggregate dynamics can be written as a two-dimensional recursion driven by $A(\mu,\kappa)$ up to a perturbation term that captures the contribution of subdominant network modes. These modes vanish geometrically relative to the dominant eigenmode, in the sense that their contribution decays at a rate determined by $|\lambda_2(S)|$. Moreover, for any $\rho\in(|\lambda_2(S)|,1)$, the perturbation is bounded by $C(m\rho)^t$ and therefore vanishes in absolute value whenever $m\rho<1$. Thus, in the stable regime, the long-run behavior is determined by the spectral radius of $A(\mu,\kappa)$.

\begin{lemma}[Dimensional reduction and perturbation bounds]\label{lemma1}

Define $m=(1-\delta)(1-\mu)\alpha$ and $b=(1-\delta)\mu(1-\kappa)$. Then:
\begin{enumerate}
\item[(i)] There exists a sequence $(\varepsilon_t)_{t\ge0}$ such that
$$
\bar e_{t+1}=m\bar e_t + b q_t + \varepsilon_{t+1}.
$$
Moreover, for any $\rho\in(|\lambda_2(S)|,1)$ there exists a constant $C<\infty$ such that $|\varepsilon_{t+1}|\le C\,(m\rho)^{t+1}$. In particular, the error term vanishes geometrically relative to the dominant eigenmode in the sense that $|\varepsilon_{t+1}|/m^{t+1}\le C\rho^{t+1}\to 0$. If $m\rho<1$, then $(\varepsilon_t)$ vanishes geometrically in absolute value.
\item[(ii)] Letting $x_t=(\bar e_t,q_t)^\top$, the aggregate dynamics satisfy
$$
x_{t+1}=A(\mu,\kappa)x_t + \eta_{t+1},\;\; \eta_{t+1}=(\varepsilon_{t+1},0)^\top,
$$
with $\|\eta_{t+1}\|\le C\,(m\rho)^{t+1}$ and hence $\|\eta_{t+1}\|/m^{t+1}\le C\,\rho^{t+1}$.

\item[(iii)] If $m\rho<1$, then $\eta_t\to 0$ geometrically and the $(n+1)$-dimensional system is asymptotically equivalent to the two-dimensional system $x_{t+1}=A(\mu,\kappa)x_t$.
\end{enumerate}
\end{lemma}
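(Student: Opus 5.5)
The plan is to compute $\bar e_{t+1}$ exactly from \eqref{eq_contagion}, with the AI term already filtered to $(1-\kappa)q_t$, and to isolate the error as a quantity that depends only on the part of $e_t$ orthogonal to the Perron mode. Averaging with $\tfrac1n\mathbf 1^\top$ and using $\tilde W=(1-\mu)\alpha S$ gives
$$
\bar e_{t+1}=m\,\tfrac1n\mathbf 1^\top S e_t+(1-\delta)\mu(1-\kappa)\,q_t ,
$$
since $\tfrac1n\mathbf 1^\top\mathbf 1=1$. The injection coefficient is therefore exactly $b$. I then define
$$
\varepsilon_{t+1}:=m\,\tfrac1n\mathbf 1^\top (S-I)e_t .
$$
This is not zero in general, because $\mathbf 1^\top S\neq \mathbf 1^\top$ unless $S$ is doubly stochastic. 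The whole task is to show that it is geometrically small relative to $m^{t+1}$.

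The key step is the spectral decomposition $e_t=\mathbf 1\,\pi^\top e_t+z_t$ with $z_t:=(I-\mathbf 1\pi^\top)e_t$. Since $S\mathbf 1=\mathbf 1$, the factor $(S-I)$ annihilates the consensus component, so $\varepsilon_{t+1}=m\,\tfrac1n\mathbf 1^\top(S-I)z_t$. Hence $|\varepsilon_{t+1}|\le m\,c_0\|z_t\|$ with $c_0=\tfrac1n\|\mathbf 1\|\,\|S-I\|$.

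Next I need the law of motion of $z_t$. The projector $P:=I-\mathbf 1\pi^\top$ commutes with $S$, because $S\mathbf 1\pi^\top=\mathbf 1\pi^\top=\mathbf 1\pi^\top S$. It also kills the injection vector, since $P\mathbf 1=0$. Applying $P$ to \eqref{eq_contagion} therefore gives the closed, AI-free recursion
$$
z_{t+1}=m\,S z_t=m\,(S-\mathbf 1\pi^\top)z_t ,
$$
where the second equality uses $\pi^\top z_t=0$. Thus $z_t=m^t(S-\mathbf 1\pi^\top)^t z_0$.

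The main technical point is the bound $\|(S-\mathbf 1\pi^\top)^t\|\le C_\rho\,\rho^t$ for every $\rho\in(|\lambda_2(S)|,1)$. To get it, I would observe that $S-\mathbf 1\pi^\top$ has the same eigenvalues as $S$, except that the simple Perron eigenvalue $1$ is replaced by $0$; this follows from the Perron--Frobenius facts recalled in the footnote of Section~\ref{model}. Its spectral radius is therefore $|\lambda_2(S)|$. Gelfand's formula, or equivalently a Jordan-form bound absorbing the polynomial factors into the slack $\rho>|\lambda_2(S)|$, then gives the estimate. Combining the pieces,
$$
|\varepsilon_{t+1}|\le c_0C_\rho\|z_0\|\,m^{t+1}\rho^t .
$$
Setting $C:=c_0C_\rho\|z_0\|/\rho$ yields $|\varepsilon_{t+1}|\le C(m\rho)^{t+1}$. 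This proves (i), including the relative bound $|\varepsilon_{t+1}|/m^{t+1}\le C\rho^{t+1}\to0$ and absolute geometric decay when $m\rho<1$.

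Parts (ii) and (iii) then follow mechanically. Stacking (i) with \eqref{eq_multiplier}, which involves $\bar e_t$ exactly and so carries no error, gives $x_{t+1}=A(\mu,\kappa)x_t+\eta_{t+1}$ with $\eta_{t+1}=(\varepsilon_{t+1},0)^\top$, so $\|\eta_{t+1}\|=|\varepsilon_{t+1}|$. For (iii), when $m\rho<1$ the forcing $\eta_t$ is summable and geometric. The variation-of-constants formula $x_t=A^tx_0+\sum_{s=1}^{t}A^{t-s}\eta_s$ then shows that the aggregate path differs from the unperturbed two-dimensional path only by a convolution of $A^{t-s}$ with a geometrically vanishing sequence. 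That convolution is what "asymptotically equivalent" should mean here: the perturbation decays geometrically, and the long-run behavior (convergence when $\rho(A)<1$, growth rate $\rho(A)$ otherwise) is governed by $A(\mu,\kappa)$.
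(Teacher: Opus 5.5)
Your proof is correct, and it takes a more direct route than the paper. The paper unrolls the whole trajectory by variation of constants, splits $M^t=m^t\mathbf 1\pi^\top+\widetilde R_t$ with a Jordan-block bound $\|\widetilde R_t\|\le C(m\rho)^t$, and derives the closed form $\bar e_t=m^t\pi^\top e_0+b\sum_{s=0}^{t-1}m^sq_{t-1-s}+\tfrac1n\mathbf 1^\top\widetilde R_te_0$. Only then does it difference consecutive dates and bound the remainder $\varepsilon_{t+1}-m\varepsilon_t$ by the triangle inequality.

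You instead average the one-step law of motion directly and bound the error through the transverse component $z_t$. Its closed, AI-free recursion $z_{t+1}=mSz_t$ shows explicitly why the feedback loop never contaminates the error; in the paper this enters only through $M^s\mathbf 1=m^s\mathbf 1$. Both constructions give the same sequence, $\varepsilon_{t+1}=m^{t+1}\tfrac1n\mathbf 1^\top(S-I)S^te_0$.

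Your form makes the origin of the error visible. It arises only because the averaging weights $\tfrac1n\mathbf 1$ differ from the left Perron vector $\pi$. It therefore vanishes when $S$ is doubly stochastic, and also under the $\pi$-weighted aggregation used in Appendix~\ref{finitehorizon}. Your constant also scales with the initial disagreement $\|z_0\|$ rather than with $\|e_0\|$.

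In exchange, you need the standard fact that deflating by $\mathbf 1\pi^\top$ sends the simple eigenvalue $1$ to $0$ and leaves the rest of the spectrum unchanged. You also do not obtain the paper's convolution formula for $\bar e_t$ in terms of the AI history.

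Your part (iii) matches the paper's level of rigor. The convolution argument gives convergence when $\rho(A)<1$ and growth at most at rate $\rho(A)$ otherwise, not an exact rate along every trajectory. The paper's own argument is equally informal there.
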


The pair $(\bar e_t, q_t)$ characterizes the interaction between social diffusion and AI retraining. Because network dynamics concentrate distortions along the dominant eigenmode and the AI system retrains on aggregate information, the long-run feedback between society and AI is mediated by these two sufficient statistics. Lemma~\ref{lemma1} formalizes this reduction by showing that the aggregate evolution of $(\bar e_t,q_t)$ can be written as a two-dimensional linear recursion driven by $A(\mu,\kappa)$, up to a perturbation term that captures the contribution of subdominant network modes. This perturbation vanishes geometrically \emph{relative to the dominant eigenmode}, in the sense that its magnitude relative to the dominant component decays geometrically at a rate $|\lambda_2(S)|$, and it vanishes in absolute value whenever $m\rho<1$ for some $\rho\in(|\lambda_2(S)|,1)$. As a result, whenever distortions do not explode, systemic informational risk and stability are determined in the long run by the matrix $A(\mu,\kappa)$.

\paragraph{Technical intuition.}
Because $W$ is primitive, its powers admit the Perron--Frobenius decomposition
$$
W^t=\alpha^t \mathbf 1 \pi^\top + R_t \text{ with } \|R_t\| \le C_2 (\alpha\rho)^t
$$
for any $\rho\in(|\lambda_2(S)|,1)$ and some constant $C_2>0$. The first term corresponds to the dominant Perron projection, while $R_t$ collects subdominant network modes that decay geometrically relative to the dominant eigenmode, in the sense that their contribution is bounded by $(\alpha\rho)^t$ for any $\rho>|\lambda_2(S)|$. The effective social propagation matrix is $M=(1-\delta)(1-\mu)W=mS$ with $m=(1-\delta)(1-\mu)\alpha$, so $M^t$ inherits the same eigenvectors and has subdominant modes bounded by $(m\rho)^t$. When we aggregate distortions by $\mathbf{1}^\top/n$, the AI-injection term (which is proportional to $\mathbf{1}$) lies exactly in the dominant right-eigenspace of $M$, so it has no projection on subdominant modes. As a consequence, the only remainder term in the aggregate recursion for $\bar e_t$ comes from the initial condition and satisfies $|\varepsilon_t|\le C(m\rho)^t$ or, equivalently, $|\varepsilon_t|/m^t\le C\rho^t$.

Coupling the resulting scalar recursion for $\bar e_t$ with the AI feedback equation for $q_t$ yields a two-dimensional linear system for $x_t=(\bar e_t,q_t)^\top$ with an additive perturbation $\eta_t=(\varepsilon_t,0)^\top$. This perturbation vanishes geometrically relative to the dominant eigenmode, in the sense that $\|\eta_t\|/m^t\le C\rho^t$, and its absolute magnitude is bounded by $C(m\rho)^t$. In particular, it vanishes in absolute value whenever $m\rho<1$. In such regimes, the long-run behavior is asymptotically equivalent to the unperturbed two-dimensional system $x_{t+1}=A(\mu,\kappa)x_t$, and stability is determined by the spectral radius of $A(\mu,\kappa)$.

The perturbation term $\eta_t$ in Lemma~\ref{lemma1} reflects the contribution of subdominant eigenmodes of the social propagation operator. Since $|\lambda_2(S)|<1$, these modes decay geometrically relative to the dominant eigenmode.\footnote{The spectral reduction is usual in a wide range of works. Linear network-learning and information-diffusion models rely on the Perron--Frobenius structure to characterize asymptotic beliefs and convergence speeds \citep{GolubJackson, DeGroot}, while in systemic-risk models, contagion thresholds and amplification factors are determined by the dominant eigenvalue of the interdependence matrix \citep{acemoglu, ElliottGolubJackson, GolubJackson}. Moreover, related aggregation results for non-viral contagion processes appear in \cite{Sadler}.} In our setting, the AI-injection term is proportional to $\mathbf{1}$ and therefore lies exactly in the dominant right-eigenspace of $M$, so it does not generate a projection on subdominant modes. Consequently, $\eta_t$ captures only the residual contribution of subdominant modes from the initial condition, and Lemma~\ref{lemma1} implies the bound $\|\eta_t\|/m^t\le C\rho^t$. Thus, the reduced system captures the dynamics relevant for policy analysis in the long run: whether distortions are absorbed or amplified, and at what rate.

\begin{rem}
Although the full system is $(n+1)$-dimensional, the pair $(\bar e_t, q_t)$ provides a sufficient statistic for its long-run behavior in the stable regime. In particular, when the perturbation vanishes in absolute value (e.g.\ if $m\rho<1$ for some $\rho\in(|\lambda_2(S)|,1)$), the dynamics are asymptotically equivalent to the two-dimensional system $x_{t+1}=A(\mu,\kappa)x_t$. In this sense, stability is characterized by the spectral radius of $A(\mu,\kappa)$, and the regulator influences systemic risk only through the injection parameter $b$.
\end{rem}

\paragraph{Relation to static contagion models and limits of static aggregation.} Recent work shows that, outside of viral regimes, aggregate diffusion patterns in networks can often be characterized by a small number of spectral statistics rather than the full graph. In particular, \citet{Sadler} studies simple contagion processes in large random networks and shows that expected diffusion outcomes depend on low-dimensional summaries such as degree moments and eigenvalues. These results provide a rigorous justification for why high-dimensional network dynamics may admit tractable aggregate representations.

Our setting differs in a fundamental way. The analysis of \cite{Sadler} applies to \emph{static} contagion processes in which a fixed shock propagates through the network and eventually dies out. The source of infection is exogenous and does not respond to the diffusion it induces. As a result, the main question concerns the size and reach of diffusion from a given seed, not the stability of the system over time.

In contrast, the main mechanism in our model is a \emph{dynamic feedback loop} between society and AI. Distortions propagate through the social network, are aggregated into the average distortion $\bar e_t$, and then feed back into the AI system through retraining. This feedback endogenously modifies the future source of distortion itself. The relevant question is therefore not the expected size of a diffusion episode, but whether distortions are \emph{absorbed or amplified across iterations}.

Lemma~\ref{lemma1} shows that, despite this feedback, the high-dimensional AI-society system admits an asymptotically two-dimensional representation in the long run: subdominant network modes vanish geometrically relative to the dominant eigenmode, and in the stable regime the dynamics are asymptotically equivalent to the corresponding two-dimensional system. Our contribution is to characterize the stability of this dynamic system, derive the amplification threshold $\rho(A)=1$, and identify the policy frontier $\kappa^\star(\mu)$ that separates stable informational regimes from systemic informational risk. These objects have no analogue in static contagion frameworks and arise only once feedback between diffusion and generation is explicitly modeled.

\section{Dynamic Stability, Contagion, and Regulation}\label{MainResults}

We first study the environment in which the AI has an exogenous and constant level of distortion $q_t\equiv q>0$. The AI injects a fixed amount of error each period, coming from a structural bias, training flaw, or persistent misalignment, but does not update based on distortions circulating in society. This shuts down the AI–society–AI feedback loop and isolates the propagation effect of the social network alone. Therefore, the contagion dynamics reduce to a linear recursion with constant input

$$
e_{t+1}=Me_t+\underbrace{(1-\delta)\mu(1-\kappa)q\,\mathbf{1}}_{\text{constant injection}} \text{ with } M:=(1-\delta)\tilde W=(1-\delta)(1-\mu)W.
$$

The matrix $M$ summarizes how past distortions persist: $(1-\mu)$ is the fraction of information transmitted through human links rather than AI, and $(1-\delta)$ is the survival rate of
misinformation after verification or forgetting.

\paragraph{Pure-network contagion benchmark.}
Proposition~\ref{prop1} establishes when, in the absence of AI feedback, the social network absorbs the AI constant error or amplifies it without bound. Even with no feedback from AI, instability may arise purely from the network structure: the matrix $M$ can either attenuate or magnify repeated distortion shocks.

\begin{prop}[No AI multiplier/No AI feedback]
\label{prop1}
Suppose $q_t\equiv q>0$ is constant. Then:
\begin{itemize}
\item Bounded distorsions: $e_t$ is bounded if and only if $\rho(M)<1$, equivalently $\alpha(1-\delta)(1-\mu)<1$.
\item Bounded limit: If $\rho(M)<1$, then $e_t\to e^\star=(I-M)^{-1}(1-\delta)\mu(1-\kappa)q\,\mathbf{1}$, at geometric rate $\rho(M)$.
\item Contagion: If $\rho(M)\ge 1$, then $\|e_t\|\to\infty$.
\end{itemize}
\end{prop}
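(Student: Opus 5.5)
The plan is to work directly with the explicit solution of the affine recursion $e_{t+1}=Me_t+c\,\mathbf{1}$, where $c:=(1-\delta)\mu(1-\kappa)q$ and $M=mS$ with $m=(1-\delta)(1-\mu)\alpha$. By induction,
\[
e_t = M^t e_0 + \sum_{s=0}^{t-1} M^s c\,\mathbf{1}.
\]
Two structural facts simplify everything. First, $\rho(M)=m\,\rho(S)=m$, because $S$ is row-stochastic. Second, $S\mathbf{1}=\mathbf{1}$, so $M^s\mathbf{1}=m^s\mathbf{1}$. The injection term is therefore the scalar geometric sum $c\,\mathbf{1}\sum_{s<t} m^s$, and this sum drives the whole dichotomy. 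I will also use that $e_0\ge 0$ entrywise, since KL divergences are nonnegative, and that $M\ge 0$ entrywise. Hence $M^te_0\ge 0$, and it cannot cancel the injection term.

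\emph{Convergent case $m<1$.} Since $\rho(M)<1$, $I-M$ is invertible with Neumann series $(I-M)^{-1}=\sum_{s\ge0}M^s$. Set $e^\star=(I-M)^{-1}c\,\mathbf{1}$; explicitly $e^\star=\frac{c}{1-m}\mathbf{1}$. Then $e^\star$ is the fixed point, and the deviation satisfies $e_t-e^\star=M^t(e_0-e^\star)$. Because $M^t=m^tS^t$ and $S^t$ is bounded (it is row-stochastic, with $\|S^t\|_\infty=1$), we get $\|e_t-e^\star\|\le m^t\|e_0-e^\star\|$. This gives convergence at geometric rate $\rho(M)$, and boundedness follows.

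\emph{Divergent case $m\ge1$.} Using nonnegativity, $e_t\ge c\,\mathbf{1}\sum_{s=0}^{t-1}m^s\ge c\,t\,\mathbf{1}$ entrywise. Since $c>0$, this forces $\|e_t\|\to\infty$. The positivity $c>0$ requires $\mu>0$ and $\kappa<1$. I will flag this as an implicit assumption: if $c=0$, the claim fails in the case $m=1$, since $e_t=S^te_0$ stays bounded. Under $c>0$, the contagion bullet holds, and combined with the previous case it gives the equivalence "bounded iff $\rho(M)<1$" in the first bullet.

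The main obstacle is the divergent direction. There, one needs nonnegativity of $e_0$ and of $M$, so that the homogeneous part cannot offset the injected mass, together with strict positivity of the injection. I will handle this with the entrywise comparison above rather than any spectral argument. The convergent case is routine Neumann-series reasoning, sharpened by the exact identity $M^s\mathbf{1}=m^s\mathbf{1}$.
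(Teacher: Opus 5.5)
Your argument is correct, and it takes a more structural route than the paper. The paper relies on general spectral tools throughout:
\begin{itemize}
\item for $\rho(M)<1$, the Gelfand-formula decay lemma and the Neumann series give $M^t\to 0$ and $e_t\to(I-M)^{-1}c$;
\item for $\rho(M)>1$, it projects the variation-of-constants formula onto a positive left Perron vector $u$, obtaining $u^\top e_t=\lambda^t u^\top e_0+\sum_{s<t}\lambda^s u^\top c$;
\item for $\rho(M)=1$, it gives an informal argument that mass accumulates along the unit eigenvalue, with linear or polynomial growth depending on the Jordan structure.
\end{itemize}

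You instead exploit $M=mS$ with $S\mathbf{1}=\mathbf{1}$, so the injection never leaves $\mathrm{span}(\mathbf{1})$. This gives you three things.
\begin{itemize}
\item The explicit, uniform steady state $e^\star=\frac{c}{1-m}\mathbf{1}$.
\item The exact contraction $\|e_t-e^\star\|_\infty\le m^t\|e_0-e^\star\|_\infty$. This actually delivers the claimed rate $\rho(M)$. The paper's decay lemma only yields $(\rho(M)+\varepsilon)^t$, so its $O(\rho(M)^t)$ tacitly relies on boundedness of $S^t$, which is exactly what you use.
\item A single entrywise bound $e_t\ge ct\,\mathbf{1}$ that handles $m=1$ and $m>1$ at once, with no appeal to primitivity or Jordan blocks.
\end{itemize}

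You also make explicit two hypotheses the paper uses only implicitly. The condition $c>0$ is stated in the paper only in the $\rho(M)=1$ case. The condition $e_0\ge 0$ is genuinely needed when $m>1$. There the fixed point $\frac{c}{1-m}\mathbf{1}$ still exists and is negative, and a trajectory started at it is constant. So the paper's step ``$u^\top e_t$ grows at least like $\lambda^t$'' also silently requires $u^\top e_0\ge 0$, which nonnegativity of KL distortions supplies.

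What the paper's spectral route buys in exchange is portability. It does not depend on $M$ being a scalar multiple of a stochastic matrix, nor on the injection being aligned with $\mathbf{1}$. It therefore extends more readily to variants such as heterogeneous AI reliance $\mu_i$, where your identity $M^s\mathbf{1}=m^s\mathbf{1}$ fails.
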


If $\rho(M)<1$, distortions remain under control. Hence, repeated interactions reduce the influence of past errors, and society converges to the finite steady-state level $e^\star$ (long-run level of misinformation). The resolvent $(I-M)^{-1}$
plays the role of a contagion multiplier, as in Keynesian or input–output models. If $\rho(M)\ge1$, distortions are amplified faster than verification can remove them, and even a small constant AI bias generates unbounded misinformation.

\medskip

Proposition~\ref{prop1} shows that instability may arise from network structure alone. We now activate the AI multiplier. The AI learns from the informational environment it helps create:
$$
q_{t+1}=\gamma q_t + (1-\gamma)\bar e_t,
\qquad
\bar e_t = \frac{1}{n}\mathbf{1}^\top e_t.
$$

Together with the contagion equation, the aggregate state satisfies the representation
$$
x_{t+1}=A(\mu,\kappa)x_t+\eta_{t+1} \text{ with } x_t=(\bar e_t,q_t)^\top,
$$

\noindent where the perturbation term $\eta_{t+1}$ captures the contribution of subdominant network modes as characterized in Lemma~\ref{lemma1}. These modes decay geometrically relative to the dominant eigenmode, in the sense that their relative contribution is determined by $|\lambda_2(S)|$. More precisely, for any $\rho\in(|\lambda_2(S)|,1)$, Lemma~\ref{lemma1} implies $\|\eta_{t+1}\|\le C(m\rho)^{t+1}$. As a consequence, long-run systemic amplification is determined by the spectral radius $\rho(A(\mu,\kappa))$ of the two-dimensional feedback matrix. Proposition~\ref{prop2} therefore characterizes the instability threshold of the full $(n+1)$-dimensional system.

\begin{prop}[Systemic contagion \& feedback loop]
\label{prop2}
Let $x_{t+1}=A(\mu,\kappa)x_t$ as in \eqref{eq_2dim}. Then:
\begin{itemize}
\item Absorbed distorsions: $x_t\to 0$ for all initial $x_0$ if and only if $\rho\big(A\big)<1$.
\item Systemic contagion: There exists $x_0\neq 0$ with $\|x_t\|\to\infty$ if and only if $\rho\big(A\big)>1$.
\item The exponential growth/decline rate is $\log \rho\big(A\big)$.
\end{itemize}
The boundary $\rho(A)=1$ satisfies $\det(A-I)=0$, i.e. $b=1-m$. Substituting $(m,b)$ yields the policy frontier
\begin{equation}
\label{eq:frontier}
\quad
\kappa^\star(\mu)=1-\frac{\,1-\alpha(1-\delta)(1-\mu)\,}{(1-\delta)\mu}
\end{equation}
which is bounded in $[0,1]$, corresponding range of values to feasible regulatory intensities.
\end{prop}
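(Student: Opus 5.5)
The plan is to treat the unperturbed recursion $x_{t+1}=A(\mu,\kappa)x_t$ as a linear time-invariant system with $x_t=A^t x_0$, and to exploit the fact that $A$ is a $2\times 2$ \emph{nonnegative} matrix. Its entries $m=(1-\delta)(1-\mu)\alpha>0$, $b=(1-\delta)\mu(1-\kappa)\ge 0$, $1-\gamma>0$ and $\gamma\in(0,1)$ are all nonnegative. By Perron--Frobenius, $\rho(A)$ is therefore a real eigenvalue $\lambda_1\ge|\lambda_2|$ with a nonnegative eigenvector. When $b>0$ the matrix is irreducible, hence $\lambda_1$ is simple and $\lambda_1>\lambda_2$. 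The first three bullets then follow from standard facts about matrix powers, and the frontier follows from evaluating the characteristic polynomial at $1$.

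\textbf{Step 1 (absorption and growth rate).} I will use Gelfand's formula $\lim_{t\to\infty}\|A^t\|^{1/t}=\rho(A)$. For the ``if'' direction of the first bullet, I will pick $\rho(A)<r<1$; then $\|A^t\|\le C r^t$ for large $t$, so $x_t\to 0$ for every $x_0$. For the ``only if'' direction, I will take an eigenvector $v$ of $\lambda_1$, which can be chosen real. Then $x_0=v$ gives $x_t=\lambda_1^t v\not\to 0$ whenever $\rho(A)\ge 1$. The same eigenvector argument with $\lambda_1>1$ gives $\|x_t\|\to\infty$, which is the ``if'' direction of the second bullet. For its converse, if $\rho(A)\le 1$ I need every orbit to stay bounded. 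If $\rho(A)<1$ this follows from Step 1. If $\rho(A)=1$, I will argue that $A$ is diagonalizable, so there is no Jordan block at modulus one:
\begin{itemize}
\item if $b>0$, the Perron root is simple and $|\lambda_2|<1$ or $\lambda_2\neq\lambda_1$;
\item if $b=0$, the matrix is triangular with eigenvalues $m$ and $\gamma$, and $\gamma<1$.
\end{itemize}
The third bullet is exactly Gelfand's formula restated as $\tfrac1t\log\|A^t\|\to\log\rho(A)$. Along the Perron direction the rate is attained exactly.

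\textbf{Step 2 (boundary and frontier).} On the boundary $\rho(A)=1$, the Perron root being real means $1$ is an eigenvalue. Computing directly,
$$\det(A-I)=(m-1)(\gamma-1)-b(1-\gamma)=(1-\gamma)(1-m-b).$$
Since $\gamma<1$, this vanishes exactly when $b=1-m$. To show that this is the genuine frontier rather than merely a root, I will use monotonicity of the Perron root in the entry $b$, together with the characteristic polynomial $p(\lambda)=\lambda^2-(m+\gamma)\lambda+m\gamma-b(1-\gamma)$. This polynomial satisfies $p(1)=(1-\gamma)(1-m-b)$. Since $\lambda_1$ is the larger real root, $\rho(A)<1$ holds iff $p(1)>0$ and $\tfrac{m+\gamma}{2}<1$. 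When $m<1$, the trace condition is automatic, so stability is equivalent to $b<1-m$. Substituting $b=(1-\delta)\mu(1-\kappa)$ and $m=\alpha(1-\delta)(1-\mu)$ and solving $b=1-m$ for $\kappa$ gives \eqref{eq:frontier}.

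\textbf{Main obstacle.} I expect the delicate step to be the final claim that $\kappa^\star(\mu)\in[0,1]$, because it does not hold unconditionally. From the formula, $\kappa^\star\le 1$ iff $m\le 1$, that is, iff the pure-network condition of Proposition~\ref{prop1} holds. Also $\kappa^\star\ge 0$ iff $1-m\le(1-\delta)\mu$, meaning laissez-faire is not already stable. I will therefore prove the bound under these two inequalities, which define the economically relevant region. Outside it I will note the natural convention: if $m>1$ no feasible filtering stabilizes, and if $1-m>(1-\delta)\mu$ then $\kappa^\star$ is clipped to $0$.
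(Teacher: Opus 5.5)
Your proposal is correct and follows essentially the paper's route. The paper likewise cites Gelfand's formula and the standard discrete-time stability lemma for the three bullets, and gets the frontier by evaluating the characteristic polynomial at $\lambda=1$, i.e.\ $\det(A-I)=(1-\gamma)(1-m-b)=0$.

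Your extra details refine the paper's argument rather than change it. The diagonalizability argument at $\rho(A)=1$ supplies the converse of the second bullet, which the cited lemma does not cover and which fails for matrices with a Jordan block at $1$. Your test with $p(1)>0$ and $m<1$ shows that $b<1-m$ is the entire stability region, not just a root. You are also right that $\kappa^\star(\mu)\in[0,1]$ holds only when $m\le 1$ and $1-m\le(1-\delta)\mu$. The paper's proof never checks this; its main text instead projects $\kappa^\star$ onto $[0,1]$.
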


The frontier $\kappa^\star(\mu)$ is derived from the reduced two-dimensional system. By Lemma~\ref{lemma1}, the asymptotic equivalence between the full $(n+1)$-dimensional system and the reduced system additionally requires $m\rho<1$ for some $\rho\in(|\lambda_2(S)|,1)$. When $\rho(A)<1$, each round of contagion and feedback is weaker than the preceding one. Thus, distortions are absorbed and eventually vanish. When $\rho(A)>1$, the combined effect of network contagion and AI retraining magnifies distortions, generating a self-reinforcing informational cascade. Small errors propagate, feed back into AI training, and return amplified in the next period. Here the term \emph{systemic contagion} refers to the fact that the distortions originating from AI not only diffuse through the social network but also feed back into the AI’s next training cycle, amplifying errors. This is the fundamental systemic informational risk created by the recursive use of AI. The regulator's task is then to push the system back into the stable zone by filtering a sufficient fraction $\kappa$ of errors originating from AI.

The convergence or divergence of $x_t$ occurs at exponential rate $\rho(A)^t$. If $\rho(A)$ is close to one, distortions decay very slowly; if $\rho(A)>1$, they grow explosively like $\rho(A)^t$. Hence, $\rho(A)$ plays a role analogous to the basic reproduction number in epidemiology, it is the fundamental measure of the strength of the contagion feedback loop.

\medskip

The regulator controls $\kappa$, the share of distortion originating from AI filtered out before it enters the social network. The policy frontier $\kappa^\star(\mu)$ defines the systemic-risk threshold: it is the minimal level of intervention required to prevent systemic informational collapse. When $\kappa<\kappa^\star(\mu)$, the feedback loop is too strong and the distortions amplify. Otherwise, whenever $\kappa\ge\kappa^\star(\mu)$, distortions remain bounded and eventually vanish. Remark that the analytical expression for $\kappa^\star(\mu)$ may fall outside the feasible range $[0,1]$ for extreme parameter values. Since $\kappa$ represents a share of filtering between 0 (no regulation) and 1 (perfect filtering), we project it onto the admissible domain $[0,1]$. Formally,\footnote{In the model $\mu\in[0,1]$ by definition. The projection is written to enforce the feasibility constraint $\kappa\in[0,1]$. The upper truncation may bind when the purely social propagation term is already explosive, i.e. when $m=(1-\delta)(1-\mu)\alpha>1$. In the restricted region $m\le 1$, the expression inside the projection is always weakly below one, so the projection could equivalently be written as a lower truncation at zero. We keep the full projection because we do not impose $m\le 1$ globally.}

$$
\kappa^\star(\mu)=\Pi_{[0,1]}\!\left(1-\frac{\,1-\alpha(1-\delta)(1-\mu)\,}{(1-\delta)\mu}\right)
:=\min\Big\{1,\max\Big\{0,\ 1-\frac{\,1-\alpha(1-\delta)(1-\mu)\,}{(1-\delta)\mu}\Big\}\Big\}.
$$

Figure \ref{fig1} illustrates the stability boundary  $\kappa^\star(\mu)$ separating safe informational regimes from systemic contagion.  The horizontal axis represents AI penetration $\mu$, and the vertical axis represents  the filtering intensity $\kappa$. On the interior region where the purely social system (or no-AI system, i.e., $\mu=0$) is stable, the frontier is upward-sloping: as content originating from AI becomes a larger share of the information environment, society becomes more dependent on AI, and distortions injected by AI have a larger system-wide impact. To offset this greater exposure, the regulator must filter a larger fraction of errors originating from AI. The region below the frontier corresponds to $\rho(A)>1$, where distortions propagate and reinforce each other; the region above corresponds to $\rho(A)<1$, where the combined AI–society system is stable and distortions vanish.

\begin{figure}[htbp!]
\centering
\begin{tikzpicture}[scale=1.05, >=latex]
\draw[->] (-0.1,0) -- (6.2,0) node[right] {$\mu$ (AI penetration)};
\draw[->] (0,-0.1) -- (0,4.2) node[above] {$\kappa$ (filtering intensity)};
\draw[thick, uniblau] plot[smooth] coordinates {
  (0.45,0.35)
  (0.80,1.05)
  (1.30,1.65)
  (2.00,2.25)
  (2.90,2.75)
  (4.00,3.20)
  (5.20,3.55)
};
\node[uniblau] at (3,3.9) {\small Stable: $\rho(A)\leq 1$};
\node[rot] at (4,1.1) {\small Contagion: $\rho(A)>1$};
\node[uniblau] at (2,2.8) {$\kappa^\star(\mu)$};
\end{tikzpicture}
\caption{The policy frontier $\kappa^\star(\mu)$: minimal filtering required to prevent systemic informational contagion as a function of AI penetration.}\label{fig1}

\end{figure}

The following Corollary makes explicit how the policy frontier depends on the AI penetration.

\begin{corollary}[Policy thresholds]\label{cor1}
For any $(\alpha,\mu,\delta,\gamma)$ there exists a threshold $\kappa^\star(\mu)$ such that
$\rho(A)\le 1$ if and only if $\kappa\ge \kappa^\star(\mu)$. Moreover:
\begin{itemize}
\item No systemic contagion: if $\kappa \ge \kappa^\star(\mu)$ then $\rho(A)\le 1$.
\item Systemic contagion: if $\kappa < \kappa^\star(\mu)$ then $\rho(A)>1$.
\item $\rho(A)$ is decreasing in $\kappa$ and $\delta$, and increasing in $\alpha$.
\item  Although $\rho(A)$ is not globally monotonic in $\mu$ or $\gamma$, the frontier $\kappa^\star(\mu)$ is increasing in $\mu$ on the interior region where the no-AI system (i.e.\ $\mu=0$) is stable, that is, whenever $\alpha(1-\delta) < 1$. In this region, higher AI penetration strengthens the AI-injection channel relative to the network attenuation effect and therefore requires stronger upstream filtering to preserve stability.
\end{itemize}
\end{corollary}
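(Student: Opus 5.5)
The plan is to avoid the abstract Perron--Frobenius machinery and work directly with the closed-form spectral radius of the $2\times 2$ matrix $A(\mu,\kappa)$ in \eqref{eq_2dim}. All entries $m,b,1-\gamma,\gamma$ are nonnegative, and the discriminant below is nonnegative, so both eigenvalues are real. The larger one is the spectral radius:
$$
\rho(A)=\tfrac12\Big[(m+\gamma)+\sqrt{(m-\gamma)^2+4b(1-\gamma)}\Big].
$$
The first step is to show that, for $\gamma<1$, $\rho(A)\le 1$ holds if and only if $b\le 1-m$.

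To prove this equivalence, note that $\rho(A)\le 1$ is the same as $\sqrt{(m-\gamma)^2+4b(1-\gamma)}\le 2-m-\gamma$. This requires $m+\gamma\le 2$, and after squaring it becomes
$$
(2-m-\gamma)^2-(m-\gamma)^2-4b(1-\gamma)=4(1-\gamma)(1-m-b)\ge 0 .
$$
Conversely, $b\le 1-m$ together with $b\ge 0$ forces $m\le 1$, so $m+\gamma<2$ holds automatically and the squaring step is reversible. The same computation with strict inequalities gives $\rho(A)>1 \iff b>1-m$. This is consistent with the characterization $\det(A-I)=(1-\gamma)(m+b-1)$ used in Proposition~\ref{prop2}.

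The second step substitutes $b=(1-\delta)\mu(1-\kappa)$ and $m=(1-\delta)(1-\mu)\alpha$. For $\mu>0$, the inequality $b\le 1-m$ rearranges to $\kappa\ge 1-\frac{1-\alpha(1-\delta)(1-\mu)}{(1-\delta)\mu}$, which is exactly \eqref{eq:frontier}. This yields the two bullet points on contagion and no contagion, and the projection onto $[0,1]$ preserves them. A lower truncation at $0$ is harmless, since $\rho(A)\le1$ then holds for every admissible $\kappa$.

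Several degenerate cases must be treated separately, and I expect this bookkeeping to be the main obstacle. If $\mu=0$, then $b=0$ and stability reduces to $m\le1$, independently of $\kappa$, so I set $\kappa^\star=0$ or $1$ accordingly. If $\gamma=1$, then $A$ is triangular and $\rho(A)=\max\{m,1\}$, so the condition $\rho(A)\le 1$ is trivially tied to $m\le1$. If the unprojected value exceeds $1$, which happens only when $m>1$, then $\rho(A)>1$ even at $\kappa=1$. In that case the ``iff'' holds only with the convention that the threshold is infeasible. I will state this caveat explicitly, consistent with the footnote after Corollary~\ref{cor1}.

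The third step establishes the monotonicity claims from the explicit formula. The partial derivative $\partial\rho/\partial b=(1-\gamma)/\sqrt{\cdot}$ is nonnegative. The partial derivative in $m$ is $\tfrac12\big[1+(m-\gamma)/\sqrt{\cdot}\big]$, which is nonnegative because $|m-\gamma|\le\sqrt{\cdot}$. Since $b$ is decreasing in $\kappa$, $m$ and $b$ are both decreasing in $\delta$, and $m$ is increasing in $\alpha$, the chain rule gives the stated monotonicity of $\rho(A)$ in $\kappa$, $\delta$ and $\alpha$. Strictness holds whenever $b>0$ and $\gamma<1$.

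Finally, for the frontier, I write
$$
\kappa^\star(\mu)=1-\alpha-\frac{1-\alpha(1-\delta)}{(1-\delta)\mu}.
$$
Its derivative in $\mu$ is $\frac{1-\alpha(1-\delta)}{(1-\delta)\mu^2}$, which is positive precisely when $\alpha(1-\delta)<1$. Because the projection $\Pi_{[0,1]}$ is monotone, the projected frontier is also weakly increasing in $\mu$. For the non-monotonicity of $\rho(A)$ in $\mu$, a brief example suffices: raising $\mu$ lowers $m$ but raises $b$.
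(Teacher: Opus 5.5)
Your proof is correct, but it takes a different route from the paper's. The paper never computes an eigenvalue. It gets the comparative statics in one line from Perron-root monotonicity (Lemma~\ref{perron}): $A\ge0$, raising $\alpha$ raises $m$, and raising $\kappa$ or $\delta$ lowers $b$ and/or $m$. It then takes the boundary $\det(A-I)=0\Leftrightarrow b=1-m$ from Proposition~\ref{prop2}, solves for $\kappa$, and differentiates in $\mu$ exactly as in your last step.

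You instead use the closed-form Perron root of the $2\times2$ matrix and the identity $(2-m-\gamma)^2-(m-\gamma)^2-4b(1-\gamma)=4(1-\gamma)(1-m-b)$. This buys three things:
\begin{itemize}
\item a direct two-sided equivalence $\rho(A)\le1\Leftrightarrow b\le1-m$, whereas the paper leaves implicit why $\rho(A)$ lies strictly on the correct side of $1$ away from the frontier, which weak monotonicity alone does not deliver;
\item explicit strictness conditions;
\item a clean list of degenerate cases.
\end{itemize}
Your $\gamma=1$ case is more than bookkeeping. There $\det(A-I)=(1-\gamma)(1-m-b)$ vanishes identically and $\rho(A)=\max\{m,1\}$. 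So for $m\le1$ one has $\rho(A)=1$ for every $\kappa$, even when \eqref{eq:frontier} lies in $(0,1)$. For instance, $\delta=\mu=\tfrac12$, $\alpha=3.6$ gives $m=0.9$ and a formula value of $0.6$. The statement really needs $\gamma<1$, and your argument makes this visible. The paper's route, for its part, is shorter, needs no differentiability, and does not rely on the $2\times2$ closed form.

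Three small repairs are needed.
\begin{itemize}
\item For $\mu=0$ with $m>1$, setting $\kappa^\star=1$ breaks the equivalence at $\kappa=1$, where $\rho(A)=m>1$. Use the same ``infeasible threshold'' convention as in your $m>1$ case.
\item Your formula for $\partial\rho/\partial m$ is undefined where the discriminant vanishes ($m=\gamma$ and $b(1-\gamma)=0$). Monotonicity still holds, because $m\mapsto m+\sqrt{(m-\gamma)^2+c}$ is nondecreasing for every $c\ge0$. So argue on the closed form directly rather than through derivatives.
\item ``Raising $\mu$ lowers $m$ but raises $b$'' is a reason, not an example. Take $\delta=\tfrac12$ and $\gamma=\kappa=0$. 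As $\mu$ goes from $0$ to $1$, $\rho(A)$ falls from $2$ to $1/\sqrt2$ when $\alpha=4$, but rises from $\tfrac12$ to $1/\sqrt2$ when $\alpha=1$. A similar pair is needed for $\gamma$; the paper is equally heuristic on both.
\end{itemize}
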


The boundary $\kappa^\star(\mu)$ in the plane $(\mu,\kappa)$ is nonlinear because both the contagion term $m=(1-\delta)(1-\mu)\alpha$ and the injection term $b=(1-\delta)\mu(1-\kappa)$ depend on $\mu$ in opposite directions. Increasing $\mu$ weakens the network–attenuation effect but strengthens AI error injection. 

\medskip

Because effective AI injection $b$ rises with $\mu$, higher AI penetration means a larger share of content originating from AI entering society. This increases reliance on AI-generated information and strengthens the feedback loop. Consequently, $\kappa^\star(\mu)$ increases with $\mu$: the more society depends on AI, the stronger the regulatory safeguard must be to preserve informational stability.

\medskip

The triplet $(\alpha,\mu,\gamma)$ jointly determines systemic-risk regimes. Increasing $\kappa$ or $\delta$ improves stability, while increasing $\alpha$ raises systemic vulnerability. The effects of $\mu$ and $\gamma$ on $\rho(A)$ are not globally monotonic: $\mu$ simultaneously increases AI injection and attenuates social propagation, while $\gamma$ simultaneously weakens feedback $(1-\gamma)$ and increases persistence of AI distortion. Nevertheless, on the economically relevant interior region, the stabilizing requirement $\kappa^\star(\mu)$ is increasing in $\mu$, because higher AI penetration strengthens the AI-injection channel relative to the network attenuation effect.

\begin{itemize}
\item AI Social-Information Network $\alpha$: Higher amplification $\alpha$ increases the social propagation term $m=(1-\delta)(1-\mu)\alpha$ and raises systemic vulnerability, requiring stronger filtering.
\item AI Contagion $\mu$: The effect of $\mu$ on $\rho(A)$ is not globally monotonic, since $\mu$ increases the injection term $b$ but decreases $m$. However, the stabilizing frontier $\kappa^\star(\mu)$ is increasing in $\mu$ on the interior region: higher AI penetration requires stronger filtering to offset greater injection originating from AI.
\item AI Multiplier $\gamma$: The multiplier effect is driven by the feedback weight $(1-\gamma)$. Lower $\gamma$ means stronger retraining on social distortion. At the same time, higher $\gamma$ makes AI distortion more persistent. Hence the effect of $\gamma$ on $\rho(A)$ is not globally monotonic.
\end{itemize}

In short, the regulation lessons from Propositions \ref{prop1} and \ref{prop2} and Corollary \ref{cor1} can be summarized as follow. When AI injects errors into society, the network spreads them, and the resulting distortions become part of the AI’s next training set. This loop repeats, and its stability hinges on a single eigenvalue, $\rho(A)$. The regulator's objective is to ensure $\rho(A)<1$: the frontier $\kappa^\star(\mu)$ identifies the minimum filtering required to prevent informational collapse, i.e., to avoid paths along which $\|x_t\|$ diverges due to the interaction between contagion and feedback.

\section{Network Topology and Systemic Informational Risk}
\label{homophily}

This section studies how network topology shapes systemic informational risk in the coupled AI-society system. While the aggregate dynamics are characterized by the $2\times2$ matrix $A(\mu,\kappa)$, the underlying network enters through distinct spectral margins. We distinguish two conceptually distinct effects. First, homophily affects the persistence of distortions by slowing the decay of subdominant network modes, without changing the long-run amplification parameter. Second, core-periphery structures affect the direction of contagion by concentrating the dominant eigenmode on a small set of highly influential nodes. We focus on asymptotic, structural results that characterize how these topologies shape long run stability and the policy frontier. Finite horizon implications and targeted regulation are analyzed separately in the Appendix Section \ref{finitehorizon}.

\medskip

A substantial empirical and theoretical literature documents that homophily (the tendency of individuals to interact more intensively with similar peers; \citealp{McPhersonSmith-LovinCook}) slows information diffusion and generates persistent segregation of beliefs. In the \cite{DeGroot} framework, the speed at which distortions converge toward the dominant eigenmode is determined by the spectral gap $1-|\lambda_2(S)|$. A smaller spectral gap (larger $|\lambda_2(S)|$) implies that subdominant modes decay more slowly \emph{relative to the dominant eigenmode}, causing distortions to remain locally persistent before eventually diffusing across the population.\footnote{For primitive row-stochastic matrices, convergence speed toward the stationary distribution is determined by the second-largest eigenvalue in modulus, see, e.g., \cite{DeGroot} and \cite{GolubJackson} (or results on mixing times of Markov chains).} In large random networks, \citet{GolubJackson} show that the degree-weighted homophily index $\mathrm{DWH}(S)$ approximates the second eigenvalue:
$$
|\lambda_2(S)| \approx \mathrm{DWH}(S),
$$
so that structural segregation directly maps into slow information aggregation. Economically, this means that homophilous societies or platforms (for instance due to algorithmic filtering, political segmentation, or linguistic communities) tend to retain pockets of misinformation for long periods. These distortions may later spill over across community boundaries, producing delayed but amplified contagion.

This role of homophily is consistent with recent results showing that network segregation creates learning bottlenecks and persistent disagreement even under rational updating \citep{AkbarpourMalladiSaberi, Sadler}. However, those contributions study environments in which information is aggregated only through social learning. 
In contrast, our framework introduces endogenous AI retraining, which feeds persistent local distortions into a centralized information-producing system. As a result, homophily does not merely slow convergence: it becomes a source of systemic informational risk.

\paragraph{Homophily and systemic risk boundary.} 
The effective amplification parameter in the AI-society system is $m=(1-\delta)(1-\mu)\alpha$, which measures how strongly distortions are amplified by the social network before encountering verification or replacement by AI content. Importantly, homophily does not increase the amplification parameter $\alpha$. Instead, homophily increases the persistence of distortions by raising $|\lambda_2(S)|$, while $\alpha$ scales the system-wide impact of these persistent distortions.

For a given level of AI penetration $\mu$, verification $\delta$, and amplification $\alpha$, a more homophilous network has two effects:

\begin{enumerate}
\item It slows the dissipation of distortions relative to the dominant eigenmode by reducing the spectral gap $1-|\lambda_2(S)|$.
\item It increases the likelihood that locally persistent distortions are repeatedly re-amplified
through AI retraining before they dissipate.
\end{enumerate}

Importantly, homophily does not alter the asymptotic stability condition $\rho(A(\mu,\kappa)) < 1$, which depends only on the amplification parameters $(\alpha,\mu,\delta,\gamma,\kappa)$ and not on $|\lambda_2(S)|$
under our normalization. Instead, higher homophily increases the persistence of subdominant network modes, so that distortions dissipate more slowly relative to the dominant eigenmode. As a consequence, cumulative distortions over economically relevant finite horizons are larger in more homophilous networks. This implies more stringent safety requirements at finite horizons, even though the asymptotic stability frontier $\kappa^{\star}(\mu)$ that determines long-run dynamics remains unchanged.

\newpage

\begin{prop}[Homophily and persistence of distortions]\label{prop3}
Fix $(\mu,\delta,\gamma,\alpha,\kappa)$ and let $S$ be primitive and row-stochastic. Let $W=\alpha S$ and consider the coupled AI-society dynamics. Then:
\begin{enumerate}
\item[(i)] The Perron eigenvalue of $S$ equals $1$ (hence the Perron eigenvalue of $W$ equals $\alpha$).
\item[(ii)] For any $\rho\in(|\lambda_2(S)|,1)$, the contribution of non-dominant network modes to aggregate distortions is bounded in absolute value by $C(m\rho)^t$, where $m=(1-\delta)(1-\mu)\alpha$. Equivalently, relative to the dominant eigenmode, the persistence of subdominant modes is characterized by $|\lambda_2(S)|$.
\item[(iii)] Consequently, higher homophily (larger $|\lambda_2(S)|$ or higher $\mathrm{DWH}(S)$ in large networks) implies slower dissipation of local distortions and slower convergence of the social information environment toward its dominant mode.
\end{enumerate}
\end{prop}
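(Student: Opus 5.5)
Part (i) follows from row-stochasticity and Perron--Frobenius. Since $S\mathbf 1=\mathbf 1$, the value $1$ is an eigenvalue of $S$ with a positive right eigenvector. The induced $\ell_\infty$ norm of a row-stochastic matrix equals one, so $\rho(S)\le\|S\|_\infty=1$. Hence $\rho(S)=1$, and primitivity makes it simple and strictly dominant, so $|\lambda_2(S)|<1$. Because $W=\alpha S$ with $\alpha>0$, the spectrum of $W$ is the spectrum of $S$ scaled by $\alpha$. The Perron eigenvalue of $W$ is therefore $\alpha$, with the same eigenvectors $\mathbf 1$ and $\pi$.

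For (ii) I will reuse the decomposition behind Lemma~\ref{lemma1}. Write $M=mS$ and $P:=\mathbf 1\pi^\top$, where $P$ is the Perron projection. Then $S^t=P+R_t$ with $R_t=S^t-P=(S-P)^t$ for $t\ge1$; this uses $SP=PS=P$ and $P^2=P$. The spectrum of $S-P$ equals the subdominant spectrum of $S$ together with zero, so $\rho(S-P)=|\lambda_2(S)|$. Gelfand's formula then gives, for any $\rho\in(|\lambda_2(S)|,1)$, a constant $C_\rho$ with $\|R_t\|\le C_\rho\rho^t$. This is where Jordan blocks for $\lambda_2$ would otherwise cause trouble, and taking $\rho$ strictly above $|\lambda_2(S)|$ is exactly what absorbs them.

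Next I iterate the contagion equation \eqref{eq_contagion} with the filtered injection:
\[
e_t=M^te_0+\sum_{s=0}^{t-1}M^{t-1-s}\,b\,q_s\mathbf 1 .
\]
Since $R_t\mathbf 1=(S^t-P)\mathbf 1=\mathbf 1-\mathbf 1=0$, the injection terms have no subdominant component. The only non-dominant contribution is therefore $m^tR_te_0$. Its aggregate effect is $\frac1n\mathbf 1^\top m^tR_te_0$, whose absolute value is bounded by $m^t\|R_t\|\|e_0\|\le C(m\rho)^t$ with $C=C_\rho\|e_0\|$. Dividing by $m^t$ gives the relative statement: relative to the dominant mode, persistence decays like $\rho^t$ for every $\rho>|\lambda_2(S)|$. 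The exponent is sharp because $\|R_t\|^{1/t}\to|\lambda_2(S)|$. So the relative persistence rate is exactly $|\lambda_2(S)|$, which is the "equivalently" clause in (ii).

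For (iii) I will argue monotonically in $|\lambda_2(S)|$. The asymptotic decay rate of the subdominant component, $\limsup_t\|m^{-t}(M^t-m^tP)\|^{1/t}=|\lambda_2(S)|$, is increasing in $|\lambda_2(S)|$. Equivalently, the time needed for the relative deviation to fall below a fixed tolerance grows like $\log(1/\epsilon)/(1-|\lambda_2(S)|)$ up to constants. For large networks I will invoke the approximation $|\lambda_2(S)|\approx\mathrm{DWH}(S)$ of \citet{GolubJackson} to translate this into homophily. The (iii) claim is really interpretive, so the main obstacle is stating precisely what "slower" means. I would define it through this asymptotic rate rather than through pointwise comparison of trajectories, since two different networks cannot be ordered path by path. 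I would also note that $A(\mu,\kappa)$, and hence the stability frontier, does not involve $\lambda_2(S)$.
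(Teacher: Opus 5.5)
Your proof is correct. Its skeleton matches the paper's: (i) comes from row-stochasticity plus Perron--Frobenius; (ii) comes from the decomposition $M^t=m^t\mathbf 1\pi^\top+m^tR_t$, together with the fact that the AI injection enters along $\mathbf 1$ and so has no subdominant component; and (iii) is a spectral-gap reading.

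Where you differ is in how you obtain the decay bound. The paper's proof of this proposition simply cites Lemma~\ref{lemma1}. There, Claim~\ref{claim2} controls the subdominant remainder of $W^t$ through the Jordan form, absorbing the $t^{d-1}|\lambda|^t$ growth of each subdominant block into $(\alpha\rho)^t$. The paper then quotes the bound for the one-step perturbation of Claim~\ref{claim4}. You instead use $S^t-\mathbf 1\pi^\top=(S-\mathbf 1\pi^\top)^t$ for $t\ge 1$ and apply Gelfand's formula to $S-\mathbf 1\pi^\top$, whose spectral radius is $|\lambda_2(S)|$.

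Your route avoids the Jordan bookkeeping and gives sharpness, $\|R_t\|^{1/t}\to|\lambda_2(S)|$, for free. That sharpness is what gives the ``equivalently'' clause and part (iii) real content; the paper only remarks that a smaller spectral gap means slower convergence. You also bound the cumulative remainder $\frac1n\mathbf 1^\top m^tR_te_0$ (the $\varepsilon_t$ of Claim~\ref{claim3}) rather than the one-step perturbation. This is closer to the wording of (ii), and both quantities obey the same $C(m\rho)^t$ bound.

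One caution: the sharpness concerns $R_t$, that is, the deviation of the vector $e_t$ from consensus, not its uniform average. The scalar $\frac1n\mathbf 1^\top R_te_0$ can decay strictly faster than $|\lambda_2(S)|^t$. If $S$ is doubly stochastic, then $\pi=\mathbf 1/n$ and $\mathbf 1^\top R_t=0$, so this scalar vanishes identically however homophilous the network is. So keep ``slower'' in (iii) defined through $\|m^{-t}(M^t-m^t\mathbf 1\pi^\top)\|$, as you propose. Do not read ``the relative persistence rate is exactly $|\lambda_2(S)|$'' as a statement about $\bar e_t$ itself.
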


Proposition~\ref{prop3} shows that homophily does not merely slow down social learning, it increases the persistence of local distortions in the social–informational network. In highly homophilous societies, deviations from the dominant information mode decay slowly, so distorted signals remain trapped within clusters for extended periods. When AI systems retrain on the aggregate informational environment, this persistence is no longer innocuous: local distortions are repeatedly fed back into AI training and reinjected into society.

However, policy objectives are typically formulated over finite and economically relevant horizons. When AI systems retrain on the aggregate informational environment, slow dissipation of local distortions can translate into higher cumulative exposure over finite horizons, even when the asymptotic system is stable. Appendix~\ref{finitehorizon} formalizes this distinction and shows that, for any fixed horizon $T$, the finite-horizon stabilizing threshold $\kappa_T^\star(\mu;S)$ depends monotonically on $|\lambda_2(S)|$. So more homophilous networks require weakly stronger upstream filtering to maintain safety over policy-relevant horizons. Together, the asymptotic and finite-horizon results clarify how homophily affects both long-run stability and short to medium-run regulatory needs.

\medskip

Building on \citeauthor{GolubJackson}'s (\citeyear{GolubJackson}) analysis of how homophily slows information aggregation, recent work shows that network segregation creates learning bottlenecks and persistent disagreement even under rational updating \citep{AkbarpourMalladiSaberi,Sadler}. Those papers study how network segregation slows information aggregation in purely social learning environments. Our contribution is complementary but fundamentally distinct. We show that once AI systems retrain on the aggregate informational environment, persistence of local distortions is no longer innocuous: it becomes a source of \emph{systemic informational risk}. Homophily does not merely delay convergence, it prolongs the presence of distorted signals that are repeatedly fed back into AI training and reinjected into society. This feedback loop is absent from existing models of social learning and is central to the policy implications of AI-mediated information systems.

\medskip

Throughout the homophily analysis, aggregate distortion continues to be measured by the uniform average $\bar e_t = \frac{1}{n}\mathbf{1}^\top e_t$. Homophily affects the persistence with which distortions converge toward the dominant eigenmode, but does not alter the support of that mode itself. In contrast, core-periphery structures modify the distribution of asymptotic influence weights. In the rest of this section, we therefore measure aggregate distortion using the stationary distribution $\pi$ in order to capture directional informational fragility in centralized networks.

\paragraph{Core-Periphery Networks.} A second network topology particularly relevant for AI-mediated information environments is the core-periphery structure \citep{BorgattiEverett}. Many real-world information systems exhibit this architecture: a small, densely interconnected core of agents (major platforms, ranking algorithms, influential accounts, news aggregators) interacts intensively within itself and broadcasts information to a large periphery with weak internal connectivity.

\medskip

Let $S$ denote the row-stochastic and primitive attention matrix introduced in the model Section~\ref{model}. Partition the population into a core $C$ and a periphery $P$. To describe the core-periphery architecture, we introduce matrices $S_{CC},S_{CP},S_{PC},S_{PP}$ and represent the attention structure in the following stylized form:

$$
S=
\begin{bmatrix}
S_{CC} & \epsilon S_{CP}\\
S_{PC} & S_{PP}
\end{bmatrix},
$$
with $\epsilon\in[0,\bar\epsilon)$, where $\bar\epsilon>0$. To avoid confusion, $S_{CC}$ should not be interpreted as the literal upper-left block of the row-stochastic matrix $S$. Rather, $S_{CC}$ denotes the \emph{renormalized within-core attention matrix}, obtained by restricting attention to core-to-core interactions and renormalizing rows to sum to one. Assumption~\ref{ass:CP}(i) therefore states that the core, viewed as an autonomous subsystem, is internally connected and informationally cohesive.

Given our convention that $s_{ij}$ is the share of attention that agent $i$ allocates to agent $j$, rows index listeners and columns index sources. Hence $S_{PC}$ captures attention that peripheral agents allocate to the core (periphery listening to the core), while $S_{CP}$ captures attention allocated by core agents to the periphery. Under this parametrization, $S_{CC}$ captures dense interactions within the core, $S_{PC}$ captures outward broadcasting from the core to the periphery, and $\epsilon S_{CP}$ represents weak feedback from the periphery to the core. The effective propagation operator is $W=\alpha S$. The matrix $S_{CC}$ is an auxiliary renormalized object used only to describe within-core connectivity. All spectral properties and aggregation results refer to the original row-stochastic matrix $S$. We formalize a stylized but empirically relevant core-periphery structure with the following assumption:

\begin{ass}[Network structure]\label{ass:CP}
We assume:
\begin{enumerate}
\item[(i)] The within-core attention matrix $S_{CC}$ and the within-periphery matrix $S_{PP}$ are primitive, so that each subsystem is internally connected.
\item[(ii)] As $\epsilon \to 0$, the \emph{stationary distribution} (left Perron eigenvector)
$\pi(\epsilon)$ of $S$ concentrates on the core $C$, i.e., $\sum_{i\in C}\pi_i(\epsilon)\to 1$.\footnote{Throughout, we interpret the stationary
distribution $\pi$ (the left Perron eigenvector of $S$) as agents' long-run influence weights in social communication. Core-periphery structure affects outcomes through the concentration of $\pi$, not through the Perron root, which remains equal to one under row-stochastic normalization.}
\item[(iii)] Core-to-periphery attention dominates periphery-to-core attention, so that
$\epsilon\ll 1$.
\end{enumerate}
\end{ass}

Condition (i) ensures that both the core and the periphery are internally connected,
while allowing the core to be more informationally cohesive. Condition (ii) is the key identifying restriction: it states that, as feedback from the periphery to the core weakens, the stationary distribution of the attention matrix places increasing weight on core agents. Economically, this means that long-run aggregate beliefs and distortions are shaped disproportionately by a small set of highly influential core agents, platforms, or algorithms.

Condition (iii) guarantees directional influence: information flows primarily from the core to the periphery rather than symmetrically. When $\epsilon\ll1$, the information flow is predominantly one-directional: the informational core broadcasts information outward, while the periphery has limited influence on the core. This asymmetry ensures that the stationary influence weights of $S$ are concentrated on the core, so that long-run aggregate distortions are driven primarily by dynamics within the informational core.\footnote{If $\epsilon$ were large, periphery to core feedback could substantially alter the stationary influence weights (and therefore the dominant eigenmode), blurring the distinction between core and periphery. In that case, systemic informational risk would no longer be driven by a small set of high-centrality nodes, and the clean characterization of the stability frontier in terms of $\alpha$ would fail. The small $\epsilon$ assumption therefore isolates the empirically relevant regime of directional influence, which captures centralized platforms and recommender systems that broadcast widely but receive relatively little feedback from end users.} Together, these conditions ensure that the dominant eigenmode captures centralized informational control, allowing us to link network topology directly to the distribution and direction of systemic informational risk.

Unlike homophily, which affects the \emph{relative persistence} of distortions through the second eigenvalue $|\lambda_2(S)|$, a core-periphery structure affects the \emph{support} of the dominant eigenmode. When attention is centralized, the stationary distribution $\pi$ of $S$ becomes concentrated on a subset of high-centrality (core) nodes, so that distortions originating there dominate long-run aggregate dynamics. 

Here, we therefore measure aggregate distortion using
$$\bar e_t := \pi^\top e_t,$$
rather than the uniform average used in Sections \ref{model}-\ref{MainResults}. This aggregation reflects the asymptotic influence weights characterized by $\pi$ and is the relevant object for analyzing directional informational fragility in centralized networks. This change in aggregation does not alter the underlying dynamics or stability conditions, since the reduced two-dimensional system remains closed and continues to be determined by the same spectral properties of $S$ and $W$. Assumption~\ref{ass:CP} affects the distribution of long-run influence weights through the stationary distribution $\pi$, but does not alter the Perron eigenvalue of $S$ or the amplification parameter $\alpha$.

\begin{prop}[Core-periphery and directional informational fragility]
\label{prop4}
Suppose the attention matrix $S$ is primitive and row-stochastic and satisfies Assumption~\ref{ass:CP}, and let $W=\alpha S$.
Then $\rho(W)=\alpha$ for all $\epsilon$, and the asymptotic stability frontier is
$$
\kappa^\star(\mu)=1-\frac{1-\alpha(1-\delta)(1-\mu)}{(1-\delta)\mu}.
$$
Moreover:
\begin{enumerate}
\item[(i)] As $\epsilon\to 0$, the stationary distribution $\pi(\epsilon)$ concentrates on the core, i.e.
$\sum_{i\in C}\pi_i(\epsilon)\to 1$. Hence distortions originating in the core receive disproportionately large long-run influence weight in aggregate outcomes.
\item[(ii)] Under a fixed filtering budget, targeted regulation that filters high-$\pi_i$ agents (asymptotically, core agents) yields strictly larger reductions in systemic informational risk than uniform filtering.
\end{enumerate}
\end{prop}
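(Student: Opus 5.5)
The plan is to split the statement into its three components: the spectral radius and frontier, the concentration claim (i), and the targeting claim (ii).

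\textbf{Spectral radius and frontier.} Since $S$ is row-stochastic for every $\epsilon\in[0,\bar\epsilon)$, we have $S\mathbf 1=\mathbf 1$, so $1$ is an eigenvalue. Every induced $\infty$-norm bound gives $\rho(S)\le\|S\|_\infty=1$. Hence $\rho(S)=1$ and $\rho(W)=\alpha\rho(S)=\alpha$, independently of $\epsilon$.

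The frontier then follows by rerunning the aggregation of Lemma~\ref{lemma1} with $\bar e_t=\pi^\top e_t$ in place of $\frac1n\mathbf 1^\top e_t$. This is actually cleaner: $\pi^\top M=m\pi^\top$ exactly, and $\pi^\top\mathbf 1=1$. Applying $\pi^\top$ to \eqref{eq_contagion} therefore gives $\bar e_{t+1}=m\bar e_t+bq_t$ with no remainder.

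The retraining equation needs one step of bridging, because it is written with the uniform average. I would either reinterpret retraining as using the $\pi$-weighted aggregate, or bound the discrepancy. For the latter, write $\frac1n\mathbf 1^\top e_t-\pi^\top e_t$ using the decomposition $M^t=m^t\mathbf 1\pi^\top+O((m\rho)^t)$; the difference then vanishes as in Lemma~\ref{lemma1}. Either way the reduced matrix is $A(\mu,\kappa)$. Proposition~\ref{prop2} and Corollary~\ref{cor1} then deliver $\det(A-I)=0\iff b=1-m$, which is the displayed $\kappa^\star(\mu)$.

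\textbf{Part (i).} Assumption~\ref{ass:CP}(ii) already asserts this concentration, so formally (i) is immediate. For completeness I would justify it by perturbation. At $\epsilon=0$ the core rows place all mass on core columns. Renormalization makes the core block stochastic and primitive, so the core is a closed communicating class. The stationary vector of the reducible chain at $\epsilon=0$ is then $(\pi_{CC},0)$, where $\pi_{CC}$ is the stationary vector of $S_{CC}$. The periphery is transient because $S_{PC}\neq0$: the periphery leaks into the core, so $\rho(S_{PP}^{\text{block}})<1$.

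Since $S(\epsilon)$ is primitive for $\epsilon>0$, its stationary vector is unique. Uniqueness of the limit stationary vector plus continuity then gives $\pi(\epsilon)\to(\pi_{CC},0)$. The cleanest route is the Markov chain tree formula, or the identity $\pi_P=\pi_C\,\epsilon S_{CP}$-type flow balance with $(I-S_{PP})^{-1}$ bounded, which gives $\sum_{i\in P}\pi_i=O(\epsilon)$. The interpretation ``core distortions get large long-run weight'' follows from $\mathbf 1\pi^\top e$: a distortion $\Delta e_i$ changes $\bar e$ asymptotically by $\pi_i\Delta e_i$.

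\textbf{Part (ii).} This is the main obstacle, because the model so far has a scalar $\kappa$. I would introduce agent-specific filtering $\kappa_i$ with budget $\sum_i\kappa_i\mu_i=B$, or simply $\frac1n\sum_i\kappa_i=\bar\kappa$. Then the injection term becomes $(1-\delta)\,\mathrm{diag}(\mu_i(1-\kappa_i))\mathbf 1\,q_t$, and projecting with $\pi^\top$ gives effective $b(\kappa)=(1-\delta)\sum_i\pi_i\mu_i(1-\kappa_i)$.

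Because $\rho(A)$ is strictly increasing in $b$ whenever $1-\gamma>0$, systemic risk is minimized by maximizing $\sum_i\pi_i\mu_i\kappa_i$ subject to the budget. This is a linear program whose optimum puts filtering on the largest $\pi_i\mu_i$ (largest $\pi_i$ under homogeneous $\mu_i$). Uniform filtering yields $\bar\kappa\sum\pi_i\mu_i$, which is strictly smaller whenever $\pi$ is non-uniform; part (i) guarantees non-uniformity for small $\epsilon$.

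The monotonicity of $\rho(A)$ in $b$ still has to be checked. The trace is fixed, and the discriminant $(m-\gamma)^2+4b(1-\gamma)$ is increasing in $b$, so the larger root increases. Strictness requires $\gamma<1$ and that the budget does not already saturate the core, so I would state those caveats explicitly.
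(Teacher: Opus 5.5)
Your proposal is correct and shares the paper's skeleton. Both arguments get $\rho(S)=1$ from row-stochasticity and the frontier from $b=1-m$ in the reduced $2\times2$ system, and both read part (i) off Assumption~\ref{ass:CP}(ii). For part (ii), both use node-specific filtering with the $\pi$-weighted aggregate $\pi^\top e_t$, which closes the system exactly. The dynamics then depend on the filtering profile only through a scalar ($\pi^\top\kappa$ in Proposition~\ref{prop4b}), so the problem reduces to maximizing $\pi^\top\kappa$ over the budget set.

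There are two genuine differences.

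\textbf{Risk measure in (ii).} You measure risk by $\rho(A)$ and get strict monotonicity in $b$ from the fixed trace and the discriminant $(m-\gamma)^2+4b(1-\gamma)$. The paper instead uses the finite-horizon loss $\mathcal L_T=\sum_{t\le T}\beta^t\bar e_t$ and compares nonnegative linear systems; this needs $\bar e_0,q_0\ge0$, and strictness holds only if some $q_t>0$. Your version addresses the asymptotic notion of risk in the statement directly. It also gives a clean corollary: targeted filtering is stable iff $\pi^\top\kappa\ge\kappa^\star(\mu)$, so it reaches the frontier with a smaller budget than uniform filtering. The paper's version holds over finite horizons even outside the stable region.

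\textbf{Part (i).} Your flow-balance identity $\pi_P^\top=\epsilon\,\pi_C^\top S_{CP}(I-S_{PP})^{-1}$ gives $\sum_{i\in P}\pi_i=O(\epsilon)$. Here $S_{PP}$ must be the literal periphery block of $S$, which is irreducible and strictly substochastic, so $(I-S_{PP})^{-1}$ exists. This shows that Assumption~\ref{ass:CP}(ii) actually follows from the block structure, whereas the paper simply invokes it.

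\textbf{Cautions for (ii).}
\begin{itemize}
\item Only your first option for the retraining equation works there. With non-uniform injection $d$, the gap $\frac1n\mathbf 1^\top e_t-\pi^\top e_t$ contains $\sum_{s}\frac1n\mathbf 1^\top\widetilde R_s d\,q_{t-1-s}$. Already the $s=0$ term, $(\frac1n\mathbf 1^\top d-\pi^\top d)q_{t-1}$, is of order $q$ and does not vanish. So you must define retraining on $\pi^\top e_t$, as Proposition~\ref{prop4b} does.
\item If you let $\mu_i$ vary, keep the mean-field factor $(1-\mu)$ inside $M$. Otherwise $\pi^\top$ is no longer a left eigenvector of $M$ and exact closure fails.
\item Strict dominance over uniform filtering needs only $0<\bar\kappa<1$ and non-uniform $\pi$, by concavity of the top-weight sum. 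It does not require that the core be unsaturated.
\end{itemize}
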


Because the original attention matrix $S$ remains row-stochastic for all admissible values of $\epsilon$, its Perron eigenvalue is identically equal to one. Hence the spectral amplification factor $\rho(W)=\alpha$, and the asymptotic stability frontier $\kappa^\star(\mu)$ is independent of the core-periphery parameter $\epsilon$.

Core-periphery networks generate \emph{directional informational fragility}. Even if most agents are peripheral and individually cautious, distortions produced or amplified within the informational core dominate aggregate learning and are repeatedly fed back into AI retraining. Errors originating in the periphery, by contrast, tend to dissipate locally and have little long-run impact. From a regulatory perspective, this implies that effective AI governance cannot be uniform across agents or platforms. When informational influence is centralized, systemic risk is driven by the structure and behavior of high-centrality platforms and nodes rather than by average users. Targeted regulation of high-centrality platforms, ranking algorithms, and influential nodes is therefore essential to prevent systemic informational contagion.

From a policy perspective, core-periphery architectures imply that effective AI regulation cannot be uniform across agents or platforms. When informational influence is centralized, 
systemic risk is driven by the amplification properties of the informational core rather than by average user behavior. Regulatory efforts should therefore focus on high-centrality nodes (such as major platforms, ranking algorithms, or influential accounts) whose distortions disproportionately shape the aggregate signal on which AI systems retrain. Monitoring peripheral users alone is insufficient to prevent systemic informational contagion.

Appendix \ref{finitehorizon} shows that these asymptotic insights extend to finite-horizon policy objectives: when filtering can be targeted, concentrating regulatory effort on the most influential nodes strictly dominates uniform regulation under realistic budget constraints.

\begin{figure}[h!]
\centering
\begin{tikzpicture}[
    core/.style={circle, draw, thick, minimum size=18pt},
    peri/.style={circle, draw, thick, minimum size=16pt},
    edgeC/.style={->, thick, rot},
    edgeP/.style={->, thick, blau},
    edgeCP/.style={->, thick, black!70},
    scale=1
]
\node[core] (c1) at (-0.8,0.6) {C1};
\node[core] (c2) at (0.8,0.6) {C2};
\node[peri] (p1) at (-2.4,-1.2) {P1};
\node[peri] (p2) at (0,-1.8) {P2};
\node[peri] (p3) at (2.4,-1.2) {P3};
\draw[edgeC] (c1) to[bend left=15] node[midway,above] {$S_{CC}$} (c2);
\draw[edgeC] (c2) to[bend left=15] (c1);
\draw[edgeCP] (c1) -- node[midway,right] {$S_{PC}$} (p1);
\draw[edgeCP] (c1) -- node[pos=0.7,right] {$S_{PC}$} (p2);
\draw[edgeCP] (c2) -- node[midway,left] {$S_{CP}$} (p3);
\draw[edgeP] (p1) to[bend left=10] node[midway,left] {$\epsilon S_{CP}$} (c1);
\draw[edgeP] (p3) to[bend right=10] node[midway,right] {$\epsilon S_{CP}$} (c2);
\draw[edgeP] (p2) to[bend left=10] node[pos=0.3,left] {$\epsilon S_{CP}$} (c1);
\end{tikzpicture}
\caption{Core-periphery attention structure (illustration of $S$)}

\medskip

{\footnotesize Notes: Nodes labeled $C_1, C_2$ represent agents in the core, while nodes labeled $P_1, P_2, P_3$ represent agents in the periphery. Directed edges correspond to attention weights from the row-stochastic matrix $S$: arrows point from information sources to listeners. The core exhibits dense internal connectivity ($S_{CC}$), while links from the core to the periphery ($S_{PC}$) are strong and feedback from the periphery to the core ($\epsilon S_{CP}$) is weak.}
\label{fig:core_periphery}
\end{figure}

Figure~\ref{fig:core_periphery} provides a schematic interpretation of the core-periphery network structure formalized in Assumption~\ref{ass:CP} and analyzed in Proposition~\ref{prop4}. Directed links represent attention weights from the row-stochastic matrix $S$. The core exhibits dense mutual attention ($S_{CC}$) and strong outward influence ($S_{PC}$), while feedback from the periphery to the core ($\epsilon S_{CP}$) is weak. As $\varepsilon\to 0$, the stationary distribution (left Perron eigenvector) of $S$ concentrates on the core. Consequently, long-run aggregate distortions are driven primarily by informational dynamics within the informational core, even if the periphery contains the majority of agents. Distortions originating in peripheral nodes tend to dissipate locally, whereas distortions generated or amplified within the informational core dominate aggregate learning and are repeatedly fed back into AI retraining. The figure highlights the resulting directional informational fragility. Systemic informational risk is driven by the structure and behavior of high-centrality nodes rather than by average behavior across the network. The scalar $\alpha$ captures global amplification while the concentration of influence is determined by network topology.

\paragraph{Policy Implications: Network-Aware AI Regulation}

The homophily and core-periphery results have direct implications for the design of AI regulation. Systemic informational risk is not determined solely by the volume of AI-generated content or the intrinsic accuracy of AI systems, but also by the topology of the social-information network into which AI outputs are injected.

\medskip

In homophilous or polarized societies, slow information aggregation increases the persistence of local distortions relative to the dominant mode. Even when AI systems are only mildly biased, these distortions remain clustered long enough to be repeatedly reamplified through AI retraining. As a result, over policy-relevant finite horizons the stabilizing filter requirement $\kappa_T^\star(\mu;S)$ is higher in homophilous environments, even though the asymptotic frontier $\kappa^\star(\mu)$ is topology-invariant under our normalization. Regulatory standards for AI-generated content should therefore be stricter in segmented or polarized informational environment than in more mixing societies.

In centralized information architectures, systemic informational risk is driven disproportionately by the informational core. Errors originating in highly connected platforms, ranking algorithms, or influential accounts dominate the stationary distribution (left Perron eigenvector) and thus the long-run dynamics of misinformation. Regulation that targets only average behavior is insufficient. Effective policy must focus on high-centrality nodes and central information-processing systems, providing a formal justification for platform-specific obligations, asymmetric compliance requirements, and enhanced scrutiny of algorithmic ranking systems. Importantly, AI retraining transforms slow social learning into a feedback mechanism that amplifies network fragilities. While homophily creates learning bottlenecks in purely social settings, AI retrains on the aggregate signal generated by these bottlenecks, turning persistence into systemic instability. Regulation must therefore act \emph{upstream}, before distortions propagate through the network and re-enter AI training.

\medskip

To sum up, optimal AI regulation must be network-aware. While the asymptotic stability frontier $\kappa^\star(\mu)$ depends on average amplification and exposure parameters, network topology crucially shapes \emph{systemic informational risk in practice} by determining (i) the persistence of distortions over policy-relevant horizons (homophily, via $|\lambda_2(S)|$) and (ii) which nodes drive long-run influence and therefore where regulation should be targeted (core-periphery, via the concentration of $\pi$). Regulatory standards that ignore topology may be ineffective in fragile architectures or excessively costly in well-mixing ones.

\section{A Simple Microfoundation of Verification and Stability}\label{sec:micro}

This section provides a microfoundation for the verification efforts that mitigate misinformation originating from AI. We discuss a microfounded network game of verification efforts with local complementarities, then derive its Nash equilibrium and link it to the macro diffusion system that determines the dynamics of misinformation. Our ordering (equilibrium first, then macro stability) keeps identification and policy interpretation transparent. Agents exercise costly verification effort $a_i$ to mitigate exposure to AI-generated distortions, but the benefits of verification depend on neighbors' own verification choices. This creates strategic complements and local spillovers (verification is more productive when neighbors verify too), exactly the setting in which Katz–Bonacich centrality, equilibrium existence, and monotone comparative statics apply.

\paragraph{A linear-quadratic network game.}

There are $n$ informational agents $i=1,\ldots,n$ connected by a weighted directed attention network $S=(s_{ij})$, with $s_{ij}\geq 0$ and $\sum_j s_{ij}=1$. As in the model section (Section \ref{model}), we separate the pattern of attention $S$ from the overall interaction intensity. We let $\alpha>0$ denote an amplification/intensity parameter and define the effective interaction matrix in the game as $W:=\alpha S$, so that $\rho(W)=\alpha$. Each agent chooses a nonnegative verification effort $a_i\ge 0$ (fact-checking, abstention from low-quality content, local moderation). Let $a=(a_1,\ldots,a_n)^\top$ and $\bar a = \frac{1}{n}\sum_i a_i$. The informational environment is summarized by two primitives in this static game: the AI penetration rate $\mu\in[0,1]$ and the AI distortion level $q\ge 0$. We treat $q$ as a parameter within the one-shot game; its dynamic law $q_{t+1}=\gamma q_t+(1-\gamma)\bar e_t$ is determined by the macro system. Altogether, the linear-quadratic network game with strategic complements and local spillovers defines a specific \emph{verification game}.

\medskip

Verification yields benefits both directly and through neighbors’ efforts, but is costly and is motivated by exposure to errors originating from AI. For parameters $\theta>0$ (baseline value of accuracy), $\phi>0$ (strategic complementarity), $c>0$ (effort cost), and $\lambda>0$ (exposure loss), the payoff of agent $i$ is
\begin{equation}
u_i(a;W,\mu,q)=\underbrace{\theta\,a_i + \phi a_i \sum_j w_{ij} a_j}_{\text{verification benefits with local complements}}-\underbrace{\frac{c}{2}a_i^2}_{\text{convex cost}} - \;\; \lambda\,\mu q\, a_i
\label{eq:payoff_micro}
\end{equation}

The last term $-\lambda \mu q\,a_i$ is a reduced-form way to allow the verification environment to depend on AI penetration and AI distortion. It should be interpreted as an environment-dependent component of the (opportunity) cost of verification, or equivalently as a reduction in the marginal returns to effort when the AI environment is more distorted. For example, when AI-generated content is noisier (high $q$) or more prevalent (high $\mu$), verification becomes more time-consuming, harder to implement, or less reliably rewarded. This formulation yields the empirically plausible comparative statics that equilibrium verification falls with $\mu q$, which then feeds into the macro system through a lower effective decay rate $\delta_{LQ}$ (defined below as the verification-adjusted decay rate of distortions).

This is the standard linear-quadratic network game form used in \citet{BallesterCalvoArmengolZenou, GaleottiGoyalJacksonVega-RedondoYariv, JacksonZenou2014}. It is also analogous to the communication-precision choices in \citet{GaleottiGhiglinoSquintani}, where agents choose how precisely or truthfully to transmit information in a network.

\medskip

The first-order condition for $u_i$ yields a linear best reply:
$$
\frac{\partial u_i}{\partial a_i}= \theta + \phi\sum_j w_{ij}a_j - c a_i - \lambda\mu q=0
\quad\Longrightarrow\quad
a_i 
= 
\frac{\theta - \lambda \mu q}{c}
+
\frac{\phi}{c} \sum_j w_{ij} a_j.
$$
Define
$$
\beta = \frac{\theta - \lambda\mu q}{c} \text{ and } \psi=\frac{\phi}{c}.
$$
In vector form,
\begin{equation}
a = \beta \mathbf{1} + \psi W a. 
\label{eq:BR_micro}
\end{equation}
If $\psi \rho(W) < 1$, the linear system \eqref{eq:BR_micro} has a unique solution
\begin{equation}
a^\star = (I - \psi W)^{-1} \beta \mathbf{1} = \beta\sum_{k=0}^\infty (\psi W)^k \mathbf{1}.
\label{eq:Bonacich_micro}
\end{equation}
Thus equilibrium verification is proportional to Katz–Bonacich centrality in $W$:
more central agents (who are more influential or more exposed) verify more on average
\citep{BallesterCalvoArmengolZenou,JacksonZenou2014}. Proposition \ref{prop:NE_micro} sum up these findings:

\begin{prop}[Existence, uniqueness, and monotonicity]\label{prop:NE_micro}
If $\psi\rho(W)<1$, the verification game admits a unique Nash equilibrium $a^\star=(I-\psi W)^{-1}\beta\mathbf{1}$. Moreover:
\begin{enumerate}
\item $a^\star_i$ is increasing in the Katz–Bonacich centrality of $i$ in $W$;
\item $a^\star$ is increasing in $\psi$ and entrywise in $W$, and decreasing in $\mu q$;
\item the average equilibrium effort $\bar a^\star$ is increasing in $\theta,\phi$ and decreasing in 
$c,\lambda,\mu,q$.
\end{enumerate}
\end{prop}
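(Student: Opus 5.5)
The plan is to treat the verification game as a standard linear-quadratic game with strategic complements. I would first establish existence and uniqueness from the best-reply system \eqref{eq:BR_micro}, and then read all comparative statics off the Neumann-series representation \eqref{eq:Bonacich_micro}.

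\textbf{Step 1: best replies.} Each $u_i$ is strictly concave in $a_i$, since $\partial^2 u_i/\partial a_i^2=-c<0$. With the constraint $a_i\ge 0$, the best reply is therefore
$$a_i=\max\Big\{0,\ \beta+\psi\textstyle\sum_j w_{ij}a_j\Big\}.$$
I will work under the maintained interior condition $\beta>0$, i.e.\ $\theta>\lambda\mu q$, which the formula $a^\star=(I-\psi W)^{-1}\beta\mathbf 1$ implicitly requires. Under this condition, any candidate profile $a\ge 0$ gives $\beta+\psi(Wa)_i>0$, because $W\ge 0$. Every Nash equilibrium is then interior and solves the linear system $(I-\psi W)a=\beta\mathbf 1$.

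\textbf{Step 2: existence and uniqueness.} Since $\psi\rho(W)<1$, the matrix $I-\psi W$ is invertible. Its inverse equals $\sum_{k\ge0}(\psi W)^k$, which is entrywise nonnegative. This gives a unique solution $a^\star=\beta\sum_k(\psi W)^k\mathbf 1$, and $a^\star\ge\beta\mathbf 1>0$, so the nonnegativity constraints are slack and $a^\star$ is indeed an equilibrium.

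For the boundary case $\beta\le 0$, I would add the following remark. The best-reply map $a\mapsto\max\{0,\beta\mathbf 1+\psi Wa\}$ is monotone. It is also a contraction in a norm adapted to $\psi W$, which exists because $\psi\rho(W)<1$. Hence it has a unique fixed point, namely $a=0$. This is the one step I expect to need care: the statement's closed form presumes interiority, so the proof must make explicit that $\beta>0$ is assumed, or else handle the corner via this contraction argument.

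\textbf{Step 3: comparative statics.} Write $b(W,\psi):=(I-\psi W)^{-1}\mathbf 1$ for the Katz--Bonacich vector, so that $a^\star_i=\beta\,b_i(W,\psi)$ with $\beta>0$.

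Item 1 follows immediately. I would also note that $W=\alpha S$ with $S$ row-stochastic gives $W\mathbf 1=\alpha\mathbf 1$. Hence $b=\mathbf 1/(1-\psi\alpha)$ and centralities are in fact equal across agents, so the monotonicity in centrality is weak unless $W$ is interpreted more generally.

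For item 2, each term $(\psi W)^k\mathbf 1$ is nondecreasing in $\psi$ and entrywise in $W$. Any increase in these preserves $\psi\rho(W)<1$ only within the admissible region, so I restrict to that region. The factor $\beta$ is strictly decreasing in $\mu q$, and $b>0$, so $a^\star$ is decreasing in $\mu q$.

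For item 3, I use the explicit average
$$\bar a^\star=\frac{\beta}{1-\psi\alpha}=\frac{\theta-\lambda\mu q}{c-\phi\alpha}.$$
Direct differentiation then shows:
\begin{itemize}
\item $\bar a^\star$ is increasing in $\theta$ and in $\phi$, since the denominator falls as $\phi$ rises;
\item $\bar a^\star$ is decreasing in $\lambda$, $\mu$ and $q$, since the numerator falls;
\item $\bar a^\star$ is decreasing in $c$, since the denominator rises and the numerator is positive.
\end{itemize}
These derivatives have the stated signs precisely because the numerator $\theta-\lambda\mu q>0$ and the denominator $c-\phi\alpha>0$, which is again the interior assumption together with $\psi\rho(W)<1$.
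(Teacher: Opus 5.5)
Your argument is correct and is essentially the paper's: the paper also derives the linear best reply $a=\beta\mathbf 1+\psi Wa$ from the first-order condition, inverts it by the Neumann series, and reads all comparative statics off $a^\star=\beta\sum_{k\ge0}(\psi W)^k\mathbf 1$ under the assumption $\beta\ge0$ (Lemma~\ref{lem:MCS}). Several of your extras are sound refinements the paper omits: checking that the constraints $a_i\ge0$ are slack, noticing that $W\mathbf 1=\alpha\mathbf 1$ makes Katz--Bonacich centrality uniform (so item~1 holds only vacuously here), and the closed form $\bar a^\star=(\theta-\lambda\mu q)/(c-\phi\alpha)$; note, however, that like the paper you implicitly need $w_{ii}=0$, because otherwise the term $\phi w_{ii}a_i^2$ enters $u_i$ and changes both your claim $\partial^2u_i/\partial a_i^2=-c$ and the best reply itself.
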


These properties mirror general results for network games with strategic complements \citep{GaleottiGoyalJacksonVega-RedondoYariv,JacksonZenou2014}: denser or more central neighbors raise verification, while greater AI penetration or worse AI quality depress it.

\paragraph{Micro verification to macro contagion.}

Verification increases the effective removal rate of misinformation in the macro system. We capture this with the usual two-scale closure:\footnote{We use the subscript $LQ$ to represent the parameter $\delta$ by $\delta_{LQ}$ in the linear-quadratic network game.}
\begin{equation}
\delta_{LQ} = \delta_0 + \eta \bar a^\star, \;\; \bar a^\star = \frac{1}{n}\sum_i a^\star_i, \;\; \eta>0,
\label{eq:delta_eff}
\end{equation}
so stronger equilibrium verification raises the effective decline of distortions. Substituting \eqref{eq:delta_eff} into the macro coefficients of the 2×2 system yields
$$
m = (1-\delta_{LQ})(1-\mu)\alpha = \bigl(1-\delta_0-\eta\bar a^\star\bigr)(1-\mu)\alpha \text{ and } b = (1-\delta_{LQ})\mu(1-\kappa) = \bigl(1-\delta_0-\eta\bar a^\star\bigr)\mu(1-\kappa).
$$

\begin{prop}[Micro efforts and macro amplification]\label{prop:micro_macro}
Consider the unique Nash equilibrium $a^\star$ and $\delta_{LQ},m,b$ defined as above. Then:
\begin{enumerate}
\item If the network strengthens ($W'\ge W$ entrywise) or complements rise ($\phi$ increases), then $\bar a^\star$ increases, so $\delta_{LQ}$ increases and both $m$ and $b$ decrease (holding $\mu,\kappa,\alpha$ fixed).
\item If AI penetration $\mu$ or AI distortion $q$ rises, then $\bar a^\star$ falls and $\delta_{LQ}$ falls. As a result, the AI-injection coefficient $b$ increases. The effect on $m$ is ambiguous in general: $m$ increases whenever the verification response is sufficiently elastic so that the rise in $(1-\delta_{LQ})$ dominates the mechanical decline in $(1-\mu)$.

\end{enumerate}
\end{prop}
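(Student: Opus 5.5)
The plan is to derive both parts from Proposition~\ref{prop:NE_micro} together with the explicit forms of $m$ and $b$ as functions of $\delta_{LQ}$. Since $\eta>0$, $\delta_{LQ}=\delta_0+\eta\bar a^\star$ is strictly increasing in $\bar a^\star$. We assume throughout that $\psi\rho(W)<1$, so the equilibrium exists and has the Neumann series form $a^\star=\beta\sum_{k\ge0}(\psi W)^k\mathbf 1$. We also assume $\beta>0$, i.e.\ $\theta>\lambda\mu q$, and that the equilibrium is interior. If instead $\beta\le 0$, the best replies would bind at zero, and this is a case I would flag and exclude.

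\textbf{Part 1.} Suppose $W'\ge W$ entrywise and $\psi\rho(W')<1$. Then every term $(\psi W')^k\mathbf 1\ge(\psi W)^k\mathbf 1$ entrywise, because all matrices involved are nonnegative. Multiplying by $\beta>0$ and averaging gives $\bar a^\star(W')\ge\bar a^\star(W)$. The same termwise argument handles $\phi$, since raising $\phi$ raises $\psi=\phi/c$ and leaves $\beta$ unchanged. Consequently $\delta_{LQ}$ weakly increases and $1-\delta_{LQ}$ falls. Holding $\mu,\kappa,\alpha$ fixed, $m=(1-\delta_{LQ})(1-\mu)\alpha$ and $b=(1-\delta_{LQ})\mu(1-\kappa)$ are both nonnegative multiples of $1-\delta_{LQ}$, so they decrease. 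Strictness is immediate in the $\phi$ case, because the $k=1$ term $\psi W\mathbf 1=\psi\alpha\mathbf 1$ strictly increases. In the $W$ case, strictness holds whenever some entry strictly increases.

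\textbf{Part 2.} Since $\beta=(\theta-\lambda\mu q)/c$ depends on $\mu$ and $q$ only through $\mu q$, and $(I-\psi W)^{-1}\mathbf 1$ does not depend on them, we have
\[
\bar a^\star=\beta\,\kappa_B,\qquad \kappa_B:=\tfrac1n\mathbf 1^\top(I-\psi W)^{-1}\mathbf 1>0 .
\]
In fact, because $W=\alpha S$ is row-stochastic up to scale, $\kappa_B=1/(1-\psi\alpha)$. Hence $\partial\bar a^\star/\partial\mu=-\lambda q\kappa_B/c<0$, and similarly in $q$, so $\delta_{LQ}$ falls. For $b$, the two effects reinforce each other: $1-\delta_{LQ}$ rises and, when $\mu$ rises, the factor $\mu$ rises too, so $b$ increases. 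If only $q$ rises, then $b$ increases through $1-\delta_{LQ}$ alone. For $m$, I would differentiate $m(\mu)=(1-\delta_0-\eta\bar a^\star(\mu))(1-\mu)\alpha$, which gives
\[
\frac{\partial m}{\partial\mu}=\alpha\Big[\tfrac{\eta\lambda q\kappa_B}{c}(1-\mu)-(1-\delta_{LQ})\Big].
\]
This is positive exactly when the verification response $\eta\lambda q\kappa_B/c$ is large enough relative to $(1-\delta_{LQ})/(1-\mu)$. That is the precise content of ``sufficiently elastic'', and the ambiguity is shown by noting that either sign of the bracket can occur. When only $q$ varies, $m$ unambiguously rises.

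\textbf{Main obstacle.} The delicate point is not the algebra but the admissibility conditions. We need $\beta>0$ so that the monotonicity in Proposition~\ref{prop:NE_micro} has the stated sign and effort is interior, and $\psi\rho(W')<1$ must persist after $W$ is strengthened. We also need $\delta_{LQ}\in(0,1)$ for the macro interpretation. I would state these explicitly as standing assumptions of the proposition.
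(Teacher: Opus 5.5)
Your proof is correct and follows essentially the same route as the paper's own argument (Lemmas~\ref{lem:MCS} and~\ref{lem:mb}): termwise monotonicity of the Neumann series $\beta\sum_k(\psi W)^k\mathbf 1$, then proportionality of $m$ and $b$ to $1-\delta_{LQ}$, then a derivative of $m$ in $\mu$ whose sign condition is exactly the paper's $\frac{d\log(1-\delta_{LQ})}{d\mu}>\frac{1}{1-\mu}$. Two additions make that condition concrete and state hypotheses the paper leaves implicit: the closed form $\kappa_B=1/(1-\psi\alpha)$, which uses row-stochasticity of $S$, and the explicit standing assumptions $\beta>0$, $\psi\rho(W')<1$ and $\delta_{LQ}\in(0,1)$.
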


Proposition~\ref{prop:micro_macro} formalizes how micro-level verification incentives shape the macro-level contagion coefficients. Stronger verification, arising from denser networks or stronger strategic complementarities, raises the effective decay rate of misinformation and therefore reduces both the internal amplification term $m$ and the AI-injection term $b$. By contrast, higher AI penetration or worse AI quality weaken verification incentives, lowering the effective verification rate $\delta_{LQ}$. This unambiguously increases the AI-injection coefficient $b$, strengthening the feedback from AI into society. The effect on the internal amplification term $m$ is in general ambiguous: higher AI penetration mechanically attenuates human-to-human diffusion through $(1-\mu)$, but this effect may be outweighed by the reduction in verification when verification incentives are sufficiently sensitive. Taken together, the proposition shows that verification behavior is an important transmission mechanism through which AI penetration and network structure influence systemic informational risk.

\medskip

The regulator chooses a filtering intensity $\kappa\in[0,1]$ (strength of the KL-based I-projection) to reduce distortions of AI origin before they enter the network. The macro analysis in Section~\ref{2dim} shows that stability of the reduced two-dimensional system is equivalent to $\rho(A(\mu,\kappa))\le 1$ (and, in the stable regime, the full system is asymptotically characterized by the same condition), and the minimal stabilizing filter is given by the frontier

$$
\kappa^\star(\mu)
= 1-\frac{1-\alpha(1-\delta_{LQ})(1-\mu)}{(1-\delta_{LQ})\mu}
$$
projected onto $[0,1]$. This frontier characterizes the boundary separating stable from systemic informational regimes.

We now define a simple \emph{regulation equilibrium}: the regulator chooses a minimally costly $\kappa$ on or above this frontier while agents play the Nash equilibrium $a^\star(W,\mu,q)$. 
We collect the regulator concern about misinformation in a reduced-form loss function $L(\kappa)$, which represents the expected welfare loss from the path of distortions generated by $x_{t+1}= A(\mu,\kappa)x_t$. We do not pin down a specific functional form; for example, $L(\kappa)$ could be the discounted sum of squared distortions (a quadratic loss) when $\rho(A(\mu,\kappa))<1$, and very large (or infinite) when $\rho(A(\mu,\kappa))\ge 1$. We assume $L$ is continuous on $[0,1]$ and finite on the stability region $\{\kappa:\rho(A(\mu,\kappa))<1\}$. The cost of filtering, $C(\kappa)$, captures implementation and compliance costs of regulatory interventions, which also request that $C$ is continuous on $[0,1]$.\footnote{Additional structure such as convexity of $C$ will guarantee uniqueness.}

\begin{definition}
A regulation equilibrium is a pair $(a^\star,\kappa^{\star\star})$ such that:
\begin{enumerate}
\item $a^\star$ is the unique Nash equilibrium of the verification game characterized by the payoff function \eqref{eq:payoff_micro}.
\item $\kappa^{\star\star}$ corresponds to the optimal regulator policy:
$\min_{\kappa\in[0,1]} \;\big\{\, L(\kappa) + C(\kappa)\,\big\}
\quad\text{s.t.}\quad \kappa \ge \kappa^\star(\mu).$
\end{enumerate}
\end{definition}

Our regulation equilibrium addresses a network externality analogous to those identified by \cite{BlochDemange}: individually optimal actions fail to internalize system-wide informational spillovers, leading to inefficient outcomes.

\begin{prop}[Regulation equilibrium]
Suppose $L(\cdot)$ and $C(\cdot)$ are continuous on $[0,1]$. Then a regulation equilibrium exists. Whenever the stability constraint binds, the minimal-cost 
regulation is $\kappa^{\star\star}=\kappa^\star(\mu)$. It follows:
\begin{itemize}
\item (Ceteris paribus) Holding $\delta_{LQ}$ fixed, $\kappa^{\star\star}$ is increasing in $\alpha$ and in $\mu$.
\item $\kappa^{\star\star}$ is decreasing in $\delta_{LQ}$, and hence decreasing in $\bar a^\star$.
\end{itemize}
\end{prop}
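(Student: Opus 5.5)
The proof has three parts: existence by a Weierstrass argument on a compact feasible set, identification of the minimizer when the stability constraint binds, and monotonicity read off the closed form of the frontier \eqref{eq:frontier} with $\delta$ replaced by $\delta_{LQ}$.

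\textbf{Existence.} Fix the primitives $(W,\mu,q)$. By Proposition~\ref{prop:NE_micro}, as long as $\psi\rho(W)<1$ the verification game has a unique Nash equilibrium $a^\star$. This equilibrium determines $\bar a^\star$ and therefore $\delta_{LQ}=\delta_0+\eta\bar a^\star$, which I take to lie in $(0,1)$. The projected frontier $\kappa^\star(\mu)=\Pi_{[0,1]}(\cdot)$ is then a well-defined number in $[0,1]$, so the feasible set $\{\kappa\in[0,1]:\kappa\ge\kappa^\star(\mu)\}=[\kappa^\star(\mu),1]$ is a nonempty compact interval. The objective $L+C$ is continuous on $[0,1]$, so Weierstrass gives a minimizer $\kappa^{\star\star}$. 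The pair $(a^\star,\kappa^{\star\star})$ is then a regulation equilibrium.

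\textbf{Binding constraint.} Saying the constraint binds means the minimizer lies on the boundary $\kappa^{\star\star}=\kappa^\star(\mu)$. I will make this precise under the natural interpretation that the objective is nondecreasing on the feasible interval. That happens, for instance, when $C$ is increasing and the net marginal gain from lowering $L$ by filtering more is smaller than the marginal cost. Under that reading the minimum-cost feasible policy is the smallest admissible $\kappa$, and Corollary~\ref{cor1} confirms that this point is exactly the threshold at which $\rho(A)\le 1$.

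\textbf{Comparative statics.} These follow from $\kappa^{\star\star}=\kappa^\star(\mu)$ and direct differentiation of
$$f(\alpha,\mu,\delta_{LQ})=1-\frac{1-\alpha(1-\delta_{LQ})(1-\mu)}{(1-\delta_{LQ})\mu}.$$
\begin{itemize}
\item In $\alpha$: $\partial f/\partial\alpha=(1-\mu)/\mu\ge 0$.
\item In $\mu$: rewrite $f=1+\alpha-\frac{1}{(1-\delta_{LQ})\mu}+\frac{\alpha}{\mu}\cdot\frac{\mu-1}{1}\cdot 0$. More cleanly, $f=1-\frac{1}{(1-\delta_{LQ})\mu}+\frac{\alpha(1-\mu)}{\mu}$, so
$$\partial f/\partial\mu=\frac{1}{\mu^2}\Big(\frac{1}{1-\delta_{LQ}}-\alpha\Big),$$
which is positive exactly when $\alpha(1-\delta_{LQ})<1$. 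This is the interior region already singled out in Corollary~\ref{cor1}.
\item In $\delta_{LQ}$: writing $u=1-\delta_{LQ}$, we have $f=1-\frac{1}{u\mu}+\frac{\alpha(1-\mu)}{\mu}$, which is increasing in $u$ and hence decreasing in $\delta_{LQ}$.
\end{itemize}
The projection $\Pi_{[0,1]}$ is a nondecreasing map, so it preserves weak monotonicity. Finally, $\delta_{LQ}$ is increasing in $\bar a^\star$ because $\eta>0$, so the composition gives that $\kappa^{\star\star}$ is decreasing in $\bar a^\star$.

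\textbf{Main obstacle.} The delicate step is the binding-constraint claim and the regions on which the monotonicity statements hold. The proposition does not say when the constraint binds, so I will phrase that step conditionally. Monotonicity in $\mu$ needs the restriction $\alpha(1-\delta_{LQ})<1$, and once the projection is applied all monotonicity is only weak. I will state these qualifications explicitly and cite Corollary~\ref{cor1} for the restriction on $\mu$.
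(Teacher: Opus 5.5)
Your proposal is correct and takes essentially the paper's route. The paper gives no separate appendix proof: it rests existence on continuity of $L$ and $C$ on $[0,1]$, and it reads the comparative statics off the closed-form frontier with $\delta$ replaced by $\delta_{LQ}$, using the same derivative $\frac{1-\alpha(1-\delta_{LQ})}{(1-\delta_{LQ})\mu^2}$ as in the proof of Corollary~\ref{cor1}.

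Two small points. Your restriction $\alpha(1-\delta_{LQ})<1$ for monotonicity in $\mu$ is genuinely needed, since the statement omits it. You should also delete the garbled first ``rewrite'' of $f$ and keep only $f=1-\frac{1}{(1-\delta_{LQ})\mu}+\frac{\alpha(1-\mu)}{\mu}$.
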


When $\delta_{LQ}$ is endogenously determined by the verification game, changes in $\alpha$ affect $\kappa^{\star\star}$ both directly (through network amplification) and indirectly (through higher equilibrium verification and thus higher $\delta_{LQ}$). The total effect of $\alpha$ on $\kappa^{\star\star}$ is therefore, in general, ambiguous without further restrictions on the elasticity of $\delta_{LQ}$ with respect to $\alpha$. The comparative statics of $\kappa^\star(\mu)$, and therefore of $\kappa^{\star\star}$, with respect to $(\alpha,\mu,\delta_{LQ})$ follow directly from Proposition~\ref{prop:micro_macro} together with the spectral stability condition characterized in Section~\ref{MainResults}.

While stability refers to the dynamic macro condition $\rho(A(\mu,\kappa))<1$, a regulation equilibrium is a combination of $(i)$ behavior in verification at the Nash equilibrium, and $(ii)$ the cheapest feasible regulatory intensity on or above the spectral frontier. Network structure and AI penetration determine both: stronger amplification (higher $\alpha$) and heavier AI use (higher $\mu$) push the system toward instability and raise the minimal stabilizing filter $\kappa^\star(\mu)$. By contrast, homophily affects the persistence of distortions and therefore raises finite-horizon stabilization requirements, even though the asymptotic frontier $\kappa^\star(\mu)$ is unchanged under our normalization.\footnote{Appendix~\ref{finitehorizon} formalizes the monotone dependence of the finite-horizon stabilizing threshold on $|\lambda_2(S)|$.} At the same time, stronger verification incentives (higher $a^\star$) raise $\delta_{LQ}$ and relax the regulatory burden. This mirrors the interplay between micro risk-taking and macroprudential regulation in financial networks  \citep{ElliottGolubJackson}: systemic fragility arises from feedback interactions between individual incentives, network structure, and central interventions. Here the object of concern is informational rather than financial stability, but the mathematical logic and policy trade-offs are analogous.

\section{Concluding Discussion}\label{conclusion}

This paper analyzes how artificial intelligence operating within social communication networks affects the stability of collective knowledge. The main insight is that AI systems are not merely external sources of information: they simultaneously shape the informational environment and retrain on data generated by that environment. This coupling creates a feedback loop between social diffusion and algorithmic learning. We show that this loop can generate \emph{systemic informational risk}, a regime in which small informational distortions are endogenously amplified over time. Despite the high dimensionality of the underlying system, its long-run behavior can be characterized by a low-dimensional structure: distortions align with a dominant amplification mode, while subdominant network effects decay geometrically relative to it at a rate characterized by $|\lambda_2(S)|$. In the stable regime, the high-dimensional AI–society system is asymptotically equivalent to a two-dimensional dynamic system whose spectral radius determines whether distortions are absorbed or amplified.

A theoretical contribution is an asymptotic dimensional reduction of the coupled AI-society system. Despite an $(n+1)$-dimensional environment with heterogeneous agents and network interactions, the long-run dynamics admit an asymptotically equivalent two-dimensional representation in average social distortion and AI distortion. This reduction follows from the Perron-Frobenius structure of social networks and from the fact that AI retraining responds to aggregate rather than individual signals. As a result, the stability of collective knowledge is determined by a single spectral condition. When the feedback loop between social contagion and AI retraining is sufficiently strong, informational distortions are endogenously amplified over time, producing systemic informational collapse driven by explosive informational dynamics. The model delivers three results of direct relevance for AI governance.

\paragraph{First, systemic informational risk is an upstream phenomenon.} Instability arises at the point where AI-generated content enters social information flows and is subsequently absorbed into future training data. Downstream interventions (ex-post content moderation, user-level fact checking, or individual corrections) are insufficient once distortions have entered the retraining loop. In the model, stability is restored only by reducing the effective injection of distortion originating from AI into society. This provides a formal justification for regulatory approaches that emphasize ex-ante obligations, such as training-data governance, model evaluation, and output filtering. This logic aligns closely with the risk-based architecture of the EU AI Act, which imposes the strongest obligations on systems classified as high risk due to their potential for large-scale social impact.\footnote{\url{https://eur-lex.europa.eu/legal-content/EN/TXT/?uri=CELEX:52021PC0206} (accessed 2026-01-22).} Our results clarify why such upstream regulation is necessary: systemic informational risk does not depend on malicious intent or individual misuse, but on the structural interaction between AI retraining and networked social learning. The relevant policy object is not isolated harm, but the stability of the informational environment as a whole.

\paragraph{Second, effective AI regulation must be network-aware.}
While the asymptotic stability frontier depends only on average amplification and exposure parameters under our normalization, network topology crucially shapes systemic informational risk in practice. In particular, homophily increases informational fragility by slowing the decay of subdominant network modes, thereby raising cumulative exposure over policy-relevant horizons. Appendix~\ref{finitehorizon} shows that the finite-horizon stabilizing
threshold depends monotonically on $|\lambda_2(S)|$: more homophilous networks require stronger upstream filtering to maintain safety over economically relevant time horizons, even when asymptotic stability is unchanged.

We also show that core-periphery structures generate directional fragility. When informational influence is concentrated, distortions originating in high-centrality nodes disproportionately shape aggregate learning and AI retraining. Systemic informational risk is therefore driven not only by the magnitude of AI-generated distortions, but by where in the network they originate. Uniform regulation is inefficient in such environments. Instead, effective governance must focus on high-centrality platforms, information intermediaries, and ranking infrastructures, providing a formal justification for platform-specific obligations, asymmetric compliance requirements, and enhanced scrutiny of algorithmic ranking systems.

These results provide a theoretical foundation for the systemic-risk provisions of the Digital Services Act (DSA).\footnote{\url{https://eur-lex.europa.eu/eli/reg/2022/2065/oj} (accessed 2026-01-22).} The DSA requires very large online platforms and search engines to assess and mitigate ``systemic risks'' arising from the dissemination of content. Our framework makes precise what such risks entail in an AI-mediated environment: platforms that occupy central positions in information networks, through ranking, recommendation, or search, play a disproportionate role in the feedback loop that determines informational stability. Uniform user-level regulation is therefore inefficient. Instead, governance should focus on high-centrality systems and channels, consistent with the DSA's asymmetric obligations and auditing requirements.

\paragraph{Third, verification and regulation are complements rather than substitutes.} We provide a microfoundation in which agents choose costly verification effort in a network game with strategic complementarities. Equilibrium verification determines the effective decay rate of misinformation in the macro system. Because verification incentives are shaped by network spillovers, they are generally insufficient to ensure stability on their own. Regulatory filtering of AI outputs raises the returns to verification and reduces amplification, while higher verification relaxes regulatory requirements. This complementarity suggests that policy frameworks should not treat ``user responsibility'' or ``media literacy'' as substitutes for upstream oversight. From a U.S. policy perspective, this logic connects naturally to the NIST AI Risk Management Framework, which emphasizes continuous risk assessment, monitoring, and mitigation across the AI lifecycle.\footnote{\url{https://www.nist.gov/itl/ai-risk-management-framework} (accessed 2026-01-22).} Our analysis identifies a specific class of risks (systemic informational instability) that are not captured by standard performance metrics but arise from interaction effects between AI systems and social networks. Incorporating network exposure and retraining feedback into risk-management processes would substantially strengthen the framework’s ability to address societal-scale harms.

More broadly, the paper contributes to the economics of AI governance by providing a formal stability theory for AI-mediated informational environments. While much of the existing literature focuses on bias, fairness, explainability, or isolated misinformation events, our analysis highlights a distinct threat: endogenous instability generated by feedback between diffusion and generation of information. The main objects we derive --stability thresholds, amplification factors, and regulatory frontiers-- are informational analogues of systemic risk concepts long studied in financial economics.

Several extensions are natural. First, recent work in computer science has emphasized the dangers of recursive training through model collapse \citep{Shumailov}. Our analysis complements this literature by showing that recursive feedback between AI and society may generate systemic informational instability even when AI systems themselves remain stable. More generally, translating the reduced-form filtering parameter into specific technical and institutional instruments, such as auditing standards, dataset constraints, or alignment requirements, remains an important empirical and design task. Second, richer nonlinearities in attention, platform incentives, or retraining frequency may generate additional path dependence and multiple equilibria. Third, measuring network exposure and centrality in real-world AI-mediated information systems is a prerequisite for operationalizing network-aware regulation.

To sum up, the analysis shows that systemic informational risk is not an accident of misuse or manipulation, but a predictable outcome of scale, connectivity, and feedback in AI-mediated societies. Risk-based, upstream, and asymmetric regulatory frameworks, such as those embodied in the EU AI Act and the DSA, are therefore not merely precautionary tools. They are structural requirements for preserving the stability of collective knowledge in the presence of adaptive AI systems integrated into social networks.

\section{Proofs}\label{appendix}

Throughout the Appendix and the results, $\|\cdot\|$ denotes an arbitrary submultiplicative matrix norm, i.e.\ $\|AB\|\le\|A\|\|B\|$. For vectors, $\|\cdot\|$ denotes the associated vector norm (e.g.\ Euclidean norm). Since the state space is finite-dimensional, all vector norms on $\mathbb{R}^n$ are equivalent. Hence the exponential bounds derived below hold under any fixed choice of norm, up to multiplicative constants.

\subsection{Known Intermediate Linear-Algebra Results}

We recall here known intermediate results in linear-algebra, and explain why and how they apply. These results, which are provided without proof, are taken from the textbook by \cite{HornJohnson}.

\begin{lemma}[Gelfand formula]\label{gelfand}
For any matrix $B$, $\displaystyle \lim_{t\to\infty}\|B^t\|^{1/t}=\rho(B)$.
\end{lemma}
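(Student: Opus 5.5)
The plan is to prove the two inequalities $\liminf_{t}\|B^t\|^{1/t}\ge\rho(B)$ and $\limsup_{t}\|B^t\|^{1/t}\le\rho(B)$ separately, working throughout with the fixed submultiplicative matrix norm $\|\cdot\|$ from the start of the Appendix (over $\mathbb{C}$ if needed). Since all norms on a finite-dimensional space are equivalent, any two norms satisfy $c\|X\|\le\|X\|'\le C\|X\|$. Taking $t$-th roots turns these constants into $c^{1/t},C^{1/t}\to1$, so the limit does not depend on the norm chosen. We may therefore switch to whichever norm is most convenient in each step.

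\emph{Lower bound.} Let $\lambda$ be an eigenvalue of $B$ with $|\lambda|=\rho(B)$, and let $v\neq0$ be a corresponding eigenvector. Form the nonzero rank-one matrix $V:=v\mathbf{1}^\top$. Then $B^tV=\lambda^tV$, and submultiplicativity gives
$$|\lambda|^t\|V\|=\|B^tV\|\le\|B^t\|\,\|V\|.$$
Hence $\|B^t\|\ge\rho(B)^t$ for every $t$, and so $\|B^t\|^{1/t}\ge\rho(B)$. This works for an arbitrary submultiplicative norm, not only for induced norms.

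\emph{Upper bound.} Fix $\varepsilon>0$ and set $B_\varepsilon:=B/(\rho(B)+\varepsilon)$, so that $\rho(B_\varepsilon)<1$. The key claim is that $B_\varepsilon^t\to0$ entrywise. To show it, I would write the Jordan form $B_\varepsilon=PJP^{-1}$, so that $B_\varepsilon^t=PJ^tP^{-1}$. Each Jordan block $J_k(\lambda)$ has powers with entries $\binom{t}{j}\lambda^{t-j}$ for $0\le j<k$. Each such entry is a polynomial in $t$ times $|\lambda|^{t-j}$ with $|\lambda|<1$, and therefore tends to zero. Consequently $\|B_\varepsilon^t\|\le1$ for all $t\ge T(\varepsilon)$. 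This gives $\|B^t\|\le(\rho(B)+\varepsilon)^t$, hence $\limsup_t\|B^t\|^{1/t}\le\rho(B)+\varepsilon$, and letting $\varepsilon\downarrow0$ closes the argument. If $\rho(B)=0$ the same argument applies, since then $B$ is nilpotent and $\|B^t\|=0$ eventually.

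The main obstacle is the convergence $B_\varepsilon^t\to0$ whenever $\rho(B_\varepsilon)<1$, because $B$ need not be diagonalizable. The polynomial factors $\binom{t}{j}$ coming from nontrivial Jordan blocks must be dominated by geometric decay. I would handle this via the elementary limit $t^j r^t\to0$ for $0\le r<1$. An alternative is to construct a norm $\|\cdot\|_\varepsilon$ adapted to the rescaled Jordan basis, in which $\|B\|_\varepsilon\le\rho(B)+\varepsilon$, and then invoke norm equivalence.
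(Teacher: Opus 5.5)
Your proof is correct. The paper gives no proof of this statement: it recalls the Gelfand formula from \cite{HornJohnson} ``without proof'' and uses it as a black box. Your argument therefore supplies what the paper omits, and it is essentially the textbook proof: a lower bound from a dominant eigenvector, and an upper bound from rescaling to $\rho(B_\varepsilon)<1$ and showing $B_\varepsilon^t\to 0$.

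Two remarks on how it fits the paper. First, you rightly prove $B_\varepsilon^t\to 0$ directly from the Jordan blocks. You could not instead invoke the paper's Lemma~\ref{lem:decay}, because that lemma is itself derived from the Gelfand formula, so the argument would become circular. Your Jordan estimate is the same one the paper uses in the proof of Claim~\ref{claim2}, namely $\|J_\lambda^t\|\le C\,t^{d-1}|\lambda|^t$. The appendix could thus be made self-contained by running that estimate once and deducing both the Gelfand formula and Lemma~\ref{lem:decay} from it.

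Second, your norm-equivalence preamble buys slightly more than the statement. It shows the limit is the same for every norm on matrices, submultiplicative or not. It also handles the one delicate point in your lower bound. For a real $B$ with a complex dominant eigenvalue, $V=v\mathbf{1}^\top$ is complex, so the every-$t$ inequality $\|B^t\|\ge\rho(B)^t$ is immediate only for norms defined and submultiplicative on complex matrices. For a norm given only on real matrices, passing to, e.g., the spectral norm still yields $\liminf_t\|B^t\|^{1/t}\ge\rho(B)$, which is all the lemma requires.
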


It follows that the asymptotic rate of $B^t$ is determined by $\rho(B)$.

\begin{lemma}[Neumann series]\label{neumann}
If $\rho(B)<1$, then $(I-B)$ is invertible and $\sum_{s=0}^\infty B^s=(I-B)^{-1}$.
\end{lemma}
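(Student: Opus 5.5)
The plan is to obtain everything from Lemma~\ref{gelfand} together with a telescoping identity for partial sums. Since $\rho(B)<1$, I first fix a number $r$ with $\rho(B)<r<1$. Lemma~\ref{gelfand} gives $\|B^t\|^{1/t}\to\rho(B)$, so there is a $T$ such that $\|B^t\|\le r^t$ for all $t\ge T$. Hence $\sum_{s\ge 0}\|B^s\|$ is dominated, after finitely many terms, by a convergent geometric series. The space of $n\times n$ matrices with the fixed submultiplicative norm is finite-dimensional and therefore complete, so the partial sums
$$
S_N:=\sum_{s=0}^{N}B^s
$$
form a Cauchy sequence. They converge to a limit, which I denote $\Sigma:=\sum_{s=0}^\infty B^s$.

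Next I use the algebraic identity
$$
(I-B)S_N \;=\; S_N(I-B) \;=\; I-B^{N+1},
$$
which follows by telescoping. The same Gelfand bound gives $\|B^{N+1}\|\le r^{N+1}\to 0$, so $B^{N+1}\to 0$. Matrix multiplication is continuous, so letting $N\to\infty$ on both sides yields
$$
(I-B)\Sigma=\Sigma(I-B)=I.
$$
Thus $I-B$ has a two-sided inverse equal to $\Sigma$, which proves both claims at once: $(I-B)$ is invertible and $\sum_{s\ge0}B^s=(I-B)^{-1}$. As a sanity check, invertibility can also be seen directly: $1$ is not an eigenvalue of $B$ because every eigenvalue has modulus at most $\rho(B)<1$.

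The only step with real content is the convergence of the series and the vanishing of $B^{N+1}$. One cannot simply assume $\|B\|<1$, since the norm may exceed the spectral radius, for example for non-normal $B$. The remedy is to pass through an intermediate $r\in(\rho(B),1)$ and invoke Lemma~\ref{gelfand}, so that only eventual geometric decay of $\|B^t\|$ is needed. The finitely many initial terms $t<T$ are harmless, since they contribute only a finite constant. Once this is in place, the rest is routine algebra and continuity.
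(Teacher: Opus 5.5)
Your proof is correct and follows essentially the same argument the paper gives in Lemma~\ref{lem:decay}: an eventual geometric bound on $\|B^t\|$ from the Gelfand formula, absolute convergence of the series, and the telescoping identity $(I-B)\sum_{s=0}^N B^s = I-B^{N+1}$ passed to the limit. Your version also checks the inverse on both sides and invokes completeness explicitly; this is harmless extra care, though a one-sided inverse already suffices in finite dimensions.
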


Combining this result and Lemma ~\ref{gelfand},  $\rho(B)<1$ implies $B^t\to 0$ and then the geometric series converges in operator norm to $(I-B)^{-1}$. We provide a proof of this result in Lemma \ref{lem:decay} hereafter.

\begin{lemma}[Perron-Frobenius]\label{pf}
If $W\ge 0$ is primitive, then $\rho(W)$ is a simple eigenvalue with $v\gg 0$ and $u\gg 0$ such that $Wv=\rho(W)v, \qquad u^\top W=\rho(W)u^\top$.
Moreover, if $u^\top v=1$, then $\frac{W^t}{\rho(W)^t} \to v u^\top \quad \text{as } t\to\infty$.
\end{lemma}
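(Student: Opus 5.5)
The statement to prove is Lemma~\ref{pf} (Perron--Frobenius for a primitive nonnegative matrix $W$). I would follow the standard route in \cite{HornJohnson} and organize it in three steps: existence of a positive eigenpair, simplicity and strict dominance of $\rho(W)$, and convergence of the normalized powers.

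\textbf{Existence.} First I use primitivity: some power $W^k$ is entrywise positive. Define the Collatz--Wielandt function $r(x)=\min_{i:x_i>0}(Wx)_i/x_i$ on the simplex of nonnegative vectors. Maximizing $r$ over the compact set $\{W^k x: x\ge0,\ \mathbf 1^\top x=1\}$, whose elements are all strictly positive, yields a maximizer $v\gg0$ with $Wv=rv$. The equality $Wv=rv$ follows because, if $Wv-rv\ge0$ were nonzero, multiplying by $W^k\gg0$ would strictly increase $r$.

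\textbf{Identification and simplicity.} Next I show $r=\rho(W)$. Take any eigenpair $Wz=\lambda z$ and apply the triangle inequality to get $W|z|\ge|\lambda||z|$, so $|\lambda|\le r$. The same argument on $W^\top$, which is also primitive, gives a positive left eigenvector $u$ with the same eigenvalue $r$. For geometric simplicity, suppose a second eigenvector $w$ for $r$ is independent of $v$; then a suitable combination $v-cw\ge0$ has a zero entry and is still an eigenvector, contradicting $W^k(v-cw)\gg0$. Algebraic simplicity then comes from $u^\top v>0$: a Jordan chain $Wy=ry+v$ would give, after multiplying by $u^\top$, $u^\top v=0$, which is impossible. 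For strict dominance I apply the equality case of the triangle inequality to $W^k\gg0$. If $|\lambda|=r$, then $W^k|z|=r^k|z|$ forces all entries of $z$ to share a common phase, so $\lambda^k=r^k$ and $z$ is a multiple of $v$. Because this holds for all sufficiently large $k$ (powers of a primitive matrix stay positive), we get $\lambda=r$.

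\textbf{Convergence.} With the normalization $u^\top v=1$, set $P=vu^\top$. Then $WP=PW=\rho P$, and $B:=W/\rho-P$ satisfies $BP=PB=0$, so $(W/\rho)^t=P+B^t$. The spectrum of $B$ consists of $0$ together with the eigenvalues $\lambda/\rho$ of $W$ other than $\rho$, all of modulus strictly below $1$ by strict dominance. Hence $\rho(B)<1$, and Lemma~\ref{gelfand} gives $B^t\to0$, which yields the claim.

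The main obstacle is the strict-dominance step, since this is exactly where primitivity rather than mere irreducibility is used. I would handle it through the equality case of the triangle inequality applied to the positive power $W^k$, as above.
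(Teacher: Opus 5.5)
The paper does not prove this lemma. It is one of the ``known intermediate linear-algebra results'' quoted without proof from \cite{HornJohnson}, and it is then used as a black box, for instance in the proof of Claim~\ref{claim2}. Your argument is therefore a self-contained replacement for the citation, not a reconstruction of a proof in the paper, and it is correct. Its steps are:
\begin{itemize}
\item Collatz--Wielandt maximization over the compact set $W^k(\text{simplex})$ of strictly positive vectors, which produces $v\gg0$.
\item The zero-entry argument for geometric simplicity, and the $u^\top v>0$ argument for algebraic simplicity.
\item The equality case of the triangle inequality for $W^k\gg0$, which gives strict dominance.
\item The projector split $(W/\rho)^t=vu^\top+B^t$.
\end{itemize}
The citation buys economy. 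Your route buys more than the stated limit. Algebraic simplicity gives $\rho(B)=\max_{\lambda\in\sigma(W)\setminus\{\rho(W)\}}|\lambda|/\rho(W)$, which equals $|\lambda_2(S)|$ when $W=\alpha S$. Lemma~\ref{gelfand} then yields $\|W^t-\alpha^t vu^\top\|\le C(\alpha\rho')^t$ for every $\rho'\in(|\lambda_2(S)|,1)$. That is exactly the bound \eqref{eq:W-decomp}, which the paper later derives separately through Jordan-block estimates.

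A few steps need one more line each.
\begin{itemize}
\item Bounding $|\lambda|$. The vector $|z|$ need not lie in $W^k(\text{simplex})$, so $W|z|\ge|\lambda||z|$ gives $|\lambda|\le r$ only after you apply $W^k$. This gives $W(W^k|z|)\ge|\lambda|\,W^k|z|$, hence $|\lambda|\le r(W^k|z|)\le r$.
\item Strict dominance. You assert $W^k|z|=r^k|z|$ but do not derive it. It follows from $r^k|z|=|W^kz|\le W^k|z|$ together with $u^\top(W^k|z|-r^k|z|)=0$ and $u\gg0$. The common phase $z=e^{i\theta}|z|$ then gives $\lambda|z|=W|z|\ge0$, so $\lambda=r$ directly, without appealing to all large $k$.
\item Geometric simplicity. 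Pass to the real or imaginary part of $w$ so that the $v-cw\ge0$ construction makes sense.
\item Normalization. Note that $\rho(W)>0$ because $W^k\gg0$, so dividing by $\rho(W)$ is legitimate.
\end{itemize}
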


\begin{lemma}[Discrete-time stability]\label{discrete-stability}
For $x_{t+1}=A x_t$, we have $x_t\to 0$ for all $x_0$ if and only if $\rho(A)<1$. If $\rho(A)>1$ some trajectories diverge. The exponential growth (or decay) rate is $\log\rho(A)$.
\end{lemma}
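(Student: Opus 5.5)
The argument is a Jordan-form decomposition of $A$. Write $A=PJP^{-1}$, so that $x_t=A^tx_0=PJ^tP^{-1}x_0$. Since $P$ is invertible and all norms are equivalent, it suffices to study the Jordan blocks $J_k=\lambda_kI+N_k$ one at a time, where $N_k$ is nilpotent of size $r_k$. Expanding binomially, for $t\ge r_k$ we have
$$J_k^t=\sum_{j=0}^{r_k-1}\binom{t}{j}\lambda_k^{t-j}N_k^j .$$
Each entry of $J_k^t$ is therefore bounded by $C\,t^{r_k-1}|\lambda_k|^{t-r_k+1}$, and the $(1,1)$ entry equals $\lambda_k^t$ exactly.

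\emph{Convergence when $\rho(A)<1$.} Every eigenvalue satisfies $|\lambda_k|<1$, and a polynomial times a decaying geometric factor tends to $0$. Hence $A^t\to0$ and $x_t\to0$ for every $x_0$. If one wants a quantitative version, the same bound gives $\|A^t\|\le C_\varepsilon(\rho(A)+\varepsilon)^t$ for every $\varepsilon>0$.

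\emph{The converse and the divergent case.} Pick an eigenvalue $\lambda$ with $|\lambda|=\rho(A)$ and an eigenvector $v$.
\begin{itemize}
\item If $\lambda$ is real, take $x_0=v$. Then $\|x_t\|=\rho(A)^t\|v\|$.
\item If $\lambda$ is complex, take $x_0=\mathrm{Re}\,v$ or $x_0=\mathrm{Im}\,v$. At least one of these does not converge to $0$, because $A^t v=\lambda^t v$ with $|\lambda^t|=\rho(A)^t$ and $A^t$ is real.
\end{itemize}
When $\rho(A)\ge1$ this gives an $x_0$ with $x_t\not\to0$, which proves the "only if" direction of the first claim. When $\rho(A)>1$ the same vector gives $\|x_t\|\to\infty$: in the complex case, the real-invariant plane spanned by $\mathrm{Re}\,v,\mathrm{Im}\,v$ carries a rotation-scaling by $|\lambda|$, so both components grow. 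This shows that some trajectories diverge.

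\emph{Growth rate.} Gelfand's formula (Lemma~\ref{gelfand}) gives $\lim\|A^t\|^{1/t}=\rho(A)$, so $\limsup_t\frac1t\log\|x_t\|\le\log\rho(A)$ for all $x_0$. The lower bound is attained along the eigenvector-based initial condition just constructed. Hence the exponential growth or decay rate is $\log\rho(A)$.

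The only delicate step is the complex-eigenvalue case: one has to produce a \emph{real} initial condition whose trajectory does not decay, or diverges. The real-invariant-plane argument above handles it. Everything else is a routine Jordan-block estimate.
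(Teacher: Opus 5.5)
Your proof is correct, and there is no proof in the paper to match it against. Lemma~\ref{discrete-stability} is one of the facts the paper quotes from Horn and Johnson ``without proof.'' Where the paper does argue pieces of it, it uses Gelfand's formula rather than the Jordan form: Lemma~\ref{lem:decay} gets $B^t\to 0$ from $\|B^t\|^{1/t}\to\rho(B)$, and the proof of Proposition~\ref{prop2} reads the rate off $\lim_t \frac1t\log\|A^t\|=\log\rho(A)$ via Lemma~\ref{gelfand}.

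Your Jordan-block argument is more elementary and self-contained. The bound on $\binom{t}{j}|\lambda_k|^{t-j}$ already gives $\|A^t\|\le C_\varepsilon(\rho(A)+\varepsilon)^t$, which is the upper half of Gelfand's formula. Your eigenvector/invariant-plane trajectory gives the matching lower bound, so the appeal to Lemma~\ref{gelfand} in your last step is not needed. Your approach also makes explicit two things the citation hides. First, when the dominant eigenvalue is complex you need a \emph{real} initial condition. Second, the polynomial factor $t^{r_k-1}$ shows the divergence claim cannot be upgraded to an equivalence: at $\rho(A)=1$ a nontrivial Jordan block produces polynomially diverging trajectories. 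The proof of Proposition~\ref{prop2} does use the lemma in that stronger ``iff'' form. That is harmless there as long as $b(1-\gamma)>0$, because then the eigenvalues of $A(\mu,\kappa)$ are real and distinct. The Gelfand route buys brevity, but it relies on a black-box theorem and says nothing about which initial conditions realize the rate.

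Two points to tighten. State why $\mathrm{Re}\,v$ and $\mathrm{Im}\,v$ are linearly independent: a non-real eigenvalue of a real matrix has no real eigenvector. Also note that the rotation--scaling bound $\|A^t x_0\|\ge c\,\rho(A)^t\|x_0\|$, for nonzero $x_0$ in that plane, holds for every value of $\rho(A)$. This bound is what gives the lower bound on the decay rate when $\rho(A)<1$; your ``one of $\mathrm{Re}\,v,\mathrm{Im}\,v$ does not vanish'' argument only works for $\rho(A)\ge 1$.
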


\begin{lemma}[Perron root monotonicity]\label{perron}
If $0\le A\le B$ componentwise, then $\rho(A)\le \rho(B)$.
\end{lemma}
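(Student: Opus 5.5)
The plan is to derive the monotonicity of the Perron root directly from the Gelfand formula (Lemma~\ref{gelfand}). The key observation is that componentwise order between nonnegative matrices is preserved under multiplication, and that some matrix norms are monotone with respect to this order. The argument therefore reduces the comparison of spectral radii to a comparison of norms of powers, followed by a limit.

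\textbf{Step 1: powers preserve the order.} I will first show by induction on $t$ that $0\le A^t\le B^t$ componentwise for every $t\ge 1$. The case $t=1$ is the hypothesis. For the inductive step, write $(A^{t+1})_{ij}=\sum_k (A^t)_{ik}A_{kj}$. Each factor is nonnegative and bounded above by the corresponding entry of $B^t$ or $B$. Products and sums of nonnegative numbers are monotone in each argument, so this gives $(A^{t+1})_{ij}\le \sum_k (B^t)_{ik}B_{kj}=(B^{t+1})_{ij}$.

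\textbf{Step 2: choose a monotone norm.} Next I will fix a specific submultiplicative norm that is monotone on nonnegative matrices. The natural choice is the maximum row-sum norm $\|C\|_\infty=\max_i\sum_j |c_{ij}|$. For $0\le C\le D$ it clearly satisfies $\|C\|_\infty\le\|D\|_\infty$. The Frobenius norm would also work. Combining this with Step~1 gives $\|A^t\|_\infty\le\|B^t\|_\infty$ for all $t$.

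\textbf{Step 3: pass to the limit.} Taking $t$-th roots preserves the inequality, since $x\mapsto x^{1/t}$ is increasing on $[0,\infty)$. Letting $t\to\infty$ and applying Lemma~\ref{gelfand}, which holds for any matrix norm, to both sides then yields $\rho(A)\le\rho(B)$.

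The only point requiring care is Step~2. The Gelfand formula is norm-independent, but the inequality $\|A^t\|\le\|B^t\|$ is not automatic for an arbitrary norm. This is why the proof commits to an entrywise-monotone norm. Once that choice is made, the remaining steps are routine. Note that neither primitivity nor irreducibility is needed, so Lemma~\ref{pf} is not invoked.
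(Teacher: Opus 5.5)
Your proof is correct. Powers preserve the entrywise order, $\|\cdot\|_\infty$ is submultiplicative and monotone on nonnegative matrices, and Lemma~\ref{gelfand} then turns $\|A^t\|_\infty^{1/t}\le\|B^t\|_\infty^{1/t}$ into $\rho(A)\le\rho(B)$.

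It is, however, a different route from the one the paper indicates. The paper gives no proof: it cites a textbook and calls the result a corollary of the Collatz--Wielandt characterization. That argument goes through an eigenvector. You pick $x\ge 0$, $x\neq 0$ with $Ax=\rho(A)x$, note that $Bx\ge Ax=\rho(A)x$, and apply the subinvariance bound ($Bx\ge rx$ with $x\ge 0$, $x\neq 0$ implies $\rho(B)\ge r$).

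For possibly reducible matrices this needs more than the paper lists; note that $A(\mu,\kappa)$ itself is reducible when $b=0$ or $\gamma=1$. Existence of $x$ requires the general Perron--Frobenius theorem, whereas Lemma~\ref{pf} covers only primitive matrices. The subinvariance bound is itself usually proved by perturbing to $B+\varepsilon\mathbf{1}\mathbf{1}^\top$ and using continuity of the spectral radius.

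Your route avoids all of this and uses only a lemma the paper already states. It also extends verbatim to $|A|\le B\Rightarrow\rho(A)\le\rho(B)$, since $|A^t|\le|A|^t$. It is essentially the proof given in the Horn--Johnson textbook the paper cites.

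What the eigenvector route buys in exchange is strictness. Suppose $B$ is irreducible and $A\neq B$. Pair $Ax=\rho(A)x$ with the positive left Perron vector of $B$, and use that a nonnegative eigenvector of an irreducible matrix is positive; this gives $\rho(A)<\rho(B)$. A limit of $t$-th roots can only ever give a weak inequality. That matters if you want the monotonicity claims of Corollary~\ref{cor1} (for example, $\rho(A)$ decreasing in $\kappa$) in their strict form.
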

This is a well-known property of nonnegative matrices, a direct corollary of the Collatz–Wielandt characterization of the spectral radius.

\begin{lemma}[Spectral radius below one implies geometric decay]\label{lem:decay}
Let $B$ be a square matrix. If $\rho(B)<1$, then $B^t\to 0$ as $t\to\infty$ and the series $\sum_{s=0}^\infty B^s$ converges
absolutely in norm. Moreover,
$$
\sum_{s=0}^\infty B^s = (I-B)^{-1}.
$$
\end{lemma}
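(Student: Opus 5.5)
The plan is to prove geometric decay of $B^t$ from $\rho(B)<1$ without assuming the Neumann series, and then obtain the series identity by telescoping. First I fix a number $r$ with $\rho(B)<r<1$. By the Gelfand formula (Lemma~\ref{gelfand}), $\|B^t\|^{1/t}\to\rho(B)<r$. Hence there is a $T$ such that $\|B^t\|\le r^t$ for all $t\ge T$. Setting $K:=\max\{1,\max_{0\le t<T}\|B^t\|/r^t\}$ then gives the uniform bound
$$
\|B^t\|\le K r^t \quad\text{for all } t\ge 0.
$$
This immediately yields $B^t\to 0$.

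Next, I use this bound to get absolute convergence. Since $\sum_{s\ge0}\|B^s\|\le K\sum_{s\ge0} r^s=K/(1-r)<\infty$, the series converges absolutely. The space of $n\times n$ matrices with a submultiplicative norm is finite-dimensional, hence complete, so absolute convergence implies convergence of the partial sums $\Sigma_N:=\sum_{s=0}^{N} B^s$ to some limit $\Sigma$.

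To identify the limit, I will use the telescoping identity
$$
(I-B)\Sigma_N=\Sigma_N(I-B)=I-B^{N+1}.
$$
Letting $N\to\infty$, and using continuity of matrix multiplication together with $B^{N+1}\to0$, gives $(I-B)\Sigma=\Sigma(I-B)=I$. So $I-B$ is invertible with inverse $\Sigma$, which is the claimed identity.

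As an independent check on invertibility, note that $1$ cannot be an eigenvalue of $B$ because $|1|>\rho(B)$; the telescoping argument does not rely on this.

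The only delicate step is converting the Gelfand limit into the uniform bound with a fixed constant $K$, which requires handling the finitely many initial indices separately as above. Everything else is routine. If one prefers to avoid Gelfand's formula, an alternative route is the Jordan form: each block $J_k(\lambda)^t$ has entries of the form $\binom{t}{j}\lambda^{t-j}$, which are bounded by $C r^t$ when $|\lambda|<r$.
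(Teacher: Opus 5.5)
Your proof is correct and follows essentially the same route as the paper. Both use Gelfand's formula to get $\|B^t\|\le r^t$ for large $t$, deduce absolute convergence of $\sum_{s} B^s$, and pass to the limit in the telescoping identity $(I-B)\sum_{s=0}^N B^s=I-B^{N+1}$; your uniform constant $K$, the completeness remark, and the check of both one-sided products only make explicit what the paper leaves implicit.
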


\begin{proof}
Fix $\varepsilon>0$ such that $\rho(B)+\varepsilon<1$.
By Lemma \ref{gelfand}, $\lim_{t\to\infty}\|B^t\|^{1/t}=\rho(B)$, so there exists $T$ such that
$\|B^t\|^{1/t}\le\rho(B)+\varepsilon$ for all $t\ge T$. Hence
$$
\|B^t\|\le (\rho(B)+\varepsilon)^t \qquad \text{for all } t\ge T,
$$
which implies $B^t\to 0$ and also implies that $\sum_{t\ge0}\|B^t\|<\infty$ (geometric series),
so $\sum_{t\ge0}B^t$ converges absolutely in norm.

Finally, for each $N$,
$$
(I-B)\sum_{s=0}^N B^s = I - B^{N+1}.
$$
Letting $N\to\infty$ and using $B^{N+1}\to 0$ yields
$$
(I-B)\sum_{s=0}^\infty B^s = I,
$$
so $(I-B)$ is invertible and $\sum_{s=0}^\infty B^s=(I-B)^{-1}$.
\end{proof}

\subsection{Proof of Lemma~\ref{lemma1}.}

Recall the $(n+1)$-dimensional dynamics
\begin{align}
e_{t+1} &= (1-\delta)\tilde W e_t + (1-\delta)\mu \mathbf{1}\, q_t \text{ with } \tilde W = (1-\mu)W,
\label{contagion_lemma}\\
q_{t+1} &= \gamma q_t + (1-\gamma)\bar e_t \text{ with } \bar e_t:=\tfrac1n\mathbf{1}^\top e_t,
\label{multiplier_lemma}
\end{align}
where $e_t=(e_{1,t},\dots,e_{n,t})^\top$ collects local distortions and $q_t$ is the AI distortion. Recall that $W=\alpha S$ with $S$ primitive and row-stochastic, so $W$ is nonnegative and primitive as well.

Define
$$
M := (1-\delta)\tilde W = (1-\delta)(1-\mu)W
\text{ and }
c_t := (1-\delta)\mu(1-\kappa)\,\mathbf{1}\,q_t.
$$
Then \eqref{contagion_lemma} can be written as
\begin{equation}\label{eq:affine}
e_{t+1} = M e_t + c_t \quad \text{for all } t,
\end{equation}
where $M\in\mathbb{R}^{n\times n}$ is nonnegative and primitive (since $W$ is). We proceed in several claims. 
    
\begin{claim}[Variation of constants]\label{claim1}

For all $t\ge1$,
\begin{equation}\label{eq:var-const}
e_t = M^t e_0 + \sum_{s=0}^{t-1} M^s c_{t-1-s}.
\end{equation}

\end{claim}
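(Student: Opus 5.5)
The formula follows by induction on $t$, using only the affine recursion \eqref{eq:affine}, $e_{t+1}=Me_t+c_t$. No spectral or Perron--Frobenius property of $M$ is needed at this stage; the claim is purely algebraic and holds for any square $M$ and any input sequence $(c_t)$.

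\emph{Base case $t=1$.} The recursion at $t=0$ gives $e_1=Me_0+c_0$. The right-hand side of \eqref{eq:var-const} with $t=1$ is $M^1e_0+\sum_{s=0}^{0}M^s c_{0-s}=Me_0+c_0$, so the two agree.

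\emph{Inductive step.} Assume \eqref{eq:var-const} holds for some $t\ge1$. Applying \eqref{eq:affine} gives
$$
e_{t+1}=M\Big(M^te_0+\sum_{s=0}^{t-1}M^s c_{t-1-s}\Big)+c_t
=M^{t+1}e_0+\sum_{s=0}^{t-1}M^{s+1}c_{t-1-s}+c_t .
$$
Reindex the sum with $s'=s+1$, so $s'$ runs from $1$ to $t$ and the subscript becomes $t-1-s=t-s'$. The sum turns into $\sum_{s'=1}^{t}M^{s'}c_{t-s'}$. The leftover term $c_t$ equals $M^0c_{t-0}$, which is exactly the missing $s'=0$ term. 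Combining, we get
$$
e_{t+1}=M^{t+1}e_0+\sum_{s'=0}^{t}M^{s'}c_{(t+1)-1-s'},
$$
which is \eqref{eq:var-const} at $t+1$.

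The only step requiring care is this index shift: one must check that the stray $c_t$ fills precisely the $s=0$ slot and that the subscripts $t-1-s$ line up after reindexing. Nothing else is delicate. Linearity of matrix multiplication over the finite sum is immediate, and no convergence question arises because all sums are finite.
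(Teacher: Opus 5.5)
Your proof is correct and follows essentially the same route as the paper's: induction on $t$ from the base case $e_1=Me_0+c_0$, with the inductive step reindexing $r=s+1$ so that the leftover $c_t$ supplies the $r=0$ term. Your remark that this identity is purely algebraic, holding for any square $M$ and any input sequence, is also accurate.
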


\begin{proof}
We argue by induction on $t$. For $t=1$, equation \eqref{eq:affine} with $t=0$ leads to $e_1 = M e_0 + c_0$, which coincides with \eqref{eq:var-const} for $t=1$. Suppose \eqref{eq:var-const} holds for some
$t\ge 1$. Then, using equation \eqref{eq:affine},
$$
e_{t+1}= M e_t + c_t = M^{t+1} e_0 + \sum_{s=0}^{t-1} M^{s+1} c_{t-1-s} + c_t.
$$

Now re-index the sum by letting $r := s+1$. So $r$ runs from $1$ to $t$, and $c_{t-1-s} = c_{t-r}$. Hence
$$
\sum_{s=0}^{t-1} M^{s+1} c_{t-1-s}
= \sum_{r=1}^{t} M^{r} c_{t-r}.
$$
Substituting this into the previous line gives
$$
e_{t+1} = M^{t+1} e_0 + \sum_{r=1}^{t} M^{r} c_{t-r} + c_t
= M^{t+1} e_0 + \sum_{r=0}^{t} M^{r} c_{t-r}.
$$
Thus, by induction, \eqref{eq:var-const} holds for all $t\ge 1$. 
\end{proof}

Before stating Claim~\ref{claim2}, denote by $\sigma(W)$ the set of eigenvalues of $W$ (its spectrum).

\paragraph{Notation for constants.}
Throughout the proof, $C_1,C_2,C_3,\dots$ denote generic positive constants whose values may change from line to line. They absorb fixed multiplicative factors (such as norms or bounded terms) and are used only to control orders of magnitude.

\begin{claim}[Perron-Frobenius decomposition of $M^t$]\label{claim2}

Since $W$ is primitive, the Perron-Frobenius theorem implies that $\alpha=\rho(W)$ is a simple eigenvalue of $W$ with positive right and left eigenvectors $v\gg0$ and $u\gg0$ such that

$$
Wv=\alpha v,\;\; u^\top W=\alpha u^\top\text{ and } u^\top v=1.
$$
Moreover, for any $\rho\in(|\lambda_2(S)|,1)$, there exists a constant $C_2>0$ such that for all $t\ge 0$,
\begin{equation}\label{eq:W-decomp}
W^t=\alpha^t v u^\top + R_t,
\qquad
\|R_t\|\le C_2(\alpha\rho)^t.
\end{equation}
Since $M=(1-\delta)(1-\mu)W$ and $m:=(1-\delta)(1-\mu)\alpha$, it follows that
\begin{equation}\label{eq:M-decomp}
M^t = m^t v u^\top + \widetilde R_t,
\qquad
\|\widetilde R_t\|\le C_3 (m\rho)^t
\end{equation}
for some constant $C_3>0$. Equivalently, defining
$\rho^\star := m\rho$, we have
$\|\widetilde R_t\|\le C_3 (\rho^\star)^t$. Finally, since $W=\alpha S$, the eigenvalues scale linearly, so that
$$
\lambda_k(W)=\alpha\lambda_k(S),
\qquad\text{hence}\qquad
\max_{k\ge 2}|\lambda_k(W)|=\alpha |\lambda_2(S)|.
$$

Thus the decay of subdominant modes is controlled by $|\lambda_2(S)|$ in the sense that the above bound holds for every $\rho\in(|\lambda_2(S)|,1)$. The resulting perturbation rate in the dynamics of $M^t$ is therefore $m\rho$, which is the rate appearing in Lemma~\ref{lemma1}(i).
\end{claim}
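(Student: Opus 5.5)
The plan is to derive everything from the Jordan (spectral) decomposition of the primitive row-stochastic matrix $S$ and then rescale. By Lemma~\ref{pf} applied to $S$ (primitive, nonnegative, row-stochastic), $\rho(S)=1$ is a simple eigenvalue with right eigenvector $\mathbf 1$ and left eigenvector $\pi\gg0$. Since $W=\alpha S$, the spectrum scales, $\sigma(W)=\alpha\,\sigma(S)$. Hence $\rho(W)=\alpha$ is simple, with the same eigenvectors $v=\mathbf 1$ and $u=\pi$ normalized by $u^\top v=\pi^\top\mathbf 1=1$. This also gives directly $\max_{k\ge2}|\lambda_k(W)|=\alpha|\lambda_2(S)|$.

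For the bound \eqref{eq:W-decomp}, I will set $P:=vu^\top$, the spectral projector onto the Perron eigenspace. It satisfies $P^2=P$, $SP=PS=P$, and therefore $S^t=P+(S-P)^t$ for $t\ge1$. To check this, expand $S^t=(P+(S-P))^t$ and note that the cross terms vanish because $P(S-P)=(S-P)P=0$. The matrix $S-P$ has spectrum $\{0\}\cup\{\lambda_k(S):k\ge2\}$, since it kills $v$ and acts as $S$ on the complementary invariant subspace $\ker u^\top$. So $\rho(S-P)=|\lambda_2(S)|$.

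Now fix $\rho\in(|\lambda_2(S)|,1)$. By Lemma~\ref{gelfand} there is $T$ such that $\|(S-P)^t\|\le\rho^t$ for $t\ge T$. Taking $C_2:=\max\{1,\max_{t<T}\|(S-P)^t\|/\rho^t\}$, which is finite since it is a finite maximum, gives $\|(S-P)^t\|\le C_2\rho^t$ for all $t\ge0$. At $t=0$ the identity is adjusted by defining $R_0:=I-P$, which is bounded and absorbed into $C_2$. Multiplying by $\alpha^t$ yields $W^t=\alpha^tvu^\top+R_t$ with $R_t=\alpha^t(S-P)^t$ and $\|R_t\|\le C_2(\alpha\rho)^t$.

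For $M$, I write $M=(1-\delta)(1-\mu)W=mS$. The same argument, scaling by $m^t$ instead of $\alpha^t$, gives $M^t=m^tvu^\top+\widetilde R_t$ with $\widetilde R_t=m^t(S-P)^t$ and $\|\widetilde R_t\|\le C_2(m\rho)^t$, so one may take $C_3=C_2$. The degenerate case $\mu=1$ (so $m=0$) is trivial: $M^t=0$ for $t\ge1$.

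The only delicate step is the claim that $\rho(S-P)=|\lambda_2(S)|$, which requires an invariant-subspace argument rather than naive eigenvalue subtraction. I will handle it by noting that $\mathbb R^n=\mathrm{span}(v)\oplus\ker u^\top$, and that both summands are $S$-invariant: $u^\top Sx=u^\top x=0$ for $x\in\ker u^\top$. On the first summand $S-P$ acts as $0$; on the second it acts as $S$, whose restricted spectrum is $\sigma(S)\setminus\{1\}$ because $1$ is simple. The rest is routine.
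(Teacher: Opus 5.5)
Your argument is correct, but it takes a different route from the paper's.

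\textbf{The paper's route.} The paper works with the Jordan form $W=VJV^{-1}$. The (one-dimensional) Perron block contributes $\alpha^t vu^\top$. Each subdominant block is bounded explicitly via $J_\lambda^t=\sum_{k<d}\binom{t}{k}\lambda^{t-k}N^k$, which gives $\|J_\lambda^t\|\le C\,t^{d-1}|\lambda|^t$. The polynomial factor is then absorbed using $|\lambda|\le\alpha|\lambda_2(S)|<\alpha\rho$, and the bound is conjugated back with $\|V\|\,\|V^{-1}\|$.

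\textbf{Your route.} You work with $S$ and the spectral projector $P=\mathbf 1\pi^\top$. From $P(S-P)=(S-P)P=0$ you get $S^t=P+(S-P)^t$ for $t\ge1$. You compute $\rho(S-P)=|\lambda_2(S)|$ through the $S$-invariant splitting $\mathrm{span}(\mathbf 1)\oplus\ker\pi^\top$. Gelfand's formula (Lemma~\ref{gelfand}) then handles all Jordan-block and polynomial effects in one step. The rescalings to $W=\alpha S$ and $M=mS$ are immediate, with $C_3=C_2$.

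\textbf{What each buys.} Your argument is shorter and uses only tools already listed in the appendix. It also identifies $v=\mathbf 1$ and $u=\pi$ explicitly, which is exactly what Claim~\ref{claim3} later relies on; the paper's proof of this claim leaves $v,u$ generic. The paper's route gives more explicit structure for $R_t$: a finite sum of polynomial-times-geometric terms, with a constant expressed through $\|V\|\,\|V^{-1}\|$ and the block sizes. Your non-constructive Gelfand step does not provide this. Both arguments rest on the same fact, algebraic simplicity of the Perron root. The paper uses it to make the Perron block $1\times1$; you use it to exclude $1$ from the spectrum of $S$ restricted to $\ker\pi^\top$.

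\textbf{Minor bookkeeping.} At $t=0$ the remainder is $R_0=I-P$. This is an oblique projection whose norm can exceed $\|I\|$, so $C_2$ must be at least $\|I-P\|$; your displayed maximum $\max\{1,\max_{t<T}\|(S-P)^t\|/\rho^t\}$ does not guarantee this. You do say $R_0$ is ``absorbed'', so this only concerns stating the constant correctly.
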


\begin{proof}
This follows from the Perron-Frobenius theorem for primitive nonnegative matrices
(See Lemma~\ref{pf}) together with the Jordan decomposition of $W$. Since $W$ is primitive, it admits a Jordan decomposition
$$
W = V J V^{-1},
$$
where $J$ consists of one Jordan block associated with the Perron eigenvalue $\alpha$
and finitely many Jordan blocks associated with eigenvalues
$\lambda\in\sigma(W)\setminus\{\alpha\}$ satisfying $|\lambda|<\alpha$.
Hence
$$
W^t = VJ^tV^{-1} = \alpha^t v u^\top + R_t,
$$
where $v\gg0$ and $u\gg0$ are the Perron right and left eigenvectors normalized so that
$u^\top v=1$, and $R_t$ collects the contribution of all subdominant Jordan blocks. Let $J_\lambda$ be a Jordan block of size $d$ associated with an eigenvalue $\lambda$.
Writing $J_\lambda=\lambda I+N$, where $N$ is nilpotent with $N^d=0$, the binomial expansion gives
$$
J_\lambda^t=\sum_{k=0}^{d-1}\binom{t}{k}\lambda^{\,t-k}N^k.
$$
For the submultiplicative matrix norm fixed at the beginning of the Appendix,

$$
\|J_\lambda^t\|
\le \sum_{k=0}^{d-1}\binom{t}{k}|\lambda|^{\,t-k}\|N^k\|
\le C_0 \sum_{k=0}^{d-1}\binom{t}{k}|\lambda|^{\,t-k},
$$
with $C_0=\max_{0\le k\le d-1}\|N^k\|$. Since $d$ is fixed and $k\le d-1$, for all $t\ge1$ we have $\binom{t}{k}\le t^{d-1}$. Hence
$$
\|J_\lambda^t\|
\le C_0 t^{d-1}
\sum_{k=0}^{d-1}|\lambda|^{\,t-k}
\le C\,t^{d-1}|\lambda|^t,
$$
for some constant $C>0$. Now fix any $\rho\in(|\lambda_2(S)|,1)$. Since $W=\alpha S$, the eigenvalues scale linearly, so that

$$
\lambda_k(W)=\alpha\lambda_k(S),
\text{ hence }
\max_{k\ge2}|\lambda_k(W)|=\alpha|\lambda_2(S)|.
$$
Therefore, for every subdominant eigenvalue $\lambda$ of $W$,
$$
|\lambda|\le \alpha|\lambda_2(S)|<\alpha\rho.
$$

Hence, for each subdominant Jordan block associated with $\lambda$, there exists a constant
$C_{1,\lambda}>0$ such that
$$
t^{d-1}|\lambda|^t \le C_{1,\lambda}(\alpha\rho)^t \text{ for all } t\ge0.
$$
Therefore each subdominant Jordan block is bounded by
$C_{1,\lambda}(\alpha\rho)^t$. Since there are finitely many subdominant Jordan blocks, let $C_1:=\max C_{1,\lambda}$,
where the maximum is taken over all subdominant Jordan blocks. Then every subdominant Jordan block is bounded by
$C_1(\alpha\rho)^t.$ Since $W^t = VJ^tV^{-1}$, conjugation by $V$ and $V^{-1}$ yields

$$
\|R_t\|
\le \|V\|\,\|V^{-1}\|
\sum_{\lambda\in\sigma(W)\setminus\{\alpha\}} \|J_\lambda^t\|.
$$
Because there are finitely many subdominant Jordan blocks, the above bounds imply that
$$
\|R_t\|\le C_2(\alpha\rho)^t
$$

\noindent for some constant $C_2>0$. Since $M=(1-\delta)(1-\mu)W$ and $m=(1-\delta)(1-\mu)\alpha$, multiplying \eqref{eq:W-decomp} by $\bigl((1-\delta)(1-\mu)\bigr)^t$ gives
$$
M^t = m^t v u^\top + \widetilde R_t,
$$
with $\|\widetilde R_t\|\le C_3 (m\rho)^t$ for some constant $C_3>0$. Finally, defining $\rho^\star := m\rho$, we obtain $\|\widetilde R_t\|\le C_3(\rho^\star)^t$. Thus the decay of subdominant modes is controlled by $|\lambda_2(S)|$ in the sense that the above bound holds for every $\rho\in(|\lambda_2(S)|,1)$.
\end{proof}

\begin{claim}[Aggregate distortion representation $\bar e_t$]\label{claim3}
Recall $\bar e_t:=\tfrac1n \mathbf{1}^\top e_t$. Then for each $t\ge 1$ we can write
\begin{equation}\label{eq:barEt}
\bar e_t = m^t C_0 + (1-\delta)\mu(1-\kappa)\sum_{s=0}^{t-1} m^s q_{t-1-s} + \varepsilon_t,
\end{equation}
where $m=(1-\delta)(1-\mu)\alpha$ and $C_0 := \pi^\top e_0$, with $\pi$ the unique stationary distribution of $S$ (i.e.\ $\pi^\top S=\pi^\top$ and $\pi^\top\mathbf{1}=1$). Moreover, for any $\rho\in(|\lambda_2(S)|,1)$, there exists a constant $C_0'<\infty$ such that
\begin{equation}\label{eq:eps-bound-claim3}
|\varepsilon_t|\le C_0'(m\rho)^t.
\end{equation}
Equivalently, defining $\rho^\star:=m\rho$, we have $|\varepsilon_t|\le C_0'(\rho^\star)^t$. In particular, the remainder vanishes geometrically relative to the dominant eigenmode, in the sense that $\frac{|\varepsilon_t|}{m^t}\le C_0'\rho^t\to 0$.
\end{claim}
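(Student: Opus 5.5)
The plan is to apply the aggregation functional $\tfrac1n\mathbf{1}^\top$ to the variation-of-constants formula of Claim~\ref{claim1} and then use the Perron--Frobenius decomposition of Claim~\ref{claim2}. The key is to identify the Perron eigenvectors concretely. Since $S$ is row-stochastic and primitive, $S\mathbf{1}=\mathbf{1}$ and $\pi^\top S=\pi^\top$, so $W=\alpha S$ has $v=\mathbf{1}$ and $u=\pi$. The normalization $u^\top v=\pi^\top\mathbf{1}=1$ holds automatically, and by simplicity of the Perron root this is the pair used in Claim~\ref{claim2}. Hence \eqref{eq:M-decomp} reads
$$
M^t=m^t\,\mathbf{1}\pi^\top+\widetilde R_t,\qquad \|\widetilde R_t\|\le C_3(m\rho)^t .
$$

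First I handle the initial-condition term. We have $\tfrac1n\mathbf{1}^\top M^t e_0 = m^t\,\tfrac1n(\mathbf{1}^\top\mathbf{1})\,\pi^\top e_0+\tfrac1n\mathbf{1}^\top\widetilde R_t e_0$. The first piece equals $m^t\pi^\top e_0=m^tC_0$. I therefore define
$$
\varepsilon_t:=\tfrac1n\mathbf{1}^\top\widetilde R_t e_0 .
$$

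Second, I treat the injection sum, and this is the step I expect to carry the argument. Since $M\mathbf{1}=(1-\delta)(1-\mu)\alpha S\mathbf{1}=m\mathbf{1}$, induction gives $M^s\mathbf{1}=m^s\mathbf{1}$ exactly. It follows that $\widetilde R_s\mathbf{1}=M^s\mathbf{1}-m^s\mathbf{1}(\pi^\top\mathbf{1})=0$ for every $s$. So $\tfrac1n\mathbf{1}^\top M^s c_{t-1-s}=(1-\delta)\mu(1-\kappa)\,m^s q_{t-1-s}$ with no remainder at all. Summing over $s=0,\dots,t-1$ produces the middle term of \eqref{eq:barEt}. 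The whole point is that the AI injection is proportional to $\mathbf{1}$ and therefore lies in the dominant right eigenspace. Were this not exact, the remainders $\widetilde R_s c_{t-1-s}$ would involve the unknown path $q$ and could not be bounded uniformly by $(m\rho)^t$.

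Third, I bound the remainder. Using compatibility of the vector and matrix norms (equivalence of norms in finite dimension, as noted at the start of the Appendix), $|\varepsilon_t|\le \tfrac1n\|\mathbf{1}\|_*\,\|\widetilde R_t\|\,\|e_0\|\le C_0'(m\rho)^t$ with $C_0':=\tfrac1n\|\mathbf{1}\|_*C_3\|e_0\|$, where $\|\cdot\|_*$ is the dual norm. Dividing by $m^t$ gives $|\varepsilon_t|/m^t\le C_0'\rho^t\to0$ because $\rho<1$. Setting $\rho^\star=m\rho$ gives the equivalent form. The main obstacle is really only the careful bookkeeping showing that $u=\pi$, $v=\mathbf{1}$ and $\widetilde R_s\mathbf{1}=0$; after that the bound is immediate.
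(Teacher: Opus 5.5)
Your proposal is correct and follows the paper's proof essentially step for step. You premultiply the variation-of-constants formula by $\tfrac1n\mathbf{1}^\top$, use $M^t=m^t\mathbf{1}\pi^\top+\widetilde R_t$ to extract $m^tC_0$ and set $\varepsilon_t=\tfrac1n\mathbf{1}^\top\widetilde R_t e_0$, and invoke the exact identity $M^s\mathbf{1}=m^s\mathbf{1}$ to show the AI-injection term has no subdominant component. Your explicit check that $(v,u)=(\mathbf{1},\pi)$ is the normalized Perron pair, and your use of the dual norm in the final bound, are small clarifications of steps the paper states more loosely.
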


In \eqref{eq:barEt}, the first term gives the contribution of the dominant eigenmode from the initial condition, the second term gives the cumulative effect of AI-originating injections (which enter along $\mathbf{1}$), and $\varepsilon_t$ collects the contribution of subdominant network modes. The bound \eqref{eq:eps-bound-claim3}
shows that these subdominant contributions decay geometrically relative to the dominant eigenmode, in the sense that $|\varepsilon_t|/m^t \le C\rho^t$ for some $\rho\in(|\lambda_2(S)|,1)$.

\begin{proof}
Starting from \eqref{eq:var-const} and premultiplying by $\tfrac1n\mathbf{1}^\top$, we obtain
$$
\bar e_t = \frac{1}{n}\mathbf{1}^\top M^t e_0 + \frac{1}{n}\sum_{s=0}^{t-1}\mathbf{1}^\top M^s c_{t-1-s},
$$
where $M=(1-\delta)(1-\mu)W=mS$ and $c_t=(1-\delta)\mu(1-\kappa)q_t\,\mathbf{1}$. We analyze these two contributions separately.

\medskip
\noindent\emph{(a) Contribution of the initial condition (term depending on $e_0$).}
Because $S$ is primitive and row-stochastic, it admits a unique stationary distribution $\pi$ satisfying $\pi^\top S=\pi^\top$ and $\pi^\top\mathbf{1}=1$. Since $M=mS$, we have $\pi^\top M = m\pi^\top$ and $M\mathbf{1}=m\mathbf{1}$. Thus $\mathbf{1}$ and $\pi$ are respectively the Perron right and left eigenvectors of $M$ associated with the Perron eigenvalue $m$, normalized so that $\pi^\top\mathbf{1}=1$. Fix any $\rho\in(|\lambda_2(S)|,1)$. By Claim~\ref{claim2},

$$M^t = m^t\,\mathbf{1}\pi^\top + \widetilde R_t$$
where $\widetilde R_t$ collects the subdominant modes and satisfies $\|\widetilde R_t\|\le C_3(m\rho)^t$
for some constant $C_3>0$. Premultiplying by $\tfrac1n\mathbf{1}^\top$ yields
$$
\frac{1}{n}\mathbf{1}^\top M^t e_0 = \frac{1}{n}\mathbf{1}^\top\bigl(m^t\mathbf{1}\pi^\top e_0 + \widetilde R_t e_0\bigr)
= m^t\Bigl(\frac{1}{n}\mathbf{1}^\top\mathbf{1}\Bigr)\pi^\top e_0 + \frac{1}{n}\mathbf{1}^\top \widetilde R_t e_0.
$$
Since $\frac{1}{n}\mathbf{1}^\top\mathbf{1}=1$, this becomes
$$
\frac{1}{n}\mathbf{1}^\top M^t e_0 = m^t C_0 + \varepsilon_t,\;\; C_0:=\pi^\top e_0 \text{ and } \varepsilon_t:=\frac{1}{n}\mathbf{1}^\top \widetilde R_t e_0.
$$

To bound $\varepsilon_t$, use the vector norm associated with the matrix norm fixed at the beginning of the Appendix. Then,
$$
|\varepsilon_t| \le \frac{1}{n}\,\|\mathbf{1}\|\,\|\widetilde R_t\|\,\|e_0\| \le \frac{1}{n}\,\|\mathbf{1}\|\,C_3(m\rho)^t\,\|e_0\|\le C_0'(m\rho)^t
$$
for some constant $C_0'<\infty$ absorbing $\frac{1}{n}\,\|\mathbf{1}\|\,\|e_0\|$ and $C_3$. Equivalently, defining $\rho^\star:=m\rho$, we have $|\varepsilon_t|\le C_0'(\rho^\star)^t$. Dividing by $m^t$ yields $$\frac{|\varepsilon_t|}{m^t}\le C_0'\rho^t\to 0.$$

\medskip
\noindent\emph{(b) Contribution of AI-originating injections (term depending on $c_{t-1-s}$).}
Recall that $c_{t-1-s}=(1-\delta)\mu(1-\kappa)q_{t-1-s}\,\mathbf{1}$. Then,
$$
\mathbf{1}^\top M^s c_{t-1-s} = (1-\delta)\mu(1-\kappa)\,q_{t-1-s}\,\mathbf{1}^\top M^s \mathbf{1}.
$$
Since $M\mathbf{1}=m\mathbf{1}$, we have $M^s\mathbf{1}=m^s\mathbf{1}$ for all $s\ge 0$, and therefore $\mathbf{1}^\top M^s\mathbf{1} = \mathbf{1}^\top(m^s\mathbf{1}) = m^s\,\mathbf{1}^\top\mathbf{1}$. Hence,
$$
\frac{1}{n}\sum_{s=0}^{t-1}\mathbf{1}^\top M^s c_{t-1-s} = (1-\delta)\mu(1-\kappa)\,\frac{1}{n}\sum_{s=0}^{t-1} q_{t-1-s}\,\mathbf{1}^\top M^s\mathbf{1} = (1-\delta)\mu(1-\kappa)\sum_{s=0}^{t-1} m^s q_{t-1-s},
$$
since $\frac{1}{n}\mathbf{1}^\top\mathbf{1}=1$. In particular, the AI-injection term lies entirely in the Perron right-eigenspace spanned by $\mathbf{1}$ and therefore does not generate any projection on subdominant modes.

\medskip
\noindent\emph{(c) Combining terms and identifying the remainder.}
Combining parts (a) and (b), we obtain
$$
\bar e_t=m^t C_0+(1-\delta)\mu(1-\kappa)\sum_{s=0}^{t-1} m^s q_{t-1-s}+\varepsilon_t,
$$
where $\varepsilon_t:=\frac{1}{n}\mathbf{1}^\top \widetilde R_t e_0$.
This gives \eqref{eq:barEt}. The bound \eqref{eq:eps-bound-claim3} follows from part (a), namely $|\varepsilon_t|\le C_0'(m\rho)^t$. Equivalently, if $\rho^\star:=m\rho$, then $|\varepsilon_t|\le C_0'(\rho^\star)^t$. Finally, $\frac{|\varepsilon_t|}{m^t}\le C_0'\rho^t\to 0$, which shows that the remainder vanishes geometrically relative to the dominant eigenmode. This establishes both \eqref{eq:barEt} and \eqref{eq:eps-bound-claim3}.
\end{proof}

\begin{claim}[One-step recursion for $\bar e_t$]\label{claim4}
For any $\rho\in(|\lambda_2(S)|,1)$, there exists a sequence $(\varepsilon_t)_{t\ge 0}$ such that $\bar e_{t+1} = m\,\bar e_t + b\,q_t + \varepsilon_{t+1}$, with $m=(1-\delta)(1-\mu)\alpha$ and $b=(1-\delta)\mu(1-\kappa)$. Moreover, there exists a constant $C<\infty$ such that
\begin{equation}\label{eq:claim4-eps-bound}
|\varepsilon_{t+1}| \le C\,(m\rho)^{t+1}.
\end{equation}
In particular, the perturbation vanishes geometrically relative to the dominant eigenmode, in the sense that $\frac{|\varepsilon_{t+1}|}{m^{t+1}}\le C\,\rho^{t+1}\to 0$. If $m\rho<1$, then the perturbation vanishes geometrically in absolute value.
\end{claim}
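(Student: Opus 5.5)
The plan is to obtain the one-step recursion directly from the exact $n$-dimensional law of motion, rather than by differencing the representation in Claim~\ref{claim3}, and then to identify the remainder with a quantity controlled by Claim~\ref{claim2}. Premultiply \eqref{eq:affine} by $\tfrac1n\mathbf 1^\top$. Since $c_t=b\,q_t\mathbf 1$ and $\tfrac1n\mathbf 1^\top\mathbf 1=1$, this gives
$$
\bar e_{t+1}=\tfrac1n\mathbf 1^\top M e_t + b\,q_t .
$$
It therefore suffices to define
$$
\varepsilon_{t+1}:=\tfrac1n\mathbf 1^\top M e_t - m\,\bar e_t=\tfrac1n\mathbf 1^\top (M-mI)e_t .
$$
The recursion $\bar e_{t+1}=m\bar e_t+bq_t+\varepsilon_{t+1}$ then holds identically, and the whole content of the claim is the bound \eqref{eq:claim4-eps-bound}.

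To bound $\varepsilon_{t+1}$, insert the variation-of-constants formula \eqref{eq:var-const} for $e_t$. The injection terms $M^s c_{t-1-s}$ are all proportional to $\mathbf 1$, because $M\mathbf 1=m\mathbf 1$, so $(M-mI)$ annihilates them. This is the same observation as part (b) of Claim~\ref{claim3}: AI injections have no subdominant component. Hence
$$
\varepsilon_{t+1}=\tfrac1n\mathbf 1^\top (M-mI)M^t e_0 .
$$
Now use the decomposition $M^t=m^t\mathbf 1\pi^\top+\widetilde R_t$ from Claim~\ref{claim2}, with $v=\mathbf 1$ and $u=\pi$. The dominant part is killed, since $(M-mI)\mathbf 1=0$. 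This leaves
$$
\varepsilon_{t+1}=\tfrac1n\mathbf 1^\top (M-mI)\widetilde R_t e_0,
\qquad
|\varepsilon_{t+1}|\le \tfrac1n\|\mathbf 1\|\,\|M-mI\|\,C_3(m\rho)^t\|e_0\| .
$$

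The remaining step is to convert $(m\rho)^t$ into $(m\rho)^{t+1}$. Since $\rho$ is fixed and $m>0$, we have $(m\rho)^t=(m\rho)^{t+1}/(m\rho)$, and the factor $1/(m\rho)$ can be absorbed into the constant $C$. A cleaner route, which I would try first, is to note that $\widetilde R_{t+1}=M\widetilde R_t=\widetilde R_t M$, because the Perron projection commutes with $M$. With $P:=\mathbf 1\pi^\top$ this gives $(M-mI)\widetilde R_t=\widetilde R_{t+1}-m\widetilde R_t$, and each piece is bounded by a constant times $(m\rho)^{t+1}$ directly. Either way the required bound follows. The relative statement $|\varepsilon_{t+1}|/m^{t+1}\le C\rho^{t+1}\to0$ is immediate, and if $m\rho<1$ the absolute bound vanishes geometrically.

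The main obstacle is bookkeeping consistency with Claim~\ref{claim3}. The remainder $\varepsilon_{t+1}$ defined here is not the difference $\varepsilon_{t+1}-m\varepsilon_t$ of Claim~\ref{claim3}'s remainders. It is a fresh sequence, so I would state it as a new definition. A reader may also object that the mean $\tfrac1n\mathbf 1^\top$ is not the left Perron vector, which is why $\tfrac1n\mathbf 1^\top M\neq m\tfrac1n\mathbf 1^\top$. The point to emphasize is that this discrepancy acts only on the subdominant part of $e_t$. That subdominant part originates solely from $e_0$, which is exactly what the annihilation argument above makes precise.
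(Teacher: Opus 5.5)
Your argument is correct and is essentially the paper's proof. Your second bounding route, via $(M-mI)\widetilde R_t=\widetilde R_{t+1}-m\widetilde R_t$, gives exactly the paper's estimate $C_0'(1+1/\rho)(m\rho)^{t+1}$; you merely read the residual off the one-step law of motion instead of differencing Claim~\ref{claim3} at $t$ and $t+1$. Note that your residual is not a fresh sequence: both constructions define the same quantity $\bar e_{t+1}-m\bar e_t-bq_t$, so it coincides with the paper's $\varepsilon_{t+1}-m\varepsilon_t$, and the bookkeeping worry in your last paragraph is moot.
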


\begin{proof}
Fix any $\rho\in(|\lambda_2(S)|,1)$. Starting from \eqref{eq:barEt} in Claim~\ref{claim3} at periods $t$ and $t+1$, we have
$$\bar e_{t+1} =m^{t+1} C_0 + b \sum_{s=0}^{t} m^s q_{t-s} + \varepsilon_{t+1} \text{ and } \bar e_t = m^t C_0 + b \sum_{s=0}^{t-1} m^s q_{t-1-s} + \varepsilon_t,
$$
with $b=(1-\delta)\mu(1-\kappa)$ and, by Claim~\ref{claim3} $|\varepsilon_t|\le C_0'(m\rho)^t$ for all $t\ge 0$ and some constant $C_0'>0$. Multiplying the second identity by $m$ yields to
$$
m\bar e_t = m^{t+1} C_0 + b \sum_{s=0}^{t-1} m^{s+1} q_{t-1-s} + m\varepsilon_t.
$$

Subtracting $m\bar e_t$ from $\bar e_{t+1}$ gives
$$
\bar e_{t+1}-m\bar e_t
= b\Biggl[\sum_{s=0}^{t} m^s q_{t-s} - \sum_{s=0}^{t-1} m^{s+1} q_{t-1-s}\Biggr] + (\varepsilon_{t+1}-m\varepsilon_t).
$$

Re-index the second sum by setting $r=s+1$:
$$
\sum_{s=0}^{t-1} m^{s+1} q_{t-1-s} = \sum_{r=1}^{t} m^r q_{t-r}.
$$
Hence the bracketed term telescopes: $\sum_{s=0}^{t} m^s q_{t-s} - \sum_{r=1}^{t} m^r q_{t-r} = q_t$. Therefore,
$$
\bar e_{t+1}-m\bar e_t = b q_t + (\varepsilon_{t+1}-m\varepsilon_t).
$$

Define the one-step perturbation $\widetilde{\varepsilon}_{t+1} := \varepsilon_{t+1}-m\varepsilon_t$. Then $$\bar e_{t+1} = m\bar e_t + b q_t + \widetilde{\varepsilon}_{t+1}.$$

For notational simplicity, relabel $\widetilde{\varepsilon}_{t+1}$ as $\varepsilon_{t+1}$ in the statement of the claim. It remains to bound this one-step perturbation. Using the bound from Claim~\ref{claim3}, we obtain
$$
|\widetilde{\varepsilon}_{t+1}| \le |\varepsilon_{t+1}| + |m|\,|\varepsilon_t| \le C_0' (m\rho)^{t+1} + m C_0' (m\rho)^t.
$$
Since $m (m\rho)^t=\frac{1}{\rho}(m\rho)^{t+1}$, it follows that $|\widetilde{\varepsilon}_{t+1}| \le C_0'\Bigl(1+\frac{1}{\rho}\Bigr)(m\rho)^{t+1}$. Hence there exists a constant $C<\infty$ (absorbing $1+\frac{1}{\rho}$ and $C_0'$) such that
$$
|\widetilde{\varepsilon}_{t+1}| \le C\,(m\rho)^{t+1}.
$$
Dividing by $m^{t+1}$ yields to
$$
\frac{|\widetilde{\varepsilon}_{t+1}|}{m^{t+1}} \le C\,\rho^{t+1}\to 0.
$$
If moreover $m\rho<1$, then $\widetilde{\varepsilon}_{t+1}\to 0$ geometrically in absolute value. This establishes \eqref{eq:claim4-eps-bound} and completes the proof.
\end{proof}

\begin{claim}[Coupling with AI feedback and two-dimensional representation]\label{claim5}
Let $x_t=(\bar e_t,q_t)^\top$. Then
$$
x_{t+1}=A(\mu,\kappa)x_t + \eta_{t+1},\;\; A(\mu,\kappa)=\begin{bmatrix} m & b\\ 1-\gamma & \gamma\end{bmatrix}, \;\; \eta_{t+1}=(\varepsilon_{t+1},0)^\top,
$$
where $(\varepsilon_t)$ is the sequence from Claim~\ref{claim4}. Moreover, for any $\rho\in(|\lambda_2(S)|,1)$:
\begin{enumerate}
\item[(i)] If $(\varepsilon_t)$ is bounded, then $(\eta_t)$ is bounded.
\item[(ii)] There exists $C<\infty$ such that $\|\eta_{t+1}\|\le C (m\rho)^{t+1}$ for all $t\ge 0$. Equivalently, $\|\eta_{t+1}\|/m^{t+1}\le C\rho^{t+1}$, so the perturbation vanishes geometrically relative to the dominant eigenmode.
\item[(iii)] If $m\rho<1$, then $\eta_t\to 0$ geometrically. Consequently, the aggregate dynamics are asymptotically equivalent to the autonomous system $x_{t+1}=A(\mu,\kappa)x_t$.
\end{enumerate}
\end{claim}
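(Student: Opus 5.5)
The plan is to stack the scalar recursion of Claim~\ref{claim4} on top of the AI feedback equation \eqref{multiplier_lemma}, which holds exactly, and then read off the three bounds from the bound already established for $(\varepsilon_t)$.

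\emph{Representation.} By Claim~\ref{claim4}, for every $t\ge0$ we have $\bar e_{t+1}=m\bar e_t+bq_t+\varepsilon_{t+1}$. The multiplier equation gives $q_{t+1}=(1-\gamma)\bar e_t+\gamma q_t$ with no remainder. This holds because retraining uses $\bar e_t$ itself, so no aggregation error enters the second row. Writing the two identities as rows of a matrix equation yields $x_{t+1}=A(\mu,\kappa)x_t+\eta_{t+1}$ with $\eta_{t+1}=(\varepsilon_{t+1},0)^\top$. This is a direct substitution and needs no further argument.

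\emph{Bounds.} For any vector norm, $\|\eta_{t+1}\|=\|(\varepsilon_{t+1},0)^\top\|=K|\varepsilon_{t+1}|$, where $K=\|(1,0)^\top\|$ is a fixed constant. This is the only place where the norm-equivalence remark at the start of the Appendix enters.
\begin{itemize}
\item Part (i) is then immediate: if $(\varepsilon_t)$ is bounded, so is $(\eta_t)$.
\item For part (ii), fix $\rho\in(|\lambda_2(S)|,1)$ and invoke \eqref{eq:claim4-eps-bound}. This gives $\|\eta_{t+1}\|\le KC(m\rho)^{t+1}$, and we relabel $KC$ as $C$. Dividing by $m^{t+1}$ (note $m>0$, since $\delta<1$ and $\alpha>0$; the degenerate case $\mu=1$ gives $m=0$ and $\eta\equiv0$ trivially) gives the relative bound $C\rho^{t+1}$.
\end{itemize}

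\emph{Asymptotic equivalence, part (iii).} If $m\rho<1$, part (ii) gives $\eta_t\to0$ geometrically. The remaining task is to make ``asymptotically equivalent'' precise. I would define $y_t$ by $y_{t+1}=Ay_t$ with $y_0=x_0$, and use variation of constants exactly as in Claim~\ref{claim1}:
\[
x_t-y_t=\sum_{s=0}^{t-1}A^s\eta_{t-s}.
\]
This step is the main obstacle, because the meaning of equivalence depends on $\rho(A)$.

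\begin{itemize}
\item \emph{Case $\rho(A)<1$.} Gelfand's formula (Lemma~\ref{gelfand}) gives $\|A^s\|\le C'r^s$ for any $r\in(\rho(A),1)$. The difference is then a convolution of two geometric sequences, bounded by $C''t\max(r,m\rho)^t\to0$. Hence $x_t-y_t\to0$ and both trajectories converge to $0$ (Lemma~\ref{discrete-stability}).
\item \emph{Case $\rho(A)\ge1$.} Here I would state equivalence in the relative sense. The same convolution bound, divided by $\rho(A)^t$ and combined with the summability $\sum\|\eta_s\|<\infty$, shows the perturbation changes at most the constant in front of the dominant growth $\rho(A)^t$. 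In that case I would state (iii) as saying that the growth rate $\log\rho(A)$ is unchanged.
\end{itemize}

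In both cases the stability classification is driven by $\rho(A)$ alone, which is the content of the claim and of Lemma~\ref{lemma1}(iii).
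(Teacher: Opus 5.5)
Your proposal is correct and follows the paper's proof. You stack Claim~\ref{claim4}'s recursion with the exact multiplier equation, carry over the bound \eqref{eq:claim4-eps-bound} to $\eta_{t+1}$, and use the variation-of-constants formula $x_t=A^tx_0+\sum_{s=0}^{t-1}A^{t-1-s}\eta_{s+1}$ for (iii). Your case split on $\rho(A)$ makes precise what the paper only asserts as ``asymptotically negligible relative to the dominant dynamics.'' One detail in the case $\rho(A)\ge1$: dividing by $\rho(A)^t$ needs $\|A^s\|\le K\rho(A)^s$, not the Gelfand bound with $r>\rho(A)$. This holds automatically when $b(1-\gamma)>0$, because $A$ then has distinct real eigenvalues.
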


\begin{proof}
The feedback loop gives 
$$
q_{t+1}=\gamma q_t + (1-\gamma)\bar e_t.
$$
Together with Claim~\ref{claim4}, which states that
$$
\bar e_{t+1}=m\bar e_t + b q_t + \varepsilon_{t+1},
$$
we can stack the two recursions to obtain
$$
\begin{bmatrix} \bar e_{t+1}\\ q_{t+1} \end{bmatrix}
=
\begin{bmatrix} m & b\\ 1-\gamma & \gamma \end{bmatrix}
\begin{bmatrix} \bar e_t\\ q_t \end{bmatrix}
+
\begin{bmatrix} \varepsilon_{t+1}\\ 0 \end{bmatrix}.
$$
This defines $A(\mu,\kappa)$ and $\eta_{t+1}=(\varepsilon_{t+1},0)^\top$, yielding the claimed recursion $x_{t+1} =A(\mu,\kappa)x_t+\eta_{t+1}$.

\medskip
\noindent\emph{(i) Boundedness.}
If $(\varepsilon_t)$ is bounded, then by definition $\eta_t=(\varepsilon_t,0)^\top$ is also bounded.

\medskip
\noindent\emph{(ii) Bound on the perturbation.}
Fix any $\rho\in(|\lambda_2(S)|,1)$. By Claim~\ref{claim4}, there exists $C<\infty$ such that $ |\varepsilon_{t+1}|\le C(m\rho)^{t+1}$ for all $t\ge 0$. Therefore,
$$
\|\eta_{t+1}\|=\|(\varepsilon_{t+1},0)\| \le C'(m\rho)^{t+1}
$$
for some constant $C'<\infty$, where $C'$ absorbs the norm-equivalence factor on $\mathbb{R}^2$. Renaming the constant if necessary, we obtain
$\|\eta_{t+1}\|\le C(m\rho)^{t+1}$. Dividing by $m^{t+1}$ yields to
$$
\frac{\|\eta_{t+1}\|}{m^{t+1}}\le C\rho^{t+1},
$$
which shows that the perturbation vanishes geometrically relative to the dominant eigenmode.

\medskip

\noindent\emph{(iii) Absolute decay and asymptotic equivalence when $m\rho<1$.}
If $m\rho<1$, then the bound in (ii) implies that $\eta_t\to 0$ geometrically. Iterating the recursion $x_{t+1}=A(\mu,\kappa)x_t+\eta_{t+1}$ yields to
$$
x_t = A(\mu,\kappa)^t x_0 + \sum_{s=0}^{t-1}A(\mu,\kappa)^{\,t-1-s}\eta_{s+1}.
$$
Since $\eta_s\to 0$ geometrically, the perturbation term is asymptotically negligible relative to the dominant dynamics. Hence the asymptotic growth or decay rate of $x_t$ is determined by the spectral radius of $A(\mu,\kappa)$, establishing the asymptotic equivalence stated in Lemma~\ref{lemma1}.
\end{proof}

Claim~\ref{claim4} establishes Lemma~\ref{lemma1}(i), and Claim~\ref{claim5} establishes Lemma~\ref{lemma1}(ii) and (iii). Combining Claims~\ref{claim1}-\ref{claim5} proves Lemma~\ref{lemma1}. \qed

\begin{rem}
The perturbation $\eta_t=(\varepsilon_t,0)^\top$ reflects the contribution of subdominant network modes through the remainder term in the Perron--Frobenius decomposition of $M^t$. In particular, Lemma~\ref{lemma1}(i) implies that $\|\eta_t\|\le C(m\rho)^t$ for any $\rho\in(|\lambda_2(S)|,1)$, so the perturbation vanishes geometrically relative to the dominant mode. If $m\rho<1$, then $\eta_t\to 0$ geometrically in absolute value, and the asymptotic behavior is determined by $\rho(A(\mu,\kappa))$ as in the unperturbed two-dimensional system.
\end{rem}

\subsection{Proof of Proposition~\ref{prop1}.}
Assume $q_t\equiv q>0$ and define $M:=(1-\delta)(1-\mu)W$ and $c:=(1-\delta)\mu(1-\kappa)q \mathbf{1}$. Let us use the variation of constants. Recursively unrolling $e_{t+1}=Me_t+c$ gives
\begin{equation}\label{var-const}
e_t = M^t e_0+\sum_{s=0}^{t-1} M^s c.
\end{equation}

If $\rho(M)<1$. By Lemma~\ref{lem:decay}, $M^t\to 0$. By Lemma~\ref{neumann}, the Neumann series $\sum_{s=0}^\infty M^s=(I-M)^{-1}$. Taking $t\to\infty$ in \eqref{var-const} yields to a fixed point $e_t\to e^\star=(I-M)^{-1}c$, and $\|e_t-e^\star\|=O(\rho(M)^t)$.

\medskip

If $\rho(M)>1$, let $v\gg 0$ be a Perron eigenvector: $Mv=\lambda v$, $|\lambda|=\rho(M)>1$. Left-multiplying \eqref{var-const} by a corresponding $u^\top\gg 0$ yields to $u^\top e_t = \lambda^t u^\top e_0 + \sum_{s=0}^{t-1}\lambda^s u^\top c$, which grows at least on the order of $\lambda^t$.

\medskip 

If $\rho(M)=1$ and $c\ne 0$, then $\sum_{s=0}^{t-1}M^s c$ is unbounded. Indeed, contributions associated with the eigenvalue one accumulate over time: growth is linear when the eigenvalue is semisimple, and polynomial of higher order when nontrivial Jordan blocks are present. Thus $\|e_t\|\to\infty$.

\medskip

Finally, since $M=(1-\delta)(1-\mu)W$, homogeneity of the single amplification factor (which is the spectral radius) implies
$$\rho(M)=(1-\delta)(1-\mu)\rho(W)=(1-\delta)(1-\mu)\alpha.$$ 
Therefore the (stability) condition $\rho(M)<1$ is equivalent to $(1-\delta)(1-\mu)\alpha<1$. \qed

\subsection{Proof of Proposition~\ref{prop2}.}

\paragraph{Stability and rate.} From \eqref{eq_2dim}, $x_{t+1}=A x_t$. By Lemma~\ref{discrete-stability}, $x_t\to 0$ for all $x_0$ iff $\rho(A)<1$, and there exist divergent trajectories if and only if $\rho(A)>1$. Moreover $\lim_{t\to\infty}\tfrac{1}{t}\log\|A^t\|=\log\rho(A)$ (Lemma~\ref{gelfand}), which is the exponential rate.

\paragraph{Frontier.} The characteristic polynomial is $\lambda^2-(m+\gamma)\lambda + (m\gamma - b(1-\gamma))$. The boundary $\rho(A)=1$ for nonnegative $A$ occurs at $\lambda=1$ so $\det(A-I)=0$, i.e., $(m-1)(\gamma-1)-b(1-\gamma)=0\Rightarrow b=1-m$. Substituting $m=(1-\delta)(1-\mu)\alpha$, $b=(1-\delta)\mu(1-\kappa)$ gives \eqref{eq:frontier}. \qed

\subsection{Proof of Corollary~\ref{cor1}.}

Since $A(\alpha,\mu,\delta,\kappa,\gamma)\ge 0$ and
$m=(1-\delta)(1-\mu)\alpha$ and $b=(1-\delta)\mu(1-\kappa)$, we have: $\partial m/\partial\alpha>0$, $\partial m/\partial\delta<0$, and $\partial b/\partial\kappa<0$, and also $\partial b/\partial\delta<0$. Hence, increasing $\alpha$ increases $A$ entrywise through $m$, while increasing $\kappa$ or $\delta$ decreases $A$ entrywise through $b$ and/or $m$. By Lemma~\ref{perron}, $\rho(A)$ is increasing in $\alpha$ and decreasing in $\kappa$ and $\delta$.

Dependence on $\mu$ and $\gamma$ is not globally monotone because these parameters move entries of $A$ in opposite directions: $m(\mu)$ decreases in $\mu$ while $b(\mu)$ increases in $\mu$, and increasing $\gamma$ decreases the off-diagonal feedback term $(1-\gamma)$ but increases the diagonal persistence term $\gamma$. Nevertheless, the policy frontier $\kappa^\star(\mu)$ is increasing in $\mu$ on the interior region where it lies in $(0,1)$. Indeed, for $\mu\in(0,1)$ the boundary condition $\rho(A)=1$ is equivalent to $b=1-m$, i.e.
$$
(1-\delta)\mu(1-\kappa)=1-(1-\delta)(1-\mu)\alpha.
$$
Solving yields to
$$
\kappa^\star(\mu)=1-\frac{1-\alpha(1-\delta)(1-\mu)}{(1-\delta)\mu}.
$$
Differentiating on this interior region gives
$$
\frac{d\kappa^\star}{d\mu}=\frac{1-\alpha(1-\delta)}{(1-\delta)\mu^2}.
$$
Thus, whenever $1-\alpha(1-\delta)>0$ (a natural condition ensuring the no-AI benchmark is not explosive), we have $d\kappa^\star/d\mu>0$: higher AI penetration strengthens the injection channel relative to attenuation and raises the stabilizing filtering requirement. \qed

\subsection{Proof of Proposition \ref{prop3}.}

Since $S$ is primitive and row-stochastic, $\rho(S)=1$ and the Perron eigenvalue is $1$. Because $W=\alpha S$, homogeneity of the spectral radius implies $$\rho(W)=\alpha\rho(S)=\alpha.$$

Lemma~\ref{lemma1} (dimensional reduction) shows that the aggregate dynamics satisfy
$$
\bar e_{t+1}=m\bar e_t+bq_t+\varepsilon_{t+1}, \text{ with }
m=(1-\delta)(1-\mu)\alpha \text{ and }
b=(1-\delta)\mu(1-\kappa),$$

\noindent where the residual term $\varepsilon_t$ from Lemma~\ref{lemma1} (the one-step perturbation) captures the contribution of subdominant eigenmodes of the social propagation operator. Because $M=mS$, the eigenvalues of $M$ are $m\lambda_i(S)$, so the decay of subdominant modes is characterized by $m|\lambda_2(S)|$. More precisely, for any $\rho\in(|\lambda_2(S)|,1)$, Lemma~\ref{lemma1} implies $$|\varepsilon_{t+1}|\le C(S)(m\rho)^{t+1}.$$

This bound formalizes how homophily affects the persistence of subdominant modes. While $m$ determines the amplification strength of the dominant eigenmode, the relative persistence of deviations from that mode is measured by $|\lambda_2(S)|$. A larger second eigenvalue implies a smaller spectral gap and therefore slower convergence toward the dominant eigenmode. This establishes the claimed persistence ordering with homophily.

\subsection{Proof of Proposition \ref{prop4}.}

Since $S$ is row-stochastic, we have $S\mathbf 1=\mathbf 1$, so $1$ is an eigenvalue of $S$.
Moreover, for any row-stochastic matrix, all eigenvalues satisfy $|\lambda|\le 1$.
Hence $\rho(S)=1$ for all $\epsilon$.
With $W=\alpha S$, it follows that
$$
\rho(W)=\rho(\alpha S)=\alpha\rho(S)=\alpha
$$
for all $\epsilon$. By Lemma~1, the reduced $2\times2$ system has coefficients
$$m=(1-\delta)(1-\mu)\rho(W)=(1-\delta)(1-\mu)\alpha \text{ and } b=(1-\delta)\mu(1-\kappa).$$
The stability boundary is $\rho(A)=1$, which in the reduced system is equivalent to $b=1-m$.
Thus
$$
(1-\delta)\mu(1-\kappa)=1-(1-\delta)(1-\mu)\alpha.
$$
Solving for $\kappa$ yields to
$$
\kappa^\star(\mu)=1-\frac{1-\alpha(1-\delta)(1-\mu)}{(1-\delta)\mu},
$$
which establishes the stated stability frontier. Because $S$ is primitive, it admits a unique stationary distribution $\pi(\epsilon)$ satisfying $\pi(\epsilon)^\top S=\pi(\epsilon)^\top$. Assumption~\ref{ass:CP}(ii) states that as $\epsilon\to 0$,
$$
\sum_{i\in C}\pi_i(\epsilon)\to 1,
$$

\noindent so the stationary influence weights concentrate on the core. This implies that distortions injected on nodes in the informational core receive asymptotically larger long-run influence weight in aggregate outcomes than distortions injected on peripheral nodes, establishing claim~(i).

\medskip 

Claim~(ii) follows from the targeted-policy analysis in Appendix \ref{app:Iproj}. That analysis shows that the marginal reduction in systemic informational risk from filtering agent $i$ is increasing in its influence weight $\pi_i$. Hence, for any fixed filtering budget, allocating filtering effort toward agents with larger $\pi_i$, which asymptotically correspond to the most influential nodes as $\epsilon\to 0$ by Assumption~\ref{ass:CP}(ii), yields strictly larger reductions in systemic informational risk than uniform filtering. \qed

\subsection{Proof of Proposition \ref{prop:micro_macro}.}

The comparative statics in the following Lemmas~\ref{lem:MCS} and~\ref{lem:mb} establish Proposition \ref{prop:micro_macro}.

\medskip

\begin{lemma}[Monotone comparative statics of $a^\star$ and $\bar a^\star$]\label{lem:MCS}
Consider the linear–quadratic network game with best reply
$$
a^{\star} = (I-\psi W)^{-1}\beta\mathbf{1}, \qquad \beta = \frac{\theta-\lambda\mu q}{c},\quad \psi=\frac{\phi}{c},
$$
defined whenever $\psi\rho(W)<1$ (equivalently $\psi\alpha<1$). Assume parameters are such that $\beta\ge 0$. Then:
\begin{enumerate}
\item[(i)] (\emph{Network strength}) If $W'\ge W$ entrywise, then 
$$
a^\star(W')\ge a^\star(W) \quad\text{and}\quad \bar a^\star(W')\ge \bar a^\star(W),
$$
where $\bar a^\star = \frac1n \mathbf{1}^\top a^\star$.

\item[(ii)] (\emph{AI share and AI error}) $a^\star$ and $\bar a^\star$ are (weakly) decreasing in 
$\mu$ and in $q$.

\item[(iii)] (\emph{Complementarities and costs}) $a^\star$ and $\bar a^\star$ are increasing in 
$\psi$ and in $\theta$, and decreasing in $c$ and $\lambda$.
\end{enumerate}
\end{lemma}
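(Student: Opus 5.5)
The plan rests on one structural fact: when $\psi\rho(W)<1$, the Neumann expansion of Lemma~\ref{lem:decay} represents the equilibrium as
$$
a^\star=\beta\sum_{k=0}^\infty \psi^k W^k\mathbf{1}=\beta\, b(W,\psi),
$$
where $b(W,\psi):=(I-\psi W)^{-1}\mathbf{1}$ is the Katz--Bonacich vector. Since $W\ge 0$ and $\psi>0$, every term $\psi^kW^k\mathbf{1}$ is entrywise nonnegative. So $b(W,\psi)\ge \mathbf{1}\gg 0$ and $(I-\psi W)^{-1}\ge 0$ entrywise. Every comparative static then reduces to a sign argument on this series, or on the scalar $\beta$. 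The averaged statements follow by premultiplying by $\tfrac1n\mathbf{1}^\top\ge 0$, which preserves entrywise inequalities.

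\emph{Step 1 (part (i)).} Take $W'\ge W\ge 0$ with $\psi\rho(W')<1$, so that both equilibria are defined. Lemma~\ref{perron} gives $\rho(W)\le\rho(W')$, so both series converge. Induction gives $W'^k\ge W^k$ entrywise for all $k$, since products of nonnegative matrices are monotone in each factor. Summing the series yields $b(W',\psi)\ge b(W,\psi)$. Multiplying by $\beta\ge 0$ gives $a^\star(W')\ge a^\star(W)$, and averaging gives the statement for $\bar a^\star$.

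\emph{Step 2 (parts (ii) and (iii)).} The parameters $\mu,q,\theta,\lambda$ enter only through $\beta=(\theta-\lambda\mu q)/c$, and $b(W,\psi)$ does not depend on them. Hence $\partial a^\star/\partial \mu=-(\lambda q/c)\,b(W,\psi)\le 0$, and symmetrically for $q$; the inequality is weak because it is an equality when $q=0$ or $\mu=0$. The same argument gives monotonicity in $\theta$ (increasing) and $\lambda$ (decreasing). For $\psi$, differentiate the series termwise: $\partial_\psi b=\sum_{k\ge1}k\psi^{k-1}W^k\mathbf{1}\ge 0$. Equivalently, $\partial_\psi (I-\psi W)^{-1}=(I-\psi W)^{-1}W(I-\psi W)^{-1}\ge 0$. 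Combined with $\beta\ge0$, this gives monotonicity in $\psi$.

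\emph{Step 3 (dependence on $c$).} This is the one mildly delicate step, because $c$ enters both $\beta$ and $\psi=\phi/c$. Raising $c$ lowers $\beta$ when $\theta-\lambda\mu q\ge 0$, and it also lowers $\psi$, which lowers $b(W,\psi)$ by Step~2. Both factors are nonnegative and both are nonincreasing in $c$, so their product $a^\star$ is nonincreasing in $c$. Raising $c$ also preserves $\psi\rho(W)<1$, so the equilibrium stays well defined.

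The main point to watch throughout is the standing assumption $\beta\ge 0$. Without it, multiplying the monotone vector $b$ by $\beta$ would reverse the inequalities in (i) and in $\psi$. The proof should also make sure that the domain condition $\psi\rho(W)<1$ is maintained along each comparison, which is why Lemma~\ref{perron} is needed in Step~1.
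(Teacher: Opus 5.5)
Your proposal is correct and follows the paper's proof essentially step for step. Both arguments use the Neumann-series representation $a^\star=\beta\sum_{k\ge 0}(\psi W)^k\mathbf{1}$ and then proceed the same way: induction on entrywise powers for (i), linearity in $\beta$ with nonnegative coefficients for $\mu,q,\theta,\lambda$, termwise monotonicity in $\psi$, and the two-channel argument for $c$. Two of your steps are small tightenings of points the paper leaves implicit: the check (via Lemma~\ref{perron}) that $\psi\rho(W')<1$ keeps both equilibria well defined, and the product-of-nonnegative-nonincreasing-factors argument for $c$.
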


\begin{proof}
Since $\psi\rho(W)<1$ and $W\ge 0$, we can use the Neumann series representation
$$
(I-\psi W)^{-1} = \sum_{k=0}^{\infty} (\psi W)^k,
$$
so that
\begin{equation}\label{eq:astar-Neumann}
a^\star = \beta \sum_{k=0}^{\infty} (\psi W)^k \mathbf{1}.
\end{equation}

\medskip
\noindent\emph{(i) Monotonicity in $W$.}
Suppose $W'\ge W$ entrywise, both nonnegative. Then for any integer $k\ge 1$,
$$
(W')^k \ge W^k
$$
entrywise, which can be shown by induction on $k$.\footnote{If $W'\ge W\ge 0$, then 
$W'^2 = W'(W') \ge W'W \ge WW$, and similarly for higher powers.} Hence
$$
(\psi W')^k \ge (\psi W)^k\quad\text{for all }k\ge 1,
$$
and the same holds for the matrix series. Therefore
$$
\sum_{k=0}^\infty (\psi W')^k\mathbf{1} \;\ge\; \sum_{k=0}^\infty (\psi W)^k\mathbf{1}
$$
entrywise (the term $k=0$ is $\mathbf{1}$ in both sums). If $\beta\ge0$, it follows that
$$
a^\star(W') = \beta\sum_{k\ge 0}(\psi W')^k\mathbf{1} \;\ge\; 
\beta\sum_{k\ge 0}(\psi W)^k\mathbf{1} = a^\star(W).
$$
By linearity of the average, 
$$
\bar a^\star(W') = \frac1n\mathbf{1}^\top a^\star(W') \ge \frac1n\mathbf{1}^\top a^\star(W)
= \bar a^\star(W).
$$

\medskip
\noindent\emph{(ii) Monotonicity in $\mu$ and $q$.}
The matrix series $\sum_{k\ge 0}(\psi W)^k$ does not depend on $\mu$ or $q$, so all dependence is 
through
$$
\beta = \frac{\theta-\lambda\mu q}{c}.
$$
We have 
$$
\frac{\partial \beta}{\partial \mu} = -\frac{\lambda q}{c}\le 0 \text{ and }
\frac{\partial \beta}{\partial q} = -\frac{\lambda \mu}{c}\le 0.
$$
From \eqref{eq:astar-Neumann}, $a^\star$ is linear in $\beta$ with nonnegative coefficients, so each component of $a^\star$ (and thus $\bar a^\star$) is (weakly) decreasing in $\mu$ and in $q$.

\medskip
\noindent\emph{(iii) Monotonicity in $\psi,\theta,c,\lambda$.}
We have
$$
\psi = \frac{\phi}{c} \text{ and } \beta = \frac{\theta-\lambda\mu q}{c}.
$$
First, an increase in $\psi$ (holding $\beta$ fixed) increases each factor $(\psi W)^k$, and thus 
increases the matrix series $\sum_{k\ge 0}(\psi W)^k$ entrywise. Thus, for $\beta\ge 0$, $a^\star$ and $\bar a^\star$ are increasing in $\psi$ and hence in $\phi$. Second, $\beta$ is increasing in $\theta$ and decreasing in $\lambda$. Because $a^\star$ is linear in $\beta$ with nonnegative coefficients, $a^\star$ and $\bar a^\star$ are increasing in $\theta$ and decreasing in $\lambda$. Finally, increasing $c$ reduces both $\beta$ and $\psi$ (since $c$ appears in the denominator of each), and thus reduces $a^\star$ and $\bar a^\star$. Formally, higher $c$ scales down the matrix series through $\psi$ and reduces the baseline term $\beta$. Combining these signs establishes the comparative statics in $\psi,\theta,c,\lambda$. This completes the proof.
\end{proof}

\begin{lemma}[Induced direction of change in $m$ and $b$]\label{lem:mb}
Let $\delta_{LQ} = \delta_0 + \eta \bar a^\star$ with $\eta>0$, and define
$$
m = (1-\delta_{LQ})(1-\mu)\alpha,
\qquad
b = (1-\delta_{LQ})\mu(1-\kappa).
$$
Then:
\begin{enumerate}
\item[(i)] If $W$ strengthens (entrywise), or $\psi$ rises, or $\theta$ rises, then $\bar a^\star$ increases, hence $\delta_{LQ}$ increases and both $m$ and $b$ decrease (holding $\mu,\kappa,\alpha$ fixed).

\item[(ii)] If $\mu$ or $q$ rises, then $\bar a^\star$ falls and hence $\delta_{LQ}$ falls. As a result, $b$ increases. Moreover, $m$ increases whenever the induced decrease in verification is sufficiently strong, i.e. when the increase in $(1-\delta_{LQ})$ dominates the direct attenuation effect from $(1-\mu)$.
\item[(iii)] If $\alpha$ rises (holding $\bar a^\star$ fixed), then $m$ increases proportionally and $b$ is unchanged; if $\kappa$ rises, then $b$ falls proportionally and $m$ is unchanged.
\end{enumerate}
\end{lemma}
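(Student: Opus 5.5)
The argument differentiates $m$ and $b$ with respect to $\delta_{LQ}$ and chains this with the sign of $\bar a^\star$ from Lemma~\ref{lem:MCS}. Since $\delta_{LQ}=\delta_0+\eta\bar a^\star$ with $\eta>0$, the sign of any change in $\delta_{LQ}$ equals the sign of the change in $\bar a^\star$. Moreover
$$
\frac{\partial m}{\partial \delta_{LQ}}=-(1-\mu)\alpha\le 0,\qquad \frac{\partial b}{\partial \delta_{LQ}}=-\mu(1-\kappa)\le 0 .
$$
So whenever $\mu,\kappa,\alpha$ are held fixed, $m$ and $b$ move opposite to $\bar a^\star$. These inequalities are strict in the interior $\mu\in(0,1)$, $\kappa<1$. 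Throughout I keep $\beta\ge 0$ and $\psi\alpha<1$, as in Lemma~\ref{lem:MCS}, and assume $\delta_{LQ}\in(0,1)$ so that $1-\delta_{LQ}>0$.

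For part (i), an entrywise increase of $W$, or an increase in $\psi$ or $\theta$, raises $\bar a^\star$ by Lemma~\ref{lem:MCS}(i),(iii). None of these changes enters $\mu$, $\kappa$ or $\alpha$ directly. One caveat applies: if ``$W$ strengthens'' changes $\alpha=\rho(W)$, I read the statement as holding the macro $\alpha$ fixed, in line with the parenthetical. Hence $\delta_{LQ}$ rises and both $m$ and $b$ fall.

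Part (iii) is immediate from the formulas. With $\bar a^\star$ fixed, $\delta_{LQ}$ is fixed, so $m$ is linear in $\alpha$ with coefficient $(1-\delta_{LQ})(1-\mu)\ge 0$ and $b$ does not depend on $\alpha$. Likewise $b$ is linear in $1-\kappa$, while $\kappa$ does not enter $m$. The micro equilibrium $a^\star$ does not depend on $\kappa$ at all, since $\beta$ and $\psi$ do not contain it.

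Part (ii) is the only step needing care, because $\mu$ enters both directly and through $\bar a^\star$. For $q$ the direct channel is absent: $\bar a^\star$ weakly falls by Lemma~\ref{lem:MCS}(ii), so $\delta_{LQ}$ falls and both $m$ and $b$ rise.

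For $\mu$, write $\delta_{LQ}(\mu)$. By \eqref{eq:astar-Neumann}, $\bar a^\star=\beta\,\kappa_W$ with $\kappa_W:=\tfrac1n\mathbf 1^\top(I-\psi W)^{-1}\mathbf 1>0$. Hence
$$
\frac{d\delta_{LQ}}{d\mu}=-\eta\,\frac{\lambda q}{c}\,\kappa_W\le 0 .
$$
Then
$$
\frac{db}{d\mu}=(1-\kappa)\Bigl[(1-\delta_{LQ})-\mu\,\frac{d\delta_{LQ}}{d\mu}\Bigr],
$$
and both terms in the bracket are nonnegative, so $b$ increases. For $m$,
$$
\frac{dm}{d\mu}=\alpha\Bigl[-(1-\delta_{LQ})-(1-\mu)\frac{d\delta_{LQ}}{d\mu}\Bigr],
$$
which is positive if and only if
$$
(1-\mu)\,\eta\,\frac{\lambda q}{c}\,\kappa_W>1-\delta_{LQ}.
$$
This is exactly the stated requirement that the verification response be sufficiently elastic, so that the rise in $(1-\delta_{LQ})$ dominates the mechanical decline in $(1-\mu)$. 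The main care point is keeping the sign bookkeeping right here. I would state the $m$-condition as this explicit inequality rather than merely asserting ambiguity.
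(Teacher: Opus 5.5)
Your proposal is correct and follows essentially the paper's route: the signs of changes in $\bar a^\star$ from Lemma~\ref{lem:MCS} are passed through $\delta_{LQ}=\delta_0+\eta\bar a^\star$ to the factor $(1-\delta_{LQ})$ common to $m$ and $b$. Your inequality $(1-\mu)\eta\frac{\lambda q}{c}\kappa_W>1-\delta_{LQ}$ is just the paper's condition $\frac{d\log(1-\delta_{LQ})}{d\mu}>\frac{1}{1-\mu}$ written out, and your separate treatment of $q$ (no direct channel, so $m$ rises weakly without any elasticity condition) and your caveat on part (i) are genuine sharpenings — the latter substantive, since under $W=\alpha S$ with $S$ row-stochastic any nontrivial entrywise increase of $W$ forces $\alpha$ up.
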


\begin{proof}
From Lemma~\ref{lem:MCS}, if $W$ strengthens, or $\psi$ or $\theta$ rises, then $\bar a^\star$ increases. Since $\delta_{LQ} = \delta_0 + \eta\bar a^\star$ with $\eta>0$, this implies $\delta_{LQ}$ increases and $1-\delta_{LQ}$ decreases. Both $m$ and $b$ are proportional to $(1-\delta_{LQ})$, so they decrease.

If $\mu$ or $q$ rises, Lemma~\ref{lem:MCS} implies $\bar a^\star$ decreases, so $\delta_{LQ}$ falls and $1-\delta_{LQ}$ rises. At the same time, $b$ is also proportional to $\mu$. Therefore:
$$
m = (1-\delta_{LQ})(1-\mu)\alpha
\quad\text{and}\quad
b = (1-\delta_{LQ})\mu(1-\kappa)
$$
While $b$ increases unambiguously, the effect on $m$ is in general ambiguous because $\mu$ enters $m=(1-\delta_{LQ})(1-\mu)\alpha$ through two channels: (i) the direct attenuation factor $(1-\mu)$ decreases in $\mu$, while (ii) the verification channel increases $(1-\delta_{LQ})$ because $\bar a^\star$ falls when $\mu$ or $q$ rises. Formally,
$$
\frac{d\log m}{d\mu}=\frac{d\log(1-\delta_{LQ})}{d\mu} -\frac{1}{1-\mu}.
$$
Thus $m$ increases whenever the induced fall in verification is sufficiently strong, i.e. when
$$
\frac{d\log(1-\delta_{LQ})}{d\mu}>\frac{1}{1-\mu}.
$$
This condition is naturally satisfied in regions where verification incentives are highly sensitive to AI exposure (large $\eta$ or steep best-response), including near the stability boundary where small changes in the correction rate materially affect amplification. In contrast, $b$ increases unambiguously because it is proportional to $\mu$. Finally, if $\alpha$ rises holding $\bar a^\star$ fixed, then $\delta_{LQ}$, $\mu$, and $\kappa$ are unchanged, so $m$ increases linearly in $\alpha$ and $b$ is unchanged. If $\kappa$ rises, only the factor $(1-\kappa)$ changes, so $b$ decreases linearly in $(1-\kappa)$ (equivalently, $b$ is affine decreasing in $\kappa$) while $m$ remains unchanged. This proves the lemma.
\end{proof}

\section{Appendix: Derivation Sketch for the Linear Distortion Recursion}\label{app:derivation_contagion}

This appendix provides a derivation sketch motivating the linear recursion \eqref{eq_contagion} as a local (near truth) approximation to the evolution of informational distortion magnitudes. The goal is not to derive an exact identity for KL divergence, but rather to show how a linear law of motion for \emph{distortion levels} can arise from linear belief exchange combined with a local quadratic expansion of KL divergence.

\paragraph{Linear mixing of belief distributions.}
Let $\mathcal{X}$ be a finite space of states and $P_{i,t}\in\mathcal{P}=\Delta(\mathcal{X})$ be agent $i$'s belief
distribution at date $t$. Let $\hat P$ denote the truth distribution. Consider a DeGroot linear mixing of belief distributions with AI input and verification:
\begin{equation}\label{eq:belief_mixing}
P_{i,t+1}=\delta\,\hat P+(1-\delta)\Big[(1-\mu)\sum_{j=1}^n S_{ij} P_{j,t} + \mu\, Q_t\Big],
\end{equation}
where $S$ is a primitive row-stochastic attention matrix ($\sum_j S_{ij}=1$), $Q_t\in\mathcal{P}$ is the AI output distribution at date $t$, $\mu\in[0,1]$ is the AI penetration rate, and $\delta\in(0,1)$ represents verification/correction (a partial reset toward truth).

\paragraph{Local error representation.}
Define belief errors relative to truth by
$$
\varepsilon_{i,t}:=P_{i,t}-\hat P \text{ and } \eta_t:=Q_t-\hat P.
$$
Subtracting $\hat P$ from both sides of \eqref{eq:belief_mixing} yields
\begin{equation}\label{eq:error_mixing}
\varepsilon_{i,t+1}=(1-\delta)\Big[(1-\mu)\sum_{j=1}^n S_{ij}\varepsilon_{j,t} + \mu\, \eta_t\Big].
\end{equation}

\paragraph{Local quadratic expansion of KL divergence.}
Define informational distortions by KL divergence:
$$
e_{i,t}:=D_{\mathrm{KL}}(P_{i,t}\,\|\,\hat P) \text{ and } q_t:=D_{\mathrm{KL}}(Q_t\,\|\,\hat P).
$$
Assume a near-truth regime in which $\|\varepsilon_{i,t}\|$ and $\|\eta_t\|$ are small (in any norm on
$\mathbb{R}^{|\mathcal{X}|}$). A second-order Taylor expansion of KL divergence around $\hat P$ implies
\begin{equation}\label{eq:kl_quadratic}
D_{\mathrm{KL}}(\hat P+\varepsilon\,\|\,\hat P) = \frac{1}{2}\,\varepsilon^\top H(\hat P)\,\varepsilon +o(\|\varepsilon\|^2),
\end{equation}
where $H(\hat P)$ is the Fisher information matrix (the Hessian of KL at $\hat P$). Thus, to second order, KL distortion is approximately a quadratic form in local belief errors.

\paragraph{From linear error mixing to a reduced-form distortion recursion.}

Combining \eqref{eq:error_mixing} with the quadratic approximation \eqref{eq:kl_quadratic} implies that, to second order, distortions take the form of quadratic expressions in the belief errors:
$$
e_{i,t}\approx\frac{1}{2}\,\varepsilon_{i,t}^\top H(\hat P)\,\varepsilon_{i,t}.
$$
Substituting the linear mixing equation \eqref{eq:error_mixing} into this quadratic expression generates terms involving quadratic combinations of neighbors’ errors, including cross-products of the form $\varepsilon_{j,t}^\top H(\hat P)\,\varepsilon_{k,t}$. Passing from this quadratic representation in $\varepsilon_{i,t}$ to a linear recursion in the scalar distortion magnitudes $e_{i,t}$ therefore requires an additional closure assumption. In particular, we adopt a reduced-form first-order approximation under which:
\begin{itemize}
    \item[(i)] cross-agent covariance terms are either negligible or proportional to marginal variances (e.g.\ under a mean-field or weak-correlation regime), and
    \item[(ii)] the scalar distortion index $e_{i,t}$ is proportional to the quadratic form in \eqref{eq:kl_quadratic}.
\end{itemize}

Under such a closure, we adopt a first-order reduced-form approximation in which the scalar distortion index $e_{i,t}$ evolves proportionally to the attention-weighted average of neighbors’ distortion magnitudes and to AI-originating distortion. This amounts to normalizing the quadratic representation in \eqref{eq:kl_quadratic} and absorbing higher-order and covariance terms into proportionality constants. Under these closure assumptions, the resulting dynamics can be expressed as a first-order linear recursion in the distortion magnitudes:
\begin{equation}\label{eq:distortion_recursion_sketch}
e_{t+1}\approx (1-\delta)(1-\mu)\,W\,e_t + (1-\delta)\mu\,\mathbf{1}\,q_t,
\end{equation}
where $e_t=(e_{1,t},\dots,e_{n,t})^\top$ and $W$ is the effective social propagation operator. Equation \eqref{eq:distortion_recursion_sketch} should therefore be interpreted as a first-order reduced-form dynamic for distortion magnitudes induced by linear belief exchange near the truth, rather than as an exact identity for KL divergence. The coefficients $(1-\delta)(1-\mu)$ and $(1-\delta)\mu$ should therefore be interpreted as effective first-order propagation weights that summarize the impact of social transmission and AI injection after the closure approximation. They are reduced-form parameters rather than exact coefficients obtained directly from the quadratic expansion of KL divergence.

\section{Appendix: Network Topology with Finite Horizon Regulation}\label{finitehorizon}

This appendix shows that the homophily and core-periphery results also hold and extend for finite-horizon safety constraints.

\begin{prop}[Homophily increases the stabilizing filter over finite horizons]\label{prop3b}
Fix $(\mu,\delta,\gamma,\alpha)$ and a horizon $T\in\mathbb{N}$, and let $x_t=(\bar e_t,q_t)^\top$ denote the aggregate state. Define $\kappa_T^\star(\mu;S)$ as the minimal $\kappa\in[0,1]$ such that the envelope bound implied by Lemma~\ref{lemma1} ensures
$$
\max_{0\le t\le T}\|x_t\|\;\le\;\bar X,
$$
for a fixed tolerance $\bar X>0$ and initial condition $x_0$. If $S$ and $S'$ are primitive row-stochastic and satisfy $|\lambda_2(S')|>|\lambda_2(S)|$, then
$$
\kappa_T^\star(\mu;S') \;\ge\; \kappa_T^\star(\mu;S).
$$
\end{prop}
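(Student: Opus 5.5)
The plan is to make the ``envelope bound implied by Lemma~\ref{lemma1}'' explicit as a function $B_T(\kappa;S)$, and to show two monotonicity properties. First, $B_T$ is nonincreasing in $\kappa$. Second, $B_T$ is nondecreasing in the homophily index $|\lambda_2(S)|$. Then $\kappa_T^\star(\mu;S)=\min\{\kappa\in[0,1]:B_T(\kappa;S)\le\bar X\}$, and the ordering of thresholds follows from inclusion of feasible sets. The feasible set for $S'$ is contained in that for $S$, so its minimum is weakly larger. If a feasible set is empty, I set $\kappa_T^\star=1$ by convention, which preserves the ordering.

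Step 1 is to build the envelope. I iterate $x_{t+1}=A(\mu,\kappa)x_t+\eta_{t+1}$ from Lemma~\ref{lemma1}(ii), as in the proof of Claim~\ref{claim5}:
$$x_t=A^tx_0+\sum_{s=0}^{t-1}A^{t-1-s}\eta_{s+1}.$$
I then bound $\|\eta_{s+1}\|\le C\,(m\rho_S)^{s+1}$, using a canonical choice $\rho_S:=\tfrac12\bigl(1+|\lambda_2(S)|\bigr)\in(|\lambda_2(S)|,1)$. This gives the envelope
$$B_T(\kappa;S):=\max_{0\le t\le T}\Big[\|A^tx_0\|+C\sum_{s=0}^{t-1}\|A^{t-1-s}\|\,(m\rho_S)^{s+1}\Big].$$
By construction, $B_T(\kappa;S)\ge\max_{t\le T}\|x_t\|$.

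Step 2 is monotonicity in $S$. Only $\rho_S$ depends on $S$, apart from the constant $C$. The map $\rho_S$ is increasing in $|\lambda_2(S)|$, and every summand is nondecreasing in $\rho_S$ because $m>0$. Hence $B_T(\kappa;S')\ge B_T(\kappa;S)$ for every $\kappa$.

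Step 3 is monotonicity in $\kappa$. The matrix $A(\mu,\kappa)\ge0$, and it is entrywise nonincreasing in $\kappa$ through $b=(1-\delta)\mu(1-\kappa)$. By the induction used in Lemma~\ref{lem:MCS}(i), every power $A^k$ is entrywise nonincreasing in $\kappa$. Distortions are KL divergences, so $x_0\ge0$. Taking a monotone norm (for instance $\ell_1$ for vectors and the induced norm for matrices), each term $\|A^tx_0\|$ and $\|A^{k}\|$ is nonincreasing in $\kappa$, and $m$ does not depend on $\kappa$. So $B_T(\cdot;S)$ is nonincreasing and continuous. The feasible sets are therefore intervals $[\kappa_T^\star,1]$, and the minimum is attained.

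The main obstacle is the constant $C$. In Claim~\ref{claim2} it depends on $S$ through the Jordan data $\|V\|\,\|V^{-1}\|$ and the block sizes, not only through $|\lambda_2(S)|$. For arbitrary $S$ and $S'$, $C(S')$ could therefore be smaller than $C(S)$ and reverse the comparison. My first attempt would be to read ``the envelope bound'' as using a common constant $\bar C$ valid over the comparison class. For example, I would take $\bar C=\max\{C(S),C(S')\}$, or a uniform bound over diagonalizable $S$ with bounded eigenvector condition number. With that constant, dependence on $S$ runs only through $\rho_S$ and Step 2 applies verbatim. I would state this normalization explicitly, since it is what gives Proposition~\ref{prop3b} content beyond the asymptotic frontier.
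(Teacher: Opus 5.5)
Your proposal is correct and follows essentially the paper's own route: the same canonical choice $\rho_S=\tfrac12(1+|\lambda_2(S)|)$, the same common constant $\bar C=\max\{C(S),C(S')\}$ to neutralize the $S$-dependence of the constant in Lemma~\ref{lemma1}, and the same observation that the envelope is increasing in $\rho_S$, so the feasible set under $S'$ sits inside the feasible set under $S$. Your vector-level envelope, built from $x_t=A^tx_0+\sum_{s}A^{t-1-s}\eta_{s+1}$, is cleaner than the paper's scalar bound on $|\bar e_t|$, which still carries the trajectory-dependent terms $|q_s|$; your monotonicity in $\kappa$ (Step 3) is a harmless extra that the ordering itself does not need.
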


The asymptotic stability frontier $\kappa^\star(\mu)$ derived in Section~\ref{2dim} characterizes long-run stability of the reduced two-dimensional system and depends only on the dominant amplification parameter $\alpha$. By contrast, Proposition~\ref{prop3b} captures finite-horizon stabilization requirements driven by the persistence of subdominant network modes. Homophily, as measured by $|\lambda_2(S)|$, slows the decay of local distortions and therefore raises the minimal filtering intensity required to keep the AI-society system within safe bounds over economically relevant horizons, even when asymptotic stability is unchanged.

\begin{proof}
Fix $(\mu,\delta,\gamma,\alpha)$, a horizon $T\in\mathbb{N}$, and $\kappa\in[0,1]$.
Now fix two primitive row-stochastic networks $S$ and $S'$ such that $|\lambda_2(S')|>|\lambda_2(S)|$, and define
$$\rho_S:=\frac{1+|\lambda_2(S)|}{2} \text{ and }
\rho_{S'}:=\frac{1+|\lambda_2(S')|}{2}.$$
Then $\rho_S\in(|\lambda_2(S)|,1)$, $\rho_{S'}\in(|\lambda_2(S')|,1)$, and $\rho_{S'}>\rho_S$. By Lemma~\ref{lemma1}, the aggregate distortion satisfies
$$\bar e_{t+1}=m\bar e_t+bq_t+\varepsilon_t(S), m=(1-\delta)(1-\mu)\alpha \text{ and } b=(1-\delta)\mu(1-\kappa),$$
with $|\varepsilon_t(S)|\le C(S)(m\rho_S)^t$ for all $t\ge 0$ and similarly $|\varepsilon_t(S')|\le C(S')(m\rho_{S'})^t$ for all $t\ge 0$. Define a \emph{common} constant
$$\bar C:=\max\{C(S),C(S')\}<\infty.$$
Hence, for all $t\ge0$, $|\varepsilon_t(S)|\le \bar C(m\rho_S)^t$ and $|\varepsilon_t(S')|\le \bar C(m\rho_{S'})^t$. We use this shared constant $\bar C$ in the envelope comparison below. Iterating forward gives, for each $t\le T$,
$$\bar e_t=m^t\bar e_0 + \sum_{s=0}^{t-1}m^{t-1-s}bq_s + \sum_{s=0}^{t-1}m^{t-1-s}\varepsilon_s(S).$$

Taking absolute values and using the common bound on $\varepsilon_s(S)$ yields to
$$
|\bar e_t|\le m^t|\bar e_0| + \sum_{s=0}^{t-1}m^{t-1-s}b|q_s| + \bar C\,m^{t-1}\sum_{s=0}^{t-1}\rho_S^s.$$

The corresponding envelope under $S'$ is obtained by replacing $\rho_S$ with $\rho_{S'}$. Since $\rho_{S'}>\rho_S$, the last term under $S'$ is weakly larger than under $S$ for every $t\le T$.

Since $q_{t+1}=\gamma q_t+(1-\gamma)\bar e_t$, the common-envelope construction shows that the persistence term
$\bar C\,m^{t-1}\sum_{s=0}^{t-1}\rho_S^s$ appearing in the envelope bound is increasing in $\rho_S$. Since $\rho_{S'}>\rho_S$, the sufficient bound used to ensure $\max_{0\le t\le T}\|x_t\|\le\bar X$ is weakly harder to satisfy under $S'$ than under $S$. Therefore, the minimal filtering level defined by this envelope criterion satisfies $\kappa_T^\star(\mu;S')\ge\kappa_T^\star(\mu;S)$.
\end{proof}

\paragraph{Targeted regulation in centralized networks.}
The previous analysis shows that in core-periphery networks systemic informational risk concentrates disproportionately on a small set of high-centrality nodes. This raises a natural policy question: if the regulator can apply upstream filtering unevenly across agents or platforms, should filtering effort be concentrated on the most influential nodes? In practice, regulatory obligations are rarely uniform. Large platforms, recommender systems, and highly visible accounts are typically subject to stricter auditing, alignment requirements, and content controls than ordinary users. We formalize this asymmetry by allowing node-specific upstream filtering of AI-originating distortions.

\medskip

Rather than imposing a uniform filtering intensity, the regulator may choose node-specific filtering levels $\kappa_i \in [0,1]$, where $\kappa_i$ represents the fraction of AI-generated distortion removed before reaching agent $i$.

Let $\kappa = (\kappa_1,\dots,\kappa_n)$ denote the vector of filtering intensities. Regulatory capacity is limited by an average budget constraint. We represent this as a fixed filtering budget, so that the regulator chooses $\kappa \in [0,1]^n$ subject to
$$\mathcal K(\bar\kappa)
=
\left\{
\kappa\in[0,1]^n:
\frac1n\sum_{i=1}^n \kappa_i=\bar\kappa
\right\},
\qquad \bar\kappa \in [0,1].$$

Uniform filtering corresponds to the allocation $\kappa_i \equiv \bar\kappa$ for all $i$, which provides a natural reference point. To evaluate regulatory performance over an economically relevant horizon, we consider the discounted cumulative aggregate distortion

$$
\mathcal{L}_T(\kappa):=\sum_{t=0}^{T} \beta^t \,\bar e_t,\qquad \beta \in (0,1),
$$
where $\bar e_t$ denotes aggregate distortion. In this subsection, aggregate distortion is defined using the stationary distribution of the attention matrix,

$$
\bar e_t := \pi^\top e_t,
$$
where $\pi$ is the unique stationary distribution of the primitive row-stochastic matrix $S$. This choice captures long-run influence weights in the network and is particularly appropriate in core-periphery structures, where informational influence is concentrated on a subset of nodes. Importantly, under this $\pi$-weighted aggregation the aggregate system closes exactly under the full AI feedback loop. Since $W=\alpha S$ and $\pi^\top S=\pi^\top$, it follows that
$$
\pi^\top M = m\,\pi^\top, \;\; M := (1-\delta)(1-\mu)W, \text{ and } m := (1-\delta)(1-\mu)\alpha.
$$
Consequently, even with node-specific filtering, the joint dynamics of $(\bar e_t,q_t)$ form an exact two-dimensional linear system.

\medskip

We compare two classes of policies. Uniform filtering is defined by $\kappa^U_i \equiv \bar\kappa$. A core-targeted policy is any $\kappa^C \in \mathcal{K}(\bar\kappa)$ that assigns weakly higher filtering intensity to nodes with larger stationary influence weights $\pi_i$. For example, ordering nodes such that $\pi_{(1)} \ge \pi_{(2)} \ge \dots \ge \pi_{(n)}$, define
$$
\kappa^C_{(i)} =
\begin{cases}
1 & \text{for } i=1,\dots,K-1, \\
n\bar\kappa-(K-1) & \text{for } i=K, \\
0 & \text{for } i=K+1,\dots,n,
\end{cases}
$$
where $K$ is the unique integer satisfying $K-1 \le n\bar\kappa < K$. This policy exhausts the regulatory budget and allocates filtering on the most influential nodes.

\begin{prop}[Targeted high-centrality filtering under finite-horizon constraints]

\label{prop4b}
Fix $(\mu,\delta,\gamma,\alpha)$ and a finite horizon $T$.
Suppose the regulator chooses $\kappa \in \mathcal{K}(\bar\kappa)$ to minimize $\mathcal{L}_T(\kappa)$ under the full AI feedback loop. If the attention matrix exhibits a core-periphery structure such that the stationary distribution $\pi$ places strictly larger mass on nodes in the core than on peripheral nodes, then there exists an optimal policy $\kappa^\star$ satisfying
$$
\pi_i > \pi_j
\quad \Rightarrow \quad
\kappa_i^\star \ge \kappa_j^\star.
$$
In particular, when $\pi$ concentrates on the core, an optimal policy can be chosen to allocate weakly more filtering to nodes in the core than to peripheral nodes.

Moreover, whenever $\pi$ is non-uniform, a high-centrality targeted policy weakly dominates uniform filtering, with strict dominance whenever the stationary weights differ across nodes. 
\end{prop}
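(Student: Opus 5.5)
The plan is to use the $\pi$-weighted aggregation introduced just before the statement and show that, even with node-specific filtering, the whole policy vector $\kappa$ enters the aggregate dynamics only through one scalar, $\pi^\top\kappa$. Write the injection term as $c_t=(1-\delta)\mu\,q_t\,(\mathbf 1-\kappa)$, where $(\mathbf 1-\kappa)$ has entries $1-\kappa_i$. Premultiply the contagion equation by $\pi^\top$ and use $\pi^\top M=m\pi^\top$. This gives the exact recursion
$$\bar e_{t+1}=m\,\bar e_t+b(\kappa)\,q_t,\qquad b(\kappa):=(1-\delta)\mu\bigl(1-\pi^\top\kappa\bigr),$$
because $\pi^\top\mathbf 1=1$. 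Together with $q_{t+1}=\gamma q_t+(1-\gamma)\bar e_t$, the aggregate state $x_t=(\bar e_t,q_t)^\top$ therefore follows $x_{t+1}=A_\kappa x_t$ exactly. Here $A_\kappa$ is the matrix $A(\mu,\kappa)$ of \eqref{eq_2dim} with $b$ replaced by $b(\kappa)$, and there is no remainder term, unlike in Lemma~\ref{lemma1}. Consequently $\mathcal L_T(\kappa)=\sum_{t\le T}\beta^t\bar e_t$ is a function $F_T(b)$ of $b=b(\kappa)$ alone.

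Next I will show that $F_T$ is nondecreasing in $b$. The initial distortions $e_0\ge 0$ and $q_0\ge0$ are nonnegative, being KL divergences, and $A_\kappa\ge0$ entrywise. If $b\le b'$, then $A_\kappa\le A_{\kappa'}$ entrywise, so by induction $A_\kappa^t\le A_{\kappa'}^t$ entrywise, as in the footnote to Lemma~\ref{lem:MCS}. Hence each $\bar e_t$ is weakly larger under $b'$, and so is $F_T$. For the strict part I also need strict monotonicity. The derivative $\partial\bar e_t/\partial b$ contains the nonnegative term $q_{t-1}$ together with further nonnegative terms. So if $q_0>0$, or if $\bar e_0>0$ and $T\ge2$ so that $q$ becomes positive and can feed back, then $\bar e_t$ is strictly increasing in $b$ for some $t\le T$. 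I expect this nondegeneracy condition to be the main obstacle. The statement is silent on it, so I will make it explicit as a mild assumption: a nonzero initial condition and a horizon long enough for the injection channel to act.

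The regulator's problem then reduces to a linear program: maximize $\pi^\top\kappa$ over $\mathcal K(\bar\kappa)$. This is a fractional knapsack problem with unit weights. The greedy allocation $\kappa^C$ defined above, which fills nodes in decreasing order of $\pi_i$, attains the maximum. The standard exchange argument proves this: if $\pi_i>\pi_j$ while $\kappa_i<1$ and $\kappa_j>0$, moving mass from $j$ to $i$ keeps the budget and weakly raises $\pi^\top\kappa$. Hence some optimizer satisfies $\pi_i>\pi_j\Rightarrow\kappa_i^\star\ge\kappa_j^\star$, and by the monotonicity of $F_T$ it also minimizes $\mathcal L_T$. This gives the first claim, and the core-periphery version follows since $\pi$ is larger on the core.

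For the comparison with uniform filtering, note that uniform filtering gives $\pi^\top\kappa^U=\bar\kappa$. For the targeted policy, a Chebyshev/rearrangement inequality (comparing a nonincreasing $\kappa^C$ with the $\pi$-ordering) gives $\pi^\top\kappa^C\ge\frac1n\sum_i\kappa^C_i\cdot\sum_i\pi_i\cdot n/n=\bar\kappa$. Equivalently, $\kappa^C$ is optimal for the linear program and $\kappa^U$ is feasible, so $b(\kappa^C)\le b(\kappa^U)$ and hence $\mathcal L_T(\kappa^C)\le\mathcal L_T(\kappa^U)$. Strictness requires both $\pi$ non-uniform and $\bar\kappa\in(0,1)$, since at $\bar\kappa\in\{0,1\}$ the budget set is a singleton. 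In that case some node with $\pi_i>1/n$ receives extra filtering at the expense of one with $\pi_j<1/n$, so $\pi^\top\kappa^C>\bar\kappa$ and $b$ falls strictly. Combined with strict monotonicity of $F_T$ under the nondegeneracy condition, this yields strict dominance. I will state the boundary cases for $\bar\kappa$ as a caveat to the claim.
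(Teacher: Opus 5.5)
Your proposal is correct and follows essentially the same route as the paper: exact $\pi$-weighted closure so that $\kappa$ enters only through $\pi^\top\kappa$, monotonicity of the nonnegative linear recursion in the injection coefficient, and reduction to the linear program $\max_{\kappa\in\mathcal K(\bar\kappa)}\pi^\top\kappa$ settled by an exchange argument. Two of your choices are slightly more careful than the paper. You state the nondegeneracy conditions explicitly: some $q_s>0$ with $s<T$, and $\bar\kappa\in(0,1)$. You also argue via ``some optimizer'' rather than ``any optimizer'', whereas the paper treats minimizing $\mathcal L_T$ and maximizing $\pi^\top\kappa$ as equivalent although in general only weak monotonicity is established.
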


\begin{proof}
The proof proceeds in three steps. First, we show that under $\pi$-weighted aggregation the full AI-society feedback system closes exactly as a two-dimensional linear recursion, and that the dynamics depend on the filtering vector $\kappa$ only through the scalar $\theta(\kappa)=1-\pi^\top\kappa$. Second, we establish that aggregate distortion over any finite horizon is (weakly) increasing in $\theta(\kappa)$. Third, we show that minimizing the finite-horizon loss is therefore equivalent to maximizing the linear functional $\pi^\top\kappa$ over the feasible set, which implies that optimal regulation assigns weakly higher filtering intensity to nodes with larger stationary influence weights.

\medskip 

We incorporate the full feedback loop between society and AI. Under node-specific filtering $\kappa\in[0,1]^n$, the distortion dynamics are
\begin{align}
e_{t+1} &= (1-\delta)(1-\mu)W e_t + (1-\delta)\mu\, q_t\,(\mathbf{1}-\kappa), \label{eq:e_dynamics}\\
q_{t+1} &= \gamma q_t + (1-\gamma)\bar e_t \text{ and }\bar e_t := \pi^\top e_t,
\label{eq:q_dynamics}
\end{align}
where $\pi$ is its unique stationary distribution satisfying $\pi^\top S=\pi^\top$ and $\pi^\top \mathbf{1}=1$.

\medskip

\paragraph{Step (i): Exact aggregate closure under feedback.}

Define
$$
M := (1-\delta)(1-\mu)W = (1-\delta)(1-\mu)\alpha S =: m S \text{ and } m := (1-\delta)(1-\mu)\alpha.
$$

Because $\pi^\top S=\pi^\top$, we obtain $\pi^\top M = m\,\pi^\top$. Premultiplying \eqref{eq:e_dynamics} by $\pi^\top$ yields
$$
\bar e_{t+1}= m\,\bar e_t + (1-\delta)\mu\,q_t\,\pi^\top(\mathbf{1}-\kappa).
$$

Define now $a := (1-\delta)\mu$ and $\theta(\kappa):= \pi^\top(\mathbf{1}-\kappa)=1-\pi^\top\kappa$. Then the aggregate dynamics reduce exactly to
\begin{equation}\label{eq:agg_e}\\
\bar e_{t+1} = m\,\bar e_t + a\,\theta(\kappa)\,q_t \text{ and } q_{t+1} = \gamma q_t + (1-\gamma)\bar e_t.
\end{equation}

Equations \eqref{eq:agg_e}  form a closed
two-dimensional linear system: 
$$x_{t+1}= A(\kappa)\,x_t, \text{ and } x_t := \begin{pmatrix}
\bar e_t \\
q_t
\end{pmatrix} \text{ with } A(\kappa) =
\begin{pmatrix}
m & a\,\theta(\kappa) \\
1-\gamma & \gamma
\end{pmatrix}.$$

Hence the entire trajectory $(\bar e_t,q_t)_{t=0}^T$ depends on $\kappa$ only through the scalar $\theta(\kappa)=1-\pi^\top\kappa$.

\paragraph{Step (ii): Monotonicity of distortions in $\theta$.}

Let initial conditions satisfy $\bar e_0 \ge 0$ and $q_0 \ge 0$. All entries of $A(\kappa)$ are nonnegative for $\kappa\in[0,1]^n$. Therefore, by induction, $\bar e_t \ge 0$ and $q_t \ge 0$ for all $t$. Moreover, if $\theta_1 > \theta_2$, then
$A(\kappa_1) \ge A(\kappa_2)$ componentwise, with strict inequality in the $(1,2)$ entry. By monotonicity of linear systems with nonnegative matrices,
$$
\theta_1 > \theta_2
\quad \Rightarrow \quad
\bar e_t(\theta_1) \ge \bar e_t(\theta_2)
\quad \text{for all } t.
$$

Hence the finite-horizon loss $\mathcal L_T(\kappa) = \sum_{t=0}^T \beta^t \bar e_t$ is weakly increasing in $\theta(\kappa)$, and strictly increasing whenever $q_t$ is strictly positive for some $t<T$. Because $\theta(\kappa)=1-\pi^\top\kappa$, minimizing $\mathcal L_T(\kappa)$ is equivalent to maximizing $\pi^\top\kappa$ over the feasible set $\mathcal K(\bar\kappa)=\left\{\kappa\in[0,1]^n:\frac1n\sum_{i=1}^n\kappa_i= \bar\kappa\right\}$.

\paragraph{Step (iii): Optimal allocation of filtering effort.}

The problem reduces to
$$
\max_{\kappa\in\mathcal K(\bar\kappa)} \ \pi^\top\kappa = \sum_{i=1}^n \pi_i \kappa_i.
$$

This is a linear program with linear objective and convex polyhedral feasible set. Let $\kappa^\star$ be an optimal solution. Suppose there exist indices $i,j$ such that $\pi_i > \pi_j$ but $\kappa_i^\star < \kappa_j^\star$. Define now, for sufficiently small $\Delta>0$,
$$
\tilde\kappa_i = \kappa_i^\star + \Delta, \;\; \tilde\kappa_j = \kappa_j^\star - \Delta \text{ and } \tilde\kappa_\ell = \kappa_\ell^\star\ \text{for } \ell\neq i,j,
$$
with $\Delta$ chosen so that $\tilde\kappa\in\mathcal K(\bar\kappa)$. Feasibility is preserved since the average constraint depends only on the sum of components. Then $\pi^\top \tilde\kappa - \pi^\top \kappa^\star = \Delta(\pi_i-\pi_j) > 0$, contradicting optimality of $\kappa^\star$. Therefore any optimizer must satisfy $$\pi_i > \pi_j \Rightarrow \kappa_i^\star \ge \kappa_j^\star.$$

Note that in a core-periphery network, the stationary distribution $\pi$ assigns strictly larger weights to core nodes than to peripheral nodes. Hence an optimal policy can be chosen so that nodes in the core receive weakly greater filtering intensity than peripheral nodes.

Uniform filtering $\kappa_i^U\equiv\bar\kappa$ is feasible. If $\pi$ is non-uniform, uniform filtering does not maximize $\pi^\top\kappa$ over $\mathcal K(\bar\kappa)$ whenever the constraint set admits a non-uniform feasible reallocation. Therefore, there exists a core-targeted policy $\kappa^C$ such that
$$\mathcal L_T(\kappa^C)\le \mathcal L_T(\kappa^U),$$
with strict inequality whenever the stationary weights differ and the constraint set admits a non-uniform feasible reallocation.
\end{proof}

\section{Appendix: Information Projection to Regulatory Filtering}\label{app:Iproj}

This appendix explains the information-theoretic projection (I-projection) used to model regulatory filtering of distortions originating from AI. We first recall the geometry of KL divergence and I-projection, following \cite{Csiszar}, and then show how this tool enters the AI-society dynamics through the injection parameter $b$ and the matrix $A(\mu,\kappa)$.

\paragraph{Why KL divergence and not another distance.}

Several distances or divergences could in principle be used to measure the discrepancy between belief distributions (e.g.\ total variation, $\ell_2$ distance, Wasserstein metrics). We adopt Kullback-Leibler divergence for three reasons.

First, KL divergence is the natural loss function underlying Bayesian updating and likelihood-based
learning, which closely mirrors how AI systems are trained and evaluated. Using KL therefore aligns the notion of informational distortion in the model with the statistical and algorithmic foundations of AI systems. Second, KL divergence admits a well-defined information projection (I-projection) onto convex constraint sets, with existence, uniqueness, and the Pythagorean inequality \citep{Csiszar}. These properties make KL uniquely suited for modeling regulatory filtering as a minimal, disciplined correction of AI outputs subject to truth constraints. Few alternative distances admit an analogous projection operator with comparable geometric and optimization properties. Third, while KL divergence provides the micro-level interpretation of distortion and correction, our aggregate results do not rely on the fine geometry of the divergence. Once linearized, regulation enters the dynamics only through the scalar parameter $\kappa$, which summarizes the amount of AI-origin distortion removed. In this sense, KL divergence is not restrictive: any divergence that supports a well-defined, monotone notion of informational correction would lead to the same reduced form dynamics. KL is chosen because it provides a canonical and information-theoretically grounded foundation for that correction.

\paragraph{KL divergence and I-projection.}

Let $\mathcal{X}$ be a finite set of states, and let $\mathcal{P}$ denote the simplex of probability distributions on $\mathcal{X}$. For $P,Q\in\mathcal{P}$ with $P$ absolutely continuous with respect to $Q$, the Kullback-Leibler (KL) divergence is
$$
D_{\mathrm{KL}}(P\|Q)
:= \sum_{x\in\mathcal{X}} P(x)\log\frac{P(x)}{Q(x)}.
$$
This is a nonnegative function, $D_{\mathrm{KL}}(P\|Q)\ge 0$ (by Gibbs' inequality), and $D_{\mathrm{KL}}(P\|Q)=0$ if and only if $P=Q$. It is not a symmetric distance, but it is the fundamental divergence measure in information theory and Bayesian updating.

Let $\mathcal{C}\subseteq\mathcal{P}$ be a nonempty, convex set of distributions, representing a set of constraints (in our application, a set of truth-consistent or admissible distributions). Given an ``unconstrained'' distribution $Q\in\mathcal{P}$ (the AI's current output), the I-projection of $Q$ onto $\mathcal{C}$ is defined as
\begin{equation}\label{eq:Iproj-def}
P^\star := \arg\min_{P\in\mathcal{C}} D_{\mathrm{KL}}(P\|Q).
\end{equation}
Intuitively, $P^\star$ is the closest distribution to $Q$ that satisfies the constraints defining $\mathcal{C}$, where closeness is measured in KL divergence. \cite{Csiszar} shows that under mild conditions (e.g. $\mathcal{C}$ is convex and closed in the KL topology), the I-projection exists, is unique, and admits a useful geometric interpretation.

A fundamental property of the I-projection onto a closed convex set is the Pythagorean inequality \citep[see, e.g.,][]{Csiszar}. Let $\mathcal{C}$ be a nonempty convex set and let $P^{\star}$ be the I-projection of $Q$ onto $\mathcal{C}$. Then for any $P \in \mathcal{C}$,
\begin{equation}\label{eq:pythagoras}
D_{\mathrm{KL}}(P \| Q) \ge D_{\mathrm{KL}}\left(P \| P^{\star}\right)
+ D_{\mathrm{KL}}\left(P^{\star} \| Q\right).
\end{equation}

Equality holds for all $P\in\mathcal{C}$ when $\mathcal{C}$ is an affine family. In general only the inequality is guaranteed. In particular,
$$
D_{\mathrm{KL}}(P\|Q) \ge D_{\mathrm{KL}}(P^\star\|Q),
$$
with equality if and only if $P=P^\star$. Equation \eqref{eq:pythagoras} states that the I-projection $P^\star$ is the unique point in $\mathcal{C}$ that minimizes KL divergence to $Q$, and that any other $P\in\mathcal{C}$ is further from $Q$ in KL both directly and via $P^\star$.

In our context, $\mathcal{C}$ will play the role of a truth-consistent set of distributions, and the I-projection selects the minimal KL correction required to make $Q$ compatible with these constraints.

\paragraph{Interpretation in the AI-society model.}

In the model, the AI at date $t$ produces a distribution $Q_t$ over states of knowledge (reflecting its internal representation of the world or the distribution of its answers), which may be biased or distorted relative to the truth distribution $\hat P$. Let $\mathcal{\hat P}$ denote a set of truth-consistent or admissible distributions (for instance, distributions satisfying known constraints, calibration properties, or externally verified facts).

The regulator applies the I-projection to $Q_t$:
$$
P^\star_t := \arg\min_{P\in\mathcal{\hat P}} D_{\mathrm{KL}}(P\|Q_t).
$$
This operator has three crucial features:

\begin{enumerate}
\item It produces a \emph{truth-consistent} distribution $P^\star_t\in\mathcal{\hat P}$.
\item It is \emph{KL-optimal}: among all such truth-consistent distributions, $P^\star_t$ is the closest to $Q_t$ in KL divergence. It therefore preserves as much of the AI's informational content as possible, subject to truth constraints.
\item The minimality property $D_{\mathrm{KL}}(P\|Q) \ge D_{\mathrm{KL}}(P^\star\|Q)$ follows directly from the definition of $P^\star$ as the argmin of $D_{\mathrm{KL}}(P\|Q)$ over $\mathcal C$. The Pythagorean inequality \eqref{eq:pythagoras} provides the geometric decomposition
$$ D_{\mathrm{KL}}(P \| Q) \ge D_{\mathrm{KL}}(P \| P^\star) + D_{\mathrm{KL}}(P^\star \| Q),$$

which ensures uniqueness under convexity of $\mathcal C$ and strict convexity of the KL divergence in its first argument, and interprets $P^\star$ as the orthogonal projection of $Q$ onto $\mathcal C$ in KL geometry.
\end{enumerate}

Economically, $Q_t$ is the AI's unfiltered output, and $P^\star_t$ is the filtered version that the regulator permits to enter the social information environment. The difference between $Q_t$ and $P^\star_t$ measures how much distortion is removed by regulation. We summarize the \emph{overall intensity} of this filtering with a scalar parameter $\kappa\in[0,1]$: $\kappa$ measures the fraction of informational distortion originating from AI removed by I-projection, while $(1-\kappa)$ measures the fraction that survives. Formally, we interpret the scalar $q_t$ as a reduced-form measure of AI distortion defined in KL units, for example $q_t := D_{\mathrm{KL}}(Q_t \,\|\, \hat P)$, where $\hat P$ denotes the ground-truth distribution. The precise normalization of $q_t$ is immaterial for the linear dynamics, which depend only on its magnitude up to scale.

\paragraph{How I-projection enters the dynamics.}
It is important to note that, at the distributional level, the I-projection defined in \eqref{eq:Iproj-def} generally produces a nonlinear and state-dependent correction: the ratio between pre and post-projection KL divergences does not need to be constant across $Q_t$. However, the aggregate system studied in the main text is derived as a local, near-truth linear approximation. In this reduced-form representation, the effect of upstream filtering on the scalar distortion index is summarized by a constant parameter $\kappa \in [0,1]$, which captures the average or local proportional removal rate of AI-origin distortion. Thus, while the underlying I-projection operates at the distribution level, its impact on the linearized macro dynamics is represented by the multiplicative factor $(1-\kappa)$. In the local linear approximation considered throughout the paper, we treat this proportional removal rate as constant and summarize it by a parameter $\kappa\in[0,1]$. After filtering, the social contagion equation \eqref{eq_contagion} takes the form

$$
e_{t+1} = (1-\delta)(1-\mu) W e_t + (1-\delta)\mu(1-\kappa)q_t\mathbf{1}.
$$
Relative to an unregulated case, in which the term multiplying $q_t$ would be $(1-\delta)\mu$, the I-projection reduces distortion originating from AI by the factor $(1-\kappa)$. In the notation of the main text, this determines the injection parameter 
\begin{equation*}\label{eq:b-Iproj}
b = (1-\delta)\mu(1-\kappa).
\end{equation*}

The reduced two-dimensional dynamics, derived in Section~\ref{2dim}, are
$$
x_{t+1}=A(\mu,\kappa)x_t,\qquad 
A(\mu,\kappa)=\begin{bmatrix} m & b\\ 1-\gamma & \gamma\end{bmatrix},
\quad
x_t=\begin{bmatrix}\bar e_t\\ q_t\end{bmatrix},
$$
where $m=(1-\delta)(1-\mu)\alpha$ represents internal amplification through social interaction intensity, and $b$ measures the effective injection originating from AI after regulatory filtering.

\paragraph{Role of I-projection in the main results.}
Because the eigenvalues of $A(\mu,\kappa)$ determine whether distortions converge or explode, the regulator's ability to influence $b$ via $\kappa$ is what makes regulation effective. In particular, the systemic-contagion boundary is characterized by the condition $\rho(A(\mu,\kappa))=1$, which is equivalent in our $2\times 2$ setting to
$$
b = 1 - m.
$$
Substituting $m=(1-\delta)(1-\mu)\alpha$ and $b=(1-\delta)\mu(1-\kappa)$ from \eqref{eq:b-Iproj} yields the policy frontier
$$
\kappa^\star(\mu) = 1 - \frac{1-\alpha(1-\delta)(1-\mu)}{(1-\delta)\mu}.
$$
Thus, the I-projection underlies the regulator's control of the injection parameter $b$ and, through it, of the entire stability region. Without the I-projection, KL divergence would still provide a way to measure informational distortion, but there would be no disciplined operator to reduce it. The I-projection fills precisely this role by reducing a high-dimensional distributional correction problem to a scalar policy instrument $\kappa$ that enters linearly in the amplification matrix $A(\mu,\kappa)$ and therefore in all stability and policy results.

\bibliography{biblio}
\end{document}